
\documentclass[10pt,twocolumn,twoside]{IEEEtran}

\IEEEoverridecommandlockouts                              

\overrideIEEEmargins                                      

\pdfoutput=1 
\usepackage{times} 
\usepackage[latin1]{inputenc}
\usepackage[mathcal]{euscript}
\usepackage{amssymb,amsmath,color,amsthm}
\usepackage{graphicx}
\usepackage[english]{babel}
\usepackage[noadjust]{cite}
\usepackage{subcaption}

\captionsetup[sub]{labelformat=mysublabelfmt}
\makeatletter
\renewcommand\p@subfigure{\thefigure\,}

\DeclareCaptionLabelFormat{mysublabelfmt}{(\alph{sub\@captype})}
\makeatother



\newtheorem{prop}{Proposition}

\newtheorem{theo}{Theorem}
\newtheorem{lemma}{Lemma}

\newtheorem{coro}{Corollary}
\newtheorem{assum}{Assumption}

\newcounter{hypA}

\newcounter{hyp}



\newcommand{\un}{{\bs 1}}
\newcommand{\indic}{\mathbb{I}}

\newcommand{\eqdef}{:=}



\newcommand{\EE}[1]{{\mathbb{E}}\left[ #1 \right]}
\newcommand{\PP}[1]{{\mathbb{P}}\left\{ #1 \right\}}

\newcommand{\bs}{\boldsymbol}


\newcommand{\cT}{{\mathcal T}}

\newcommand{\cA}{{\mathcal A}}
\newcommand{\cK}{{\mathcal K}}
\newcommand{\cL}{{\mathcal L}}

\newcommand{\cF}{{\mathcal F}}
\newcommand{\cN}{{\mathcal N}}

\newcommand{\cW}{{\cal W}}

\newcommand{\cJ}{{\mathcal J}}
\newcommand{\cS}{{\mathcal S}}

\newcommand{\bP}{{\mathbb P}}
\newcommand{\bR}{{\mathbb R}}

\newcommand{\bE}{{\mathbb E}}
\newcommand{\E}{{\mathbb E}}

\newcommand{\bN}{{\mathbb N}}

%






\newcommand{\T}{^\text{\scshape{t}}}




\newcommand{\la}{\langle}
\newcommand{\ra}{\rangle}

\newcommand{\thetash}{\vartheta}

\newcommand{\bth}{{ \theta}}

\newcommand{\thn}{{\theta}_{n}}

\newcommand{\thnmu}{{ \theta}_{n-1}}

\newcommand{\athn}{\langle\thn\rangle}

\newcommand{\athnmu}{\langle\thnmu\rangle}

\newcommand{\Jo}{{J_\bot}}
\newcommand{\cJo}{{{\mathcal J}_\bot}}

\newcommand{\vect}{\text{vec}\,}


\newcommand{\F}{\mathcal{F}}
\newcommand{\Rset}{\mathbb{R}}

\newcommand{\cR}{{\mathcal R}}
\newcommand{\A}{{\mathsf A}}
\newcommand{\B}{{\mathsf B}}
\newcommand{\C}{{\mathsf C}}
\newcommand{\D}{{\mathsf D}}
\usepackage[textwidth=2.3cm, textsize=footnotesize]{todonotes}    
\setlength{\marginparwidth}{-3cm}                                

\title{\LARGE \bf
Success and Failure of Adaptation-Diffusion Algorithms \\ for Consensus in Multi-Agent Networks}

\author{Gemma Morral$^{*}$, Pascal Bianchi and Gersende Fort
\thanks{*This work is supported by DGA (French Armement Procurement Agency), the Institut Mines-Télécom and by the ANR grant ODISSEE of program ASTRID (ANR-13-ASTR-0030).}
\thanks{G. Morral, P. Bianchi and G. Fort are with LTCI, Télécom ParisTech \&
  CNRS, 46 rue Barrault, 75634 Paris Cedex 13, France {\tt\small
    [firstname].[lastname]@telecom-paristech.fr}}}

\begin{document}

\maketitle
\thispagestyle{empty}
\pagestyle{empty}

\begin{abstract}
  This paper investigates the problem of distributed stochastic approximation
  in multi-agent systems. The algorithm under study consists of two steps: a
  local stochastic approximation step and a diffusion step which drives the
  network to a consensus. The diffusion step uses row-stochastic matrices to
  weight the network exchanges.  
 As opposed to previous works, exchange matrices are not supposed to be doubly stochastic,
  and may also depend on the past estimate.

We prove that non-doubly stochastic matrices generally influence the limit points of the algorithm.
Nevertheless, the limit points are not affected by the choice of the matrices provided that the latter are doubly-stochastic 
in expectation. This conclusion  legitimates the use of broadcast-like diffusion protocols, which are 
easier to implement. Next, by means of a central limit theorem, we prove that doubly stochastic protocols perform asymptotically as well
as centralized algorithms and we quantify the degradation caused by the use of non doubly stochastic matrices.
Throughout the paper, a special emphasis is put on the special case of distributed non-convex optimization
as an illustration of our results.

\end{abstract}

\section{INTRODUCTION}

Distributed stochastic approximation has been recently proposed using different
cooperative approaches. In the so-called \emph{incremental} approach (see for
instance \cite{rabbat:nowak:sac-2005,lop-sayed:2007,johansson:2008,ram:nedic:veeravalli:jopt-2009}) a message
containing an estimate of the quantity of interest iteratively travels all over
the network. This paper focuses on another cooperative approach based on
\emph{average consensus} techniques where the estimates computed locally by
each agent are combined through the network.

Consider a network composed by $N$ agents, or nodes. Agents seek to
find a consensus on some global parameter by means of local
observations and peer-to-peer communications. The aim of this paper is
to analyze the behavior of the following distributed algorithm. Node $i$ ($i=1,\dots,N$) generates a $\bR^d$-valued stochastic process
$(\theta_{n,i})_{n\geq 0}$. At time $n$, the update is in two steps:\\
\noindent {\tt [Local step]} Node $i$ generates  a temporary iterate
$\tilde \theta_{n,i}$ given by 
\begin{equation}
  \label{eq:tempupdate}
\tilde \theta_{n,i}= \theta_{n-1,i} + \gamma_{n}\, Y_{n,i}\ ,
\end{equation}
where $\gamma_n$ is a deterministic positive step size and where the 
$\bR^d$-valued random process $(Y_{n,i})_{n \geq 1}$ represents the 
observations made by agent $i$. \\
\noindent {\tt [Gossip step]} Node $i$ is able to observe the values 
$\tilde\theta_{n,j}$ of some other $j$'s and computes the weighted average:
\begin{equation}
  \label{eq:gossipIntro}
\theta_{n,i}=\sum_{j=1}^N w_n(i,j)\,\tilde \theta_{n,j} \ ,
\end{equation}
where the $w_n(i,j)$'s are scalar non-negative random coefficients such that
$\sum_{j=1}^N w_n(i,j)=1$ for any $i$. The sequence of random matrices $W_n
\eqdef [ w_n(i,j) ]_{i,j=1}^N$ represents the time-varying communication
network between the nodes. One simply set $w_n(i,j)=0$ whenever nodes $i$ and $j$ are unable to communicate at time $n$.
The aim of this paper is to investigate the almost sure (a.s.) convergence of this algorithm as $n$ tends to infinity
as well as the convergence rate. Our goal is in particular to quantify the effect of the sequence of matrices $(W_n)_{n\geq 1}$ on the convergence.
The algorithm is initialized at some arbitrary $\bR^{d}$-valued vectors $\theta_{0,1},\cdots,\theta_{0,N}$.

{\bf Application to distributed optimization}. The
algorithm~(\ref{eq:tempupdate})-(\ref{eq:gossipIntro}) under study is not new.
The idea beyond the algorithm traces back to
\cite{tsitsiklis:phd-1984,tsitsiklis:bertsekas:athans:tac-1986} where a network
of processors seeks to optimize some objective function \emph{known} by all
agents (possibly up to some additive noise). More recently, numerous works
extended this kind of algorithm to more involved multi-agent scenarios,
see~\cite{kushner-siam87,lop-sayed-asap06,nedic:ozdaglar:parrilo:tac-2010,kar:2010,johansson:2010,stankovic:stankovic:stipanovic:tac-2011,chen-sayed-sp12,bianchi:fort:hachem:2012,nous:eusipco,nousTAC,nedic2013distributed,bianchi:iutzeler:hachem:2013,bianchi:hachem:iutzeler:2014}
as a non-exhaustive list. In this context, one seeks to minimize a sum of local
private cost functions $f_i$ of the agents:
\begin{equation}
\label{eq:distOpt}
\min_{\theta\in {\mathbb R}^d} \  \sum_{i=1}^N f_i(\theta)\ ,
\end{equation}
where for all $i$, the function $f_i$ is supposed to be unknown by any other
agent $j, j\neq i$. To address this question, it is assumed that 
\begin{equation}
Y_{n,i} = -\nabla f_i(\theta_{n-1,i})+\xi_{n,i}
\label{eq:YnDistOpt}
\end{equation}  where $\nabla$ is the gradient
operator and $\xi_{n,i}$ represents some random perturbation which possibly
occurs when observing the gradient. 
In this paper, we handle the case where
functions $f_i$ are not necessarily convex. Of course, in that case, there is
generally no hope to ensure the convergence to a minimizer
to~(\ref{eq:distOpt}). Instead, a more realistic objective is to achieve
\emph{critical points} of the objective function \emph{i.e.}, points $\theta$
such that $\sum_i\nabla f_i(\theta)=0$.


Convergence to a global minimizer is shown in \cite{nedic:ozdaglar:tac-2009}
assuming \emph{convex} utility functions and bounded (sub)gradients. 
The results of \cite{nedic:ozdaglar:tac-2009} are extended in \cite{ram:nedic:veeravalli:jota-2010} to the stochastic descent case
\emph{i.e.}, when the observation of utility functions is perturbed by a random
noise. More recently,~\cite{bianchi:fort:hachem:2012} investigated distributed stochastic approximation at large,
providing stability conditions of the algorithm~(\ref{eq:tempupdate})-(\ref{eq:gossipIntro}) while
relaxing the bounded gradient assumption and including the case of random communication links.
In~\cite{bianchi:fort:hachem:2012}, it is also proved  under some hypotheses that the estimation error is asymptotically normal: the convergence rate and the asymptotic covariance matrix are characterized. An enhanced averaging algorithm \`a la Polyak is also proposed to
recover the optimal convergence rate.

{\bf Doubly and non-doubly stochastic matrices}. In most works (see for instance
\cite{nedic:ozdaglar:tac-2009,ram:nedic:veeravalli:jota-2010,rabbat:tsianos:2013}), the matrices
$(W_n)_{n \geq 1}$ are assumed \emph{doubly stochastic}, meaning that
$W_n^T\un=W_n\un=\un$ where $\un$ is the $N\times 1$ vector whose components
are all equal to one and where $^T$ denotes transposition. Although
row-stochasticity ($W_n\un=\un$) is rather easy to ensure in practice,
column-stochasticity ($W_n^T\un=\un$) implies more stringent restrictions on
the communication protocol. For instance, in~\cite{boyd:2006}, each one-way
transmission from an agent $i$ to another agent $j$ requires at the same time a
feedback link from $j$ to $i$. 
As a matter of fact, double stochasticity prevents from using natural
broadcast schemes, in which a given node may transmit its local estimate to
\emph{all} neighbors without expecting any immediate feedback.

Remarkably, although generally assumed, double stochasticity of the matrices
$W_n$ is in fact {\bf not} mandatory. A couple of works (see \emph{e.g.}
\cite{nedic:tac-2011,bianchi:fort:hachem:2012}) get rid of the
column-stochasticity condition, but at the price of assumptions that may not
always be satisfied in practice. Other works
(\cite{rabbat:tsianos:2012,nedic2013distributed,nedic:olshevsky:2014}) manage to circumvent the use
of feedback links by coupling the gradient descent with the so-called push-sum
protocol~\cite{kempe:dobra:gehrke:focs-2003}. The latter however introduces an
additional communication of weights in the network in order to keep track of
some summary of the past transmissions.  In this paper, we address the
following questions:
What conditions on the sequence $(W_n)_{n \geq 1}$ are needed to
  ensure that Algorithm~(\ref{eq:tempupdate})-(\ref{eq:gossipIntro})
  drives all agents to a common critical point of $\sum_if_i$? What  happens if these conditions are not satisfied?
How is the convergence rate influenced by the communication protocol?
\smallskip

{\bf Contributions}. 
\begin{enumerate}
\item Assuming that $(W_n)_{n \geq 1}$ forms an independent and identically distributed (i.i.d.) sequence of stochastic
  matrices, we prove under some technical hypotheses that
  Algorithm~(\ref{eq:tempupdate})-(\ref{eq:gossipIntro}) leads the
  agents to a consensus, which is characterized.  It is shown that the latter consensus
  does not necessarily coincide with a critical point of $\sum_i f_i$.

\item We provide sufficient conditions either on the communication
  protocol  $(W_n)_{n \geq 1}$ \emph{or} on the functions $f_i$ which ensure that limit
  points are the critical points of $\sum_i f_i$. 

\item When such
  conditions are not satisfied, we also propose a simple modification
  of the algorithm which allows to recover the sought behavior.
\item We extend our results to a broader setting, assuming that the matrices $(W_n)_{n \geq 1}$ are no longer i.i.d.,
but are likely to depend on both the current observations and the past estimates. 
We also investigate a general stochastic approximation framework which goes beyond the 
model~(\ref{eq:YnDistOpt}) and beyond the only problem of distributed optimization.
\item We characterize the convergence rate of the algorithm under the form of a
  central limit theorem. Unlike~\cite{bianchi:fort:hachem:2012}, we address
  the case where the sequence $(W_n)_{n \geq 1}$ is not necessarily doubly
  stochastic.  We show that non-doubly stochastic matrix have an influence on
  the asymptotic error covariance (even if they are doubly stochastic in
  average). On the other hand, we prove that when the matrix $W_n$ is doubly
  stochastic for all $n$, the asymptotic covariance is identical to the one
  obtained in a centralized setting.
\end{enumerate}

The paper is organized as follows. Section~\ref{sec:dist_opt} is a gentle presentation of our results
in the special case of distributed optimization (see~(\ref{eq:distOpt})) assuming in addition that sequence $(W_n)$ is i.i.d.
In Section~\ref{sec:general} we provide the general setting to study almost sure convergence.
Almost sure convergence is studied in Section~\ref{sec:CVanalysis}.
Section~\ref{sec:CvgRate} investigates convergence rates. Conclusions and numerical results complete the paper.

\smallskip

\emph{Notations:} Throughout the paper, the vectors are column vectors. 
The random variables $W_n \in \bR^{N \times N}$ and $Y_n \eqdef
(Y_{n,1}^T,\dots, Y_{n,N}^T)^T \in \bR^{dN}$, $n \geq 1$, are defined on the
same measurable space equipped with a probability~$\bP$; $\bE$ denotes the
associated expectation. For any $n\geq 1$, define the $\sigma$-field $\cF_n
\eqdef \sigma({ \theta}_0,W_{1},\dots, W_n,Y_{1},\dots,Y_n)$ where $
\theta_0$ is the (possibly random) initial point of the algorithm. \\
It is assumed that for any $i \in 1, \dots, N$, $(\theta_{n,i})_{n \geq 0}$ satisfies
the update equations~(\ref{eq:tempupdate})-(\ref{eq:gossipIntro}); and we set
$$
\theta_n:=(\theta_{n,1}^T,\dots,\theta_{n,N}^T)^T.
$$
For any vector $x \in
\bR^\ell$, $|x|$ represents the Euclidean norm of $x$. $I_N$ is the $N\times
N$ identity matrix.  $J \eqdef \un \un^T/N$ denotes the orthogonal projector
onto the linear span of the all-one $N\times 1$ vector $\un$, and $\Jo \eqdef
I_N-J$. 
We denote by $\otimes$ the Kronecker product between matrices. 
For a matrix $A$, the spectral norm is denoted by $\|A\|$ and the spectral radius is denoted by
$r(A)$ whenever $A$ is a square matrix.

\section{Distributed Optimization}
\label{sec:dist_opt}

\subsection{Framework}

We first sketch our result in the special case of distributed optimization
\emph{i.e.}, when the ``innovation'' $Y_{n,i}$ of the algorithm in
(\ref{eq:tempupdate}) has the form~(\ref{eq:YnDistOpt}). 

\begin{assum}
\label{hyp:distOpt}
  \begin{enumerate}
  \item \label{hyp:distOpt:item1} $f_i: \bR^d\to\bR$ is differentiable and
    $\nabla f_i$ is locally Lipschitz-continuous.
  \item \label{hyp:distOpt:item2} For any Borel set A of $\bR^{dN}$,
    $
    \bP\left[ \xi_{n+1} \in A \, |\, \cF_n\right] =\nu_{\theta_n}(A) 
    $ almost surely (a.s.)
    where $(\nu_{\theta})_{\theta \in \bR^{dN}}$ is a family of probability
    measures such that $\int z \,d\nu_{\theta}(z) = 0 $
and $\sup\limits_{\theta\in \cK}\int
      |z|^2d\nu_{\theta}(z)<\infty$ for any compact set $\cK\subset
      \bR^{dN}$\,.
  \end{enumerate}
\end{assum}
For simplicity, the
matrix-valued process $W_n$ will be assumed i.i.d. and independent of both
processes $Y_{n}$ and $\theta_n$. This assumption will be relaxed in
section~\ref{sec:general}.
\begin{assum}
\label{hyp:Wiid}
\begin{enumerate}
\item \label{hyp:Wiid:item1} For any $n\geq 0$, conditionally to~$\cF_n$,
  $( Y_{n+1}, W_{n+1})$ are independent.
 
\item \label{hyp:Wiid:item2} $(W_n)_{n\geq 1}$ is an i.i.d. sequence of row-stochastic matrices (i.e.,
  $W_n\un=\un$ for any $n$) with non-negative entries.
\item \label{hyp:Wiid:item3}  The spectral radius of the matrix $\bE[W_1^TJ_\bot W_1]$ is strictly lower than $1$.
  \end{enumerate}
\end{assum}
The row-stochasticity assumption is a rather mild condition. 
In many works, it is also assumed that $W_n$ is column-stochastic
\emph{i.e.}, $\sum_i w_n(i,j)=1$ for any $j$, though this assumption is not required in this work.
Assumption~\ref{hyp:Wiid}-\ref{hyp:Wiid:item3}) is a contraction condition
which is required to drive the network to a consensus.

\begin{assum}
  \label{hyp:step}
  The deterministic step-size sequence $(\gamma_n)_{n \geq 1}$ satisfies $\gamma_n>0$ and:
  \begin{enumerate}
  \item \label{hyp:step:item1} $\lim_n \gamma_{n+1}/\gamma_n\ = 1$,
  \item  \label{hyp:step:item2} $\sum_n\gamma_n=+\infty$, $\sum_n\gamma_n^{1+\lambda}<\infty$ for some
    $\lambda \in (0,1)$,
  \item  \label{hyp:step:item3} $\sum_n |\gamma_n-\gamma_{n-1}|<\infty$\,.
  \end{enumerate}
\end{assum}
Polynomially decreasing sequences $\gamma_n \sim \gamma_\star / n^a$ when $n
\to \infty$, for some $a \in (1/2, 1]$ and $\gamma_\star>0$ satisfy Assumption~\ref{hyp:step}.
Finally, we introduce a stability-like condition.
\begin{assum}
  \label{hyp:stability}
  Almost surely, there exists a compact set $\cK$ of $\bR^{dN}$ such that
  $\theta_n \in \cK$ for any $n \geq 0$.
\end{assum}
Assumption~\ref{hyp:stability} claims that the sequence $(\theta_n)_{n \geq
  0}$ remains in a compact set and this compact set may depend on the path. It
is implied by the stronger assumption ``there exists a compact set $\cK$ of
$\bR^{dN}$ such that with probability one, $\theta_n \in \cK$ for any $n \geq
0$''.  Checking Assumption~\ref{hyp:stability} is not always an easy task. As
the main scope of this paper is the analysis of convergence rather than
stability, it is taken for granted: we refer to~\cite{bianchi:fort:hachem:2012}
for sufficient conditions implying stability.

\subsection{Results}
\label{sec:res_do}

The following lemma follows from standard algebra.
\begin{lemma}
  Under Assumptions~\ref{hyp:Wiid}-\ref{hyp:Wiid:item2}) and
  \ref{hyp:Wiid}-\ref{hyp:Wiid:item3}), the $N\times 1$ vector $v$ defined
  by $v^T \eqdef \frac 1N \un^T\overline W (I_N-J_\bot\overline W)^{-1}$ 
  is the unique non-negative vector satisfying $v^T=v^T\overline W$ and
  $v^T\un=1$.
\label{lem:irred}
\end{lemma}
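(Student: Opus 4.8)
\emph{Proof plan.} The plan is to first turn the contraction hypothesis Assumption~\ref{hyp:Wiid}-\ref{hyp:Wiid:item3}) into the invertibility of $I_N-\Jo\overline W$ (which is all that is needed for $v$ to be well defined), then verify the two defining relations $v^T=v^T\overline W$ and $v^T\un=1$ by elementary algebra, and finally dispatch uniqueness and non-negativity. For invertibility: the entries of $W_1$ lie in $[0,1]$ by row-stochasticity, so the expectations below are finite; since $\Jo$ is an orthogonal projector, $W_1^T\Jo W_1=(\Jo W_1)^T(\Jo W_1)\succeq 0$, and for every $x\in\bR^N$ Jensen's inequality (equivalently, a variance is non-negative) gives $x^T\bE[W_1^T\Jo W_1]x=\bE[\,|\Jo W_1x|^2\,]\ge|\Jo\overline Wx|^2=x^T\overline W^T\Jo\overline Wx$. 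Hence $0\preceq\overline W^T\Jo\overline W\preceq\bE[W_1^T\Jo W_1]$, and by monotonicity of the largest eigenvalue on the positive semidefinite cone, $\|\Jo\overline W\|^2=r(\overline W^T\Jo\overline W)\le r(\bE[W_1^T\Jo W_1])<1$; in particular $r(\Jo\overline W)<1$, so $M\eqdef(I_N-\Jo\overline W)^{-1}$ exists and $v$ is well defined.

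Next I would exploit that $\overline W=\bE[W_1]$ is again row-stochastic with non-negative entries, so $\overline W\un=\un$, together with the identities $J\un=\un$, $\un^TJ=\un^T$ and $\Jo\un=0$. From $M(I_N-\Jo\overline W)=I_N$ and $\overline W\un=\un$ one gets $M\un=\un+M\Jo\overline W\un=\un+M\Jo\un=\un$, whence $MJ=J$ because $J=\un\un^T/N$; therefore $v^T\un=\tfrac1N\un^T\overline W M\un=\tfrac1N\un^T\overline W\un=1$. For the eigenvector relation, split $\overline W=J\overline W+\Jo\overline W$, so that $M(I_N-\overline W)=M(I_N-\Jo\overline W)-MJ\overline W=I_N-J\overline W$; then $v^T(I_N-\overline W)=\tfrac1N\un^T\overline W(I_N-J\overline W)=\tfrac1N\un^T\overline W-\tfrac1N\un^T\overline W=0$, where I used $\un^T\overline W J\overline W=\tfrac1N(\un^T\overline W\un)\,\un^T\overline W=\un^T\overline W$. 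Thus $v^T=v^T\overline W$.

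It remains to handle uniqueness and non-negativity. First I would check $\ker(\overline W-I_N)=\mathrm{span}(\un)$: if $\overline Wx=x$, applying $\Jo$ and using $\Jo\overline W\un=0$ gives $\Jo x=\Jo\overline Wx=(\Jo\overline W)(\Jo x)$, i.e. $(I_N-\Jo\overline W)(\Jo x)=0$, so $\Jo x=0$ by invertibility and $x\in\mathrm{span}(\un)$. Since a matrix and its transpose share the geometric multiplicity of every eigenvalue, $\ker(\overline W^T-I_N)$ is one-dimensional too; any $u$ with $u^T=u^T\overline W$ lies on that line, and the constraint $u^T\un=1\neq0$ singles out one point, so such a vector is unique and must be $v$. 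The only point that is not bare linear algebra is non-negativity, because the Neumann series $M=\sum_{k\ge0}(\Jo\overline W)^k$ involves $\Jo\overline W$, which need not have non-negative entries; here I would argue indirectly: $\overline W^T$ is a non-negative matrix of spectral radius $r(\overline W^T)=r(\overline W)=1$, so by the Perron--Frobenius theorem it admits a non-negative eigenvector for the eigenvalue $1$, and rescaling it to sum to $1$ produces a non-negative vector satisfying the two defining relations, which by the uniqueness just proved equals $v$. Hence $v\ge0$, and $v$ is the unique non-negative solution. I expect this indirect Perron--Frobenius step to be the only mildly delicate part; everything else is bookkeeping with the projectors $J$ and $\Jo$.
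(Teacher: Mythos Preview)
Your proof is correct. The paper gives no detailed argument for this lemma, merely asserting that it ``follows from standard algebra''; your explicit verification---the Jensen step for invertibility, the projector identities $M\un=\un$ and $MJ=J$, the dimension count for uniqueness, and the Perron--Frobenius appeal for non-negativity---is precisely the kind of routine computation the authors have in mind, so there is nothing to compare.
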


If $A$ is a set, we say that $(x_n)_n$
converges to $A$ if $\inf\{|x_n-y| : y\in A\}$ tends to zero as $n\to\infty$.
\begin{theo} 
  Let Assumptions~\ref{hyp:distOpt},~\ref{hyp:Wiid},~\ref{hyp:step} and ~\ref{hyp:stability}  hold true.
Define the function $V: \bR^d \to \bR$
\begin{equation}
V(\theta) \eqdef \sum_{i=1}^N v_i\,f_i(\theta)\,\label{eq:VdistOpt}
\end{equation}
where $v = (v_1, \dots, v_N)$ is the vector defined in Lemma~\ref{lem:irred}.
Assume that the set $\cL = \{\theta \in \bR^d \, | \, \nabla V=0\}$ of critical
points of $V$ is non-empty and included in some level set $\{\theta: V(\theta)
\leq C \}$, and that $V(\cL)$ has an empty interior.  Assume also that the level
sets $\{\theta: V(\theta) \leq C \}$ are either empty or compact.  The
following holds with probability one:
\begin{enumerate}
\item \label{theo:distOpt:item1} The algorithm converges to a consensus i.e., $\lim_{n\to\infty} \max_{i,j} |\theta_{n,i}-\theta_{n,j}| = 0$.
\item \label{theo:distOpt:item2} The sequence $(\theta_{n,1})_{n \geq 0}$ converges to $\cL$ as $n \to
  \infty$.
\end{enumerate}
\label{the:distOpt}
\end{theo}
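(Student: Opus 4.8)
The plan is to reduce the coupled $N$-agent recursion to a single stochastic approximation recursion for the ``average'' iterate, plus a vanishing disagreement term, and then invoke the ODE method. First I would introduce the weighted average $\bar\theta_n \eqdef \sum_{i=1}^N v_i \theta_{n,i} = (v^T\otimes I_d)\theta_n$, where $v$ is the left-Perron vector of $\overline W \eqdef \bE[W_1]$ from Lemma~\ref{lem:irred}. Because $v^T\overline W = v^T$, applying $(v^T\otimes I_d)$ to the stacked update~(\ref{eq:tempupdate})--(\ref{eq:gossipIntro}) kills the gossip step ``in expectation'' and yields a recursion of the form $\bar\theta_n = \bar\theta_{n-1} - \gamma_n\sum_i v_i\nabla f_i(\theta_{n-1,i}) + \gamma_n\,\text{(martingale increment)} + \gamma_n\,(\text{remainder from }W_n-\overline W)$. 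Rewriting $\sum_i v_i\nabla f_i(\theta_{n-1,i}) = \nabla V(\bar\theta_{n-1}) + \sum_i v_i\big(\nabla f_i(\theta_{n-1,i}) - \nabla f_i(\bar\theta_{n-1})\big)$ and using local Lipschitz continuity of $\nabla f_i$ (Assumption~\ref{hyp:distOpt}-\ref{hyp:distOpt:item1}) together with the stability Assumption~\ref{hyp:stability}, the last sum is $O(\max_i|\theta_{n-1,i}-\bar\theta_{n-1}|)$, i.e.\ controlled by the disagreement.

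The second step is the consensus estimate, item~\ref{theo:distOpt:item1}. Write the disagreement vector $\theta_n^\bot \eqdef (\Jo\otimes I_d)\theta_n$. The gossip step contracts it: conditionally on $\cF_{n-1}$, using $W_n$ i.i.d.\ and independent of $Y_n$ (Assumption~\ref{hyp:Wiid}-\ref{hyp:Wiid:item1},\ref{hyp:Wiid:item2}), one has $\bE[\,|\theta_n^\bot|^2\,|\,\cF_{n-1}] \leq \rho\,|\tilde\theta_n^\bot|^2$ for some $\rho<1$ governed by $r(\bE[W_1^T\Jo W_1])<1$ (Assumption~\ref{hyp:Wiid}-\ref{hyp:Wiid:item3}); since $\tilde\theta_n = \theta_{n-1} + \gamma_n Y_n$ and $Y_n$ has bounded conditional second moment on the compact set $\cK$ (Assumption~\ref{hyp:distOpt}-\ref{hyp:distOpt:item2} and~\ref{hyp:stability}), a Robbins--Siegmund / supermartingale argument with $\sum_n\gamma_n^{1+\lambda}<\infty$ (Assumption~\ref{hyp:step}-\ref{hyp:step:item2}) gives $|\theta_n^\bot|\to 0$ a.s., hence $\max_{i,j}|\theta_{n,i}-\theta_{n,j}|\to 0$. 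This is essentially the argument of~\cite{bianchi:fort:hachem:2012}, adapted to row-stochastic (not doubly stochastic) $W_n$ by working with the $v$-weighted average rather than the uniform one.

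The third step is to conclude item~\ref{theo:distOpt:item2} by the ODE method applied to $\bar\theta_n$. Having shown the disagreement term is a.s.\ negligible, $(\bar\theta_n)$ is a Robbins--Monro recursion with mean field $-\nabla V$, bounded (by stability) and with a square-integrable martingale noise whose conditional variance is summable against $\gamma_n^2$. The step-size conditions (Assumption~\ref{hyp:step}) ensure the interpolated process is an asymptotic pseudo-trajectory of $\dot\theta = -\nabla V(\theta)$; since $V$ is a strict Lyapunov function for its own gradient flow, with $\cL = \{\nabla V = 0\}$ its equilibria, and since $V(\cL)$ has empty interior and level sets are compact, a Lyapunov/Kushner--Clark type theorem forces $\bar\theta_n\to\cL$. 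Combining with Step~2 gives $\theta_{n,1}\to\cL$. The main obstacle is Step~1/Step~2 interlock: one must close the loop between the disagreement bound (which needs boundedness, granted by Assumption~\ref{hyp:stability}) and the averaged recursion, and in particular verify that the extra drift term coming from $W_n - \overline W$ acting on $\tilde\theta_n$ contributes only through $\theta_{n-1}^\bot$ and a martingale part — i.e.\ that non-doubly-stochastic matrices perturb the \emph{consensus point} $V$ but not the convergence mechanism. This requires carefully expanding $(v^T\otimes I_d)(W_n\otimes I_d)\tilde\theta_n$ and exploiting $v^T\un = 1$ so that the component of $\tilde\theta_n$ along consensus passes through unchanged.
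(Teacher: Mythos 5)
Your plan is correct, and it takes a genuinely different route from the paper. The paper proves Theorem~\ref{the:distOpt} by verifying Assumptions~\ref{hyp:mutheta}--\ref{hyp:lyapunov} and invoking the general Theorem~\ref{the:cv}, whose proof tracks the \emph{uniform} average $\la\theta_n\ra$. Because $\un^T\overline W\neq\un^T$ in general, the increment of $\la\theta_n\ra$ in~(\ref{eq:SAmarkovnoise}) carries the term $\la\W_{\theta_n}\phi_n\ra$, where the normalized disagreement $\phi_n=\gamma_{n+1}^{-1}\cJ_\bot\theta_n$ is $O(1)$ but \emph{not} vanishing; the paper must therefore treat $(\phi_n)$ as a fast Markov chain and average it out via the Poisson equation (Propositions~\ref{prop:invariantPI}--\ref{prop:Regularity}), which is where the correction $m^{(1)}_\theta$ enters the mean field $h$. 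You instead track the $v$-weighted average $\bar\theta_n=(v^T\otimes I_d)\theta_n$ and exploit $v^T\overline W=v^T$ together with $(W_n-\overline W)\un=0$: the residual $(v^T(W_n-\overline W)\otimes I_d)(\theta_{n-1}+\gamma_nY_n)$ then equals $(v^T(W_n-\overline W)\otimes I_d)(\cJ_\bot\theta_{n-1}+\gamma_nY_n)$, i.e.\ a martingale increment of conditional $L^2$-norm $O(\gamma_n)$ on the stability event, so no Markov averaging or Poisson-equation machinery is needed and the mean field $-\nabla V$ appears directly. This is cleaner and more elementary, but it is tied to the i.i.d.\ setting of Assumption~\ref{hyp:Wiid}, where a single deterministic left Perron vector $v$ exists; the paper's heavier route is what allows $W_n$ to depend on $\theta_{n-1}$ and $Y_n$ as in Section~\ref{sec:general}, and it is also the decomposition reused for the CLT of Theorem~\ref{theo:TCL}. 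Two small points to tighten when you write it up: the contraction bound should read $\bE[|\theta_n^\bot|^2\,|\,\cF_{n-1}]\leq\rho\,\bE[|\tilde\theta_n^\bot|^2\,|\,\cF_{n-1}]$ (since $\tilde\theta_n$ involves $Y_n$, which is not $\cF_{n-1}$-measurable, you need Assumption~\ref{hyp:Wiid}-\ref{hyp:Wiid:item1}) here), and the ODE-method step should cite a statement whose hypotheses match exactly those imposed on $V$ and $\cL$ in the theorem (e.g.\ \cite[Theorem 2]{andrieu:2005}, as the paper does), since $\nabla f_i$ is only locally Lipschitz and $\cL$ need not be a singleton.
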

Theorem~\ref{the:distOpt} is proved in Appendix~\ref{sec:proofCV-distOpt}. Its
proof consists in showing that it is a special case of the more general
convergence result given by Theorem~\ref{the:cv}.

\subsection{Success and Failure of Convergence}
\label{sec:discussion}

The algorithm converges to $\cL$ which in general is not the set of the
critical points of $\theta \mapsto \sum_i f_i(\theta)$. We discuss some
special where both sets actually coincide.

{\em {\bf \em Scenario 1.} All functions $f_i$ are strictly convex and admit a
  (unique) common minimizer $\theta_\star$.}

This case is for instance investigated by \cite{chen-sayed-sp12} in the
framework of statistical estimation in wireless sensor network. 
The set {\sl $\cL$ is formed by
  the
  minimizers of $\, \sum_i f_i$}. 
Relaxing strict convexity, note that when the functions $f_i$ are just convex with a common minimizer and
$v_i>0$ for any $i$, then $\cL$ is formed by the minimizers of $\sum_i f_i$,
then the same conclusion holds.

{\em {\bf \em Scenario 2.} $\overline W$ is column-stochastic i.e.,
  $\un^T\overline W=\un^T$\,.}

In this case, $v$ given by Lemma~\ref{lem:irred} is the vector $\frac 1N\un$.
Consequently, $V = \frac 1N\sum_i f_i$.  Here again, $\cL$ is the set of
minimizers of $\sum_i f_i$. An example of random communication protocol (see \cite{aysal:2009})
satisfying $\un^T\overline W=\un^T$ is the following: at time $n$, a single
node $i$ wakes up at random with probability $p_i$ and broadcasts its temporary
update $\tilde \theta_{n,i}$ to all its neighbors $\cN_i$. Any neighbor $j$
computes the weighted average $\theta_{n,j}= \beta\tilde\theta_{n,i} +
(1-\beta) \tilde\theta_{n,j}$. On the other hand, any node $k$ which does not
belong to the neighborhood of $i$ (including $i$ itself) sets
$\theta_{n,k}=\tilde\theta_{n,k}$.
Then, given $i$ wakes up, the $(k,\ell)$th entry of $W_n$ is given by:
$$
w_n(k,\ell) = \left\{
  \begin{array}[h]{ll}
    1 & \textrm{if } k\notin \cN_i \textrm{ and }k=\ell \ ,\\
    \beta  & \textrm{if } k\in\cN_i \textrm{ and }\ell=i \ ,\\
    1-\beta & \textrm{if } k\in\cN_i \textrm{ and }k=\ell\ , \\
    0 & \textrm{otherwise.}
  \end{array}\right.
$$
Here, $W_n$ is {\em not} doubly stochastic. However, when nodes wake up according
to the uniform distribution ($p_i=\frac 1N$ for all $i$) it is easily seen that
$\un^T\bE[W_n]= \un^T$.

\subsection{Enhanced Algorithm with Weighted Step Sizes}
\label{rem:weighted}
We end up this section with a simple modification of the initial algorithm in
the case where $v_i>0$ for all $i$.  Let us replace the local
  step~(\ref{eq:tempupdate}) of the algorithm by
  \begin{equation}
    \tilde \theta_{n,i} \eqdef \theta_{n-1,i} + \gamma_{n}\, v_i^{-1}Y_{n,i}
    \label{eq:weighted}
  \end{equation}
  where $Y_{n,i}$ is still given by~(\ref{eq:YnDistOpt}).
  As an immediate Corollary of Theorem~\ref{the:distOpt}, the
   algorithm~(\ref{eq:weighted})-(\ref{eq:gossipIntro}) drives the agent to a consensus which coincides with the critical points of $\sum_if_i$.

Of course, this modification requires for each node $i$ to have some prior knowledge of the communication protocol through the coefficients $v_i$
(in that case, questions related to a distributed computation of the $v_i$'s would be of interest, but are beyond the scope of this paper).

\section{Distributed Robbins-Monro Algorithm: General Setting}
\label{sec:general}
In this section, we consider the general setting described by
Algorithm~(\ref{eq:tempupdate})-(\ref{eq:gossipIntro}) with weaker conditions on
the distribution of the observations $Y_n$. We also weaken the assumptions on
 $(Y_{n+1}, W_{n+1})$: our general framework includes the case when the
communication protocol is adapted at each time $n$. 

We denote by $\mathsf M_1$ the set of $N\times N$
non-negative row-stochastic matrices and we endow $\mathsf M_1$ with its Borel
$\sigma$-field.
\begin{assum}
  \label{hyp:mutheta}
  \begin{enumerate}
  \item \label{hyp:mutheta:item1} There exists a collection of distributions
    $\left(\mu_\theta\right)_{\theta\in \bR^{dN}}$ on $\bR^{dN}\times \mathsf
    M_1$ such that a.s. for any Borel set $A$:
    $$
    \bP\left[ (Y_{n+1},W_{n+1}) \in A \, |\, \cF_n\right] =
    \mu_{\theta_n}(A)\ .
$$
In addition, the application $\theta \mapsto \mu_\theta(A)$ defined on $\bR^{dN}$ is measurable for any $A$ in the
Borel $\sigma$-field of $\bR^{dN}\times \mathsf M_1$. 
\item \label{hyp:mutheta:item2} For any compact set $\cK\subset \bR^{dN}$, $\sup\limits_{\theta\in \cK}\int
  |y|^2d\mu_\theta(y,w)<\infty$. 
\end{enumerate}
\end{assum}

Assumption~\ref{hyp:mutheta}-\ref{hyp:mutheta:item1}) means that the joint
distribution of the r.v.'s $Y_{n+1}$ and $W_{n+1}$ depends on the past
$\cF_{n}$ only through the last value $ \theta_{n}$ of the vector of estimates.
It also implies that $W_n$ is almost-surely (a.s.)  non-negative and
row-stochastic. Since the variables $(Y_{n+1}, W_{n+1})$ are not necessarily
independent conditionally to the past $\F_n$ and $(W_n)_{n \geq 1}$ are no
longer i.i.d., the contraction condition on $J_\bot W_1$ is replaced with the
following condition:
\begin{assum}
  \label{hyp:model}  For any compact set $\cK \subset \bR^{dN}$, there exists $\rho_\cK \in (0,1)$ such that for all $\theta \in \cK$, $\phi$
  in~$\mathbb{R}^{dN}$ and $A \in \bR^{dN} \times \bR^{dN}$,
$$
  \int |((J_\bot w) \otimes I_d) (\phi+ Ay)|^2
  d\mu_{\theta}(y, w)   \leq \rho_{\cK} \!\! \int\!\! |\phi+A y|^2 \, d\mu_\theta(y,w) \,.
$$
\end{assum}
Assumption~\ref{hyp:model} is
satisfied as soon as the spectral radius $ r\left(\bE\left[W_{1}^T \Jo W_{1}
    \vert \theta_0, Y_1 \right] \right)$ is upper bounded by a constant
independent of $(\theta_0,Y_1)$ when $\theta_0 \in \cK$ and strictly lower than
one.
When $(W_n)_{n\geq 1}$ is an i.i.d. sequence, independent of the sequence
$(Y_n)_{n\geq 1}$ and of $\theta_0$, the above condition reduces to $r(\bE[W_1^T \Jo
W_1])<1$.

\section{Convergence Analysis}
\label{sec:CVanalysis}
For any vector ${x} \in\bR^{dN}$ of the form $x=(x_1^T,\dots,x_N^T)^T$ where
$x_i \in \bR^d$, we define the vector of $\bR^d$
$\la x\ra \eqdef (x_1+\cdots+x_N)/N =  ({\un^T\otimes I_d})x/N$.
We extend the notation to matrices $X\in \bR^{dN\times k}$ as $\la X\ra = \frac
1N ({\un^T\otimes I_d})X$. We note $\cJ \eqdef J \otimes I_d$ and $\cJ_\bot \eqdef J_\bot \otimes I_d$. 
Note that $\cJ x = \bs 1 \otimes \la x\ra$\,.
Algorithm~(\ref{eq:tempupdate}-\ref{eq:gossipIntro}) can be written in matrix
form as:
\begin{equation}
  \theta_n = \mathcal W_n\left(\theta_{n-1}+\gamma_{n}Y_n\right) \quad \text{where} \quad \mathcal W_n = W_n\otimes I_d \, .
  \label{eq:algo} 
\end{equation} We  decompose the estimate vector $\theta_n$ into two components
$\theta_n = \bs 1 \otimes \la \theta_n \ra + \cJ_\bot \theta_n$. In
Section~\ref{sec:disagreement}, we analyze the asymptotic behavior of the
disagreement vector $\cJ_\bot \theta_n$.  The study of the average vector $\la
\theta_n \ra$ will be addressed in Section~\ref{sec:average}. These two
sections are prefaced by a result which established the dynamics of these
sequences. Set $\alpha_n \eqdef \gamma_n / \gamma_{n+1}$ and
\begin{eqnarray}
  \label{eq:definition:phin}
&  \phi_n \eqdef  \gamma_{n+1}^{-1} \, \cJ_\bot\theta_n  \,.
\end{eqnarray}
The following lemma is left to the reader.
\begin{lemma}
  For each $n$, let $\theta_n$ be given by (\ref{eq:algo}) and let $W_n$
be row stochastic. Then,
  \begin{align}
    \athn &= \athnmu + \gamma_n \la \cW_n(Y_n+\phi_{n-1})\ra\ , \label{eq:SAmarkovnoise}\\
    \phi_n &= \alpha_n\, \cJ_\bot\cW_n(\phi_{n-1} + Y_{n})\label{eq:thOrth} \, .
  \end{align}
\end{lemma}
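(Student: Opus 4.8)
The plan is to substitute the consensus/disagreement decomposition of $\theta_{n-1}$ into the matrix recursion~(\ref{eq:algo}) and then to use only that $W_n$ is row-stochastic, i.e.\ $W_n\un=\un$. Since $x=\cJ x+\cJ_\bot x$ and $\cJ x=\bs 1\otimes\la x\ra$ for every $x\in\bR^{dN}$, and since $\phi_{n-1}=\gamma_n^{-1}\cJ_\bot\theta_{n-1}$ by~(\ref{eq:definition:phin}), one may write $\theta_{n-1}=\bs 1\otimes\la\theta_{n-1}\ra+\gamma_n\phi_{n-1}$, hence
\[
\theta_{n-1}+\gamma_n Y_n=\bs 1\otimes\la\theta_{n-1}\ra+\gamma_n(\phi_{n-1}+Y_n)\,.
\]

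Next I would apply $\cW_n=W_n\otimes I_d$ to this identity. The single fact used is $\cW_n(\bs 1\otimes u)=(W_n\un)\otimes u=\bs 1\otimes u$ for all $u\in\bR^d$, which follows from the mixed-product rule for Kronecker products together with row-stochasticity of $W_n$ (and \emph{only} row-stochasticity: double stochasticity plays no role). Substituting into~(\ref{eq:algo}) then gives the single identity
\[
\theta_n=\bs 1\otimes\la\theta_{n-1}\ra+\gamma_n\,\cW_n(\phi_{n-1}+Y_n)\,,
\]
from which both claims are read off by projecting. Applying $\la\cdot\ra=(\un^T\otimes I_d)(\cdot)/N$ and using $\la\bs 1\otimes u\ra=u$ yields~(\ref{eq:SAmarkovnoise}); applying $\cJ_\bot=J_\bot\otimes I_d$ and using $\cJ_\bot(\bs 1\otimes u)=(J_\bot\un)\otimes u=0$ (because $J_\bot\un=0$) yields $\cJ_\bot\theta_n=\gamma_n\,\cJ_\bot\cW_n(\phi_{n-1}+Y_n)$; multiplying by $\gamma_{n+1}^{-1}>0$ and recalling $\phi_n=\gamma_{n+1}^{-1}\cJ_\bot\theta_n$ and $\alpha_n=\gamma_n/\gamma_{n+1}$ gives~(\ref{eq:thOrth}).

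There is no real obstacle here: the argument is pure linear algebra, and the only point needing a little care is the Kronecker-product bookkeeping, in particular the identities $(W_n\otimes I_d)(\un\otimes u)=(W_n\un)\otimes u$ and $(\un^T\otimes I_d)(\un\otimes u)=Nu$. It is worth stating explicitly that the derivation uses the hypothesis on $W_n$ solely through $W_n\un=\un$, which is exactly why the lemma holds for an arbitrary row-stochastic $W_n$, with no i.i.d.\ or contraction assumption needed.
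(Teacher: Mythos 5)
Your proof is correct, and since the paper explicitly leaves this lemma to the reader, your argument is precisely the intended one: decompose $\theta_{n-1}$ into its consensus and disagreement parts, use $\cW_n(\bs 1\otimes u)=\bs 1\otimes u$ (which needs only row-stochasticity), and project with $\la\cdot\ra$ and $\cJ_\bot$. All the Kronecker-product identities and the bookkeeping with $\phi_{n-1}=\gamma_n^{-1}\cJ_\bot\theta_{n-1}$ and $\alpha_n=\gamma_n/\gamma_{n+1}$ check out.
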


\subsection{Disagreement Vector}
\label{sec:disagreement}

\begin{lemma}
  \label{lem:agreement}
  Let Assumptions~\ref{hyp:step}-\ref{hyp:step:item1}), ~\ref{hyp:mutheta}
  and~\ref{hyp:model} hold.  Let $(\phi_n)_{n \geq 0}$ be the sequence given by
  (\ref{eq:definition:phin}).  For any compact set $\cK\subset \bR^{dN}$,
$\sup_n \bE \left(|\phi_n|^2 \indic_{\bigcap_{j \leq n-1} \{\theta_j \in \cK
  \}} \right)<\infty$.
\end{lemma}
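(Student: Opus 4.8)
The plan is to derive a recursive inequality for the quantity $u_n \eqdef \bE\left(|\phi_n|^2 \indic_{\bigcap_{j \leq n-1} \{\theta_j \in \cK\}}\right)$ directly from the dynamics \eqref{eq:thOrth}, namely $\phi_n = \alpha_n\,\cJ_\bot\cW_n(\phi_{n-1}+Y_n)$, and then to conclude by a discrete Grönwall-type argument using the step-size hypothesis. First I would condition on $\cF_{n-1}$. On the event $E_{n-1} \eqdef \bigcap_{j\leq n-1}\{\theta_j\in\cK\}$, which is $\cF_{n-1}$-measurable, we have $\theta_{n-1}\in\cK$, so Assumption~\ref{hyp:model} applies with $\phi = \phi_{n-1}$ and $A = I_{dN}$ (after writing $\cJ_\bot\cW_n = (J_\bot W_n)\otimes I_d$ and using $\bP[(Y_n,W_n)\in\cdot\,|\cF_{n-1}] = \mu_{\theta_{n-1}}$). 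This gives
\[
\bE\left(|\cJ_\bot\cW_n(\phi_{n-1}+Y_n)|^2 \,\big|\, \cF_{n-1}\right)\indic_{E_{n-1}} \leq \rho_\cK\, \bE\left(|\phi_{n-1}+Y_n|^2 \,\big|\,\cF_{n-1}\right)\indic_{E_{n-1}}.
\]
Since $E_n\subset E_{n-1}$, multiplying by $\alpha_n^2$ and taking expectations yields $u_n \leq \alpha_n^2\,\rho_\cK\,\bE\left(|\phi_{n-1}+Y_n|^2\indic_{E_{n-1}}\right)$.

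Next I would expand the right-hand side. Writing $|\phi_{n-1}+Y_n|^2 = |\phi_{n-1}|^2 + 2\langle \phi_{n-1},Y_n\rangle + |Y_n|^2$ and using Assumption~\ref{hyp:mutheta}-\ref{hyp:mutheta:item2}) to bound $\bE(|Y_n|^2\,|\,\cF_{n-1})\indic_{E_{n-1}} \leq \sup_{\theta\in\cK}\int|y|^2 d\mu_\theta(y,w) =: \sigma_\cK < \infty$, together with the Cauchy--Schwarz / Young inequality $2\langle\phi_{n-1},Y_n\rangle \leq \varepsilon|\phi_{n-1}|^2 + \varepsilon^{-1}|Y_n|^2$ for a small $\varepsilon>0$ to be chosen, I obtain
\[
u_n \leq \alpha_n^2\rho_\cK\left((1+\varepsilon)\,u_{n-1} + (1+\varepsilon^{-1})\,\sigma_\cK\right).
\]
Now Assumption~\ref{hyp:step}-\ref{hyp:step:item1}) gives $\alpha_n = \gamma_n/\gamma_{n+1}\to 1$, so $\alpha_n^2\rho_\cK \to \rho_\cK < 1$; fix $\varepsilon$ small enough that $\rho_\cK(1+\varepsilon) < 1$, and let $q \eqdef \rho_\cK(1+\varepsilon)$. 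Then there is $n_0$ such that for all $n\geq n_0$, $\alpha_n^2\rho_\cK(1+\varepsilon) \leq q' < 1$ for some $q'\in(q,1)$, and also $\alpha_n^2\rho_\cK(1+\varepsilon^{-1})\sigma_\cK \leq M$ for some constant $M<\infty$. Hence $u_n \leq q'\,u_{n-1} + M$ for $n\geq n_0$, which by iteration gives $u_n \leq (q')^{\,n-n_0}u_{n_0} + M/(1-q')$ for all $n\geq n_0$; in particular $\sup_{n\geq n_0} u_n < \infty$. It remains to check that $u_{n_0}$ (and each $u_n$ for $n<n_0$) is finite: this follows because on $E_{n-1}$ the iterates $\theta_0,\dots,\theta_{n-1}$ lie in the compact $\cK$, so $\phi_{n-1} = \gamma_n^{-1}\cJ_\bot\theta_{n-1}$ is bounded by a deterministic constant on that event, and a crude bound using \eqref{eq:thOrth} once more together with the finite second moment of $Y_n$ shows $u_{n}<\infty$ for every fixed $n$. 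Taking the maximum of $\sup_{n\geq n_0}u_n$ and $\max_{n<n_0}u_n$ proves $\sup_n u_n<\infty$.

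The main obstacle is the correct handling of the restriction events $E_{n-1}$ in the conditioning step: one must be careful that Assumption~\ref{hyp:model} is a pointwise-in-$\theta$ statement and can only be invoked after conditioning on $\cF_{n-1}$ on the event $\{\theta_{n-1}\in\cK\}$, and that the nesting $E_n\subseteq E_{n-1}$ is what makes the recursion close (we drop $\indic_{E_n}\leq\indic_{E_{n-1}}$ on the left). A secondary technical point is the choice of the Young parameter $\varepsilon$: it must be small enough to preserve the contraction $\rho_\cK(1+\varepsilon)<1$ while $\alpha_n^2\to 1$, which is exactly why Assumption~\ref{hyp:step}-\ref{hyp:step:item1}) is needed here rather than the stronger summability conditions.
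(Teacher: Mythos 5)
Your proposal is correct and follows essentially the same route as the paper's proof: condition on $\cF_{n-1}$ on the nested event $E_{n-1}$, invoke Assumption~\ref{hyp:model} (with $A=I_{dN}$) to get the factor $\alpha_n^2\rho_\cK$, bound the cross term (the paper writes it as $2|\phi_{n-1}|\sqrt{C}$ and absorbs it with an indicator trick, which is the same device as your Young-inequality splitting), and close the recursion using $\alpha_n\to 1$ and $\rho_\cK<1$. Your explicit check that $u_n<\infty$ for each fixed $n$ is a point the paper leaves implicit, but otherwise the two arguments coincide.
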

The result is proved in Appendix~\ref{app:proof:sectiondisagreement}.
This lemma implies that for any compact set, there exists $C$ such that for any
$n \geq 0$, $\E[|\cJ_\bot\theta_n|^2\indic_{\bigcap_k \{\theta_k \in \cK_m \} }]\leq C
\gamma_{n+1}^2$.
\begin{prop}[Agreement] \label{prop:agreement}
  Under Assumptions~\ref{hyp:step}-\ref{hyp:step:item1}),
~\ref{hyp:step}-\ref{hyp:step:item2}), ~\ref{hyp:stability}, ~\ref{hyp:mutheta}
and~\ref{hyp:model}, 
$
\lim_{n\to\infty} \cJ_\bot\theta_n = 0
$ a.s.
\end{prop}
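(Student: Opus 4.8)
The plan is to deduce the almost-sure convergence $\cJ_\bot\theta_n \to 0$ from the second-moment bound of Lemma~\ref{lem:agreement} together with the stability Assumption~\ref{hyp:stability}, via a Borel--Cantelli argument on a suitable exhausting sequence of compact sets. First I would fix a countable family of compact sets $(\cK_m)_{m\geq 1}$ with $\cK_m \subset \bR^{dN}$ and $\bigcup_m \cK_m = \bR^{dN}$, chosen increasing, so that by Assumption~\ref{hyp:stability} the event $\Omega_m \eqdef \{\theta_n \in \cK_m \text{ for all } n \geq 0\}$ satisfies $\bP(\bigcup_m \Omega_m) = 1$. It therefore suffices to prove that on each $\Omega_m$ one has $\cJ_\bot\theta_n \to 0$ a.s. Fix $m$, and recall from the remark following Lemma~\ref{lem:agreement} that there exists a constant $C = C_m$ such that $\E\big[|\cJ_\bot\theta_n|^2\, \indic_{\bigcap_{k\leq n-1}\{\theta_k \in \cK_m\}}\big] \leq C\,\gamma_{n+1}^2$ for all $n$.

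Next I would turn this moment bound into a summable tail bound. For any $\varepsilon > 0$, Markov's inequality gives
\begin{equation}
\bP\Big(\{|\cJ_\bot\theta_n| > \varepsilon\} \cap \textstyle\bigcap_{k\leq n-1}\{\theta_k \in \cK_m\}\Big) \leq \frac{C\,\gamma_{n+1}^2}{\varepsilon^2}\,.
\end{equation}
By Assumption~\ref{hyp:step}-\ref{hyp:step:item2}), $\sum_n \gamma_n^{1+\lambda} < \infty$ for some $\lambda \in (0,1)$; since $\gamma_n \to 0$ (a consequence of the same assumption together with \ref{hyp:step:item1})), eventually $\gamma_n^2 \leq \gamma_n^{1+\lambda}$, hence $\sum_n \gamma_{n+1}^2 < \infty$. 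Therefore the right-hand sides are summable in $n$, and Borel--Cantelli implies that almost surely only finitely many of the events $\{|\cJ_\bot\theta_n| > \varepsilon\} \cap \bigcap_{k\leq n-1}\{\theta_k \in \cK_m\}$ occur. On $\Omega_m$ the indicator $\indic_{\bigcap_{k \leq n-1}\{\theta_k \in \cK_m\}}$ equals $1$ for every $n$, so on $\Omega_m$ we get $\limsup_n |\cJ_\bot\theta_n| \leq \varepsilon$ a.s.; intersecting over a sequence $\varepsilon \downarrow 0$ yields $\cJ_\bot\theta_n \to 0$ a.s.\ on $\Omega_m$. Taking the union over $m$ and using $\bP(\bigcup_m \Omega_m) = 1$ gives the claim.

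The only genuinely delicate point is the bookkeeping with the localizing indicators: Lemma~\ref{lem:agreement} controls $|\phi_n|^2$ (equivalently $\gamma_{n+1}^{-2}|\cJ_\bot\theta_n|^2$) only on the event that the \emph{entire past} $\theta_0,\dots,\theta_{n-1}$ stays in $\cK_m$, not on a single-time event, so one must be careful to apply Markov's inequality to the localized random variable $|\cJ_\bot\theta_n|\,\indic_{\bigcap_{k\leq n-1}\{\theta_k\in\cK_m\}}$ rather than to $|\cJ_\bot\theta_n|$ itself, and then observe that on $\Omega_m$ this localization is vacuous. Everything else — the summability of $\sum \gamma_n^2$ from Assumption~\ref{hyp:step}, and the reduction to a path-dependent compact set from Assumption~\ref{hyp:stability} — is routine. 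I do not expect the contraction Assumption~\ref{hyp:model} or the dynamics~\eqref{eq:thOrth} to be needed directly here: they have already been absorbed into Lemma~\ref{lem:agreement}.
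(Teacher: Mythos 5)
Your proof is correct and follows essentially the same route as the paper: reduce to an exhausting sequence of compacts via Assumption~\ref{hyp:stability}, invoke the bound $\E[|\cJ_\bot\theta_n|^2\indic_{\bigcap_{k\le n-1}\{\theta_k\in\cK_m\}}]\le C\gamma_{n+1}^2$ from Lemma~\ref{lem:agreement}, and use summability of $\gamma_n^2$ (from Assumption~\ref{hyp:step}-\ref{hyp:step:item2})). The only cosmetic difference is the last step, where you apply Markov plus Borel--Cantelli while the paper concludes directly from the a.s.\ finiteness of $\sum_n|\cJ_\bot\theta_n|^2\indic_{\bigcap_k\{\theta_k\in\cK_m\}}$; both rest on the same summability and your handling of the localizing indicators is careful and correct.
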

\begin{proof}
  Let $(\cK_m)_{m \geq 0}$ be an increasing sequence of compact subsets of
  $\bR^{dN}$ such that $\bigcup_m \cK_m = \bR^{dN}$.  Under
  Assumption~\ref{hyp:stability}, we have to prove equivalently that for any $m
  \geq 0$, $\lim_n \cJ_\bot\theta_n \un_{\bigcap_k \{\theta_k \in \cK_m \}} =0$
  a.s.  Let $m \geq 0$.  Lemma~\ref{lem:agreement} implies that there exists a
  constant $C$ such that for any $n$, $\E[|\cJ_\bot\theta_n|^2\indic_{\bigcap_k
    \{\theta_k \in \cK_m \} }]\leq C \gamma_{n+1}^2$.  By
  Assumption~\ref{hyp:step}-\ref{hyp:step:item2}), this implies that
  $\sum_n\E[|\cJ_\bot\theta_n|^2\indic_{\bigcap_k \{\theta_k \in \cK_m \}}]$ is
  finite; hence $ \sum_n |\cJ_\bot\theta_n|^2\indic_{\bigcap_k \{\theta_k \in
    \cK_m \}}$ is finite a.s. which yields $\lim_n
  \cJ_\bot\theta_n^2\indic_{\bigcap_k \{\theta_k \in \cK_m \}} = 0$ a.s.
\end{proof}

\subsection{Average vector}
\label{sec:average}
We now study the long-time behavior of the average estimate $\la \theta_n \ra$.
Define for any $\theta\in \bR^{dN}$: \newcommand{\W}{\overline{\mathcal W}}
\begin{eqnarray}
  \W_\bth &\eqdef& \int  \left( w \otimes I_d \right) \,d\mu_{\bth}(y,w) \label{eq:Omega}\\
  z_\bth &\eqdef& \int ( w\otimes I_d)y\,d\mu_{\bth}(y,w)\, . \label{eq:upsilon} 
\end{eqnarray}
and let us assume regularity-in-$\theta$ properties of these quantities
\begin{assum}
\label{hyp:lipschitz} There exists $\lambda_\mu \in (1/2,1]$ and for any compact set $\cK\subset \bR^{dN}$, there exists a constant $C>0$
such that for any $\theta, \theta' \in\cK$,
\begin{align}
  & \left\| \W_\bth - \W_{\bth'}\right\|\leq C |\bth-\bth'|^{\lambda_\mu} \, ,   \label{eq:lipschitz2}  \\
& \left| \cJ z_\bth - \cJ z_{\cJ \bth} \right| \leq C |\cJo \bth|^{\lambda_\mu} \, , 
      \label{eq:lipschitz1}  \\
& \left| \cJ_\bot z_\bth - \cJ_\bot z_{\bth'} \right| \leq C |\bth -\bth'|^{\lambda_\mu} \, , 
      \label{eq:lipschitz3}
\end{align}
\end{assum}
We define the 
mean field function $h:\bR^d\to\bR^d$ (\ref{eq:thOrth}) by
\begin{equation}
  \label{eq:meanfield}
   h(\thetash) =  \la z_{\un\otimes\thetash} + \W_{\un\otimes\thetash} \, m_{\un \otimes\thetash}^{(1)}\ra
\end{equation}
where $m_{\un \otimes\thetash}^{(1)}$ is the expectation of the invariant
distribution $\pi_{1,\un \otimes \thetash}$, given by (see
Proposition~\ref{prop:momentPITheta} in Appendix~\ref{app:Phi})
\[
m^{(1)}_\bth \eqdef (I_{dN}-\cJo\W_\bth)^{-1}\cJo z_\bth\,.
 \]
 Note that under Assumption~\ref{hyp:model}, this quantity is well defined
 since for any compact $\cK \subset \bR^{dN}$, $\sup_{\theta \in \cK}
 \|\cJ_\bot \W_\bth\| \leq \sqrt{\rho_\cK}$.
\begin{assum} \label{hyp:lyapunov}
\begin{enumerate}
\item  \label{hyp:lyapunov:cont} $h: \bR^d \to \bR^d$ is continuous.
\item \label{hyp:lyapunov:V} There exists a continuously differentiable
  function $V: \bR^d \to \bR^+$ such that 
  \begin{enumerate}
  \item there exists $M>0$ such that $\cL = \{\thetash \in \bR^d: \nabla
    V^T(\thetash) h(\thetash) = 0 \} \subset \{V \leq M \}$. In addition,
    $V(\cL)$ has an empty interior;
  \item there exists $M'>M$ such that $\{V \leq M' \}$ is a compact subset of
    $\bR^d$;
  \item for any $\thetash \in \bR^d \setminus \cL$, $\nabla V^T(\thetash)
    h(\thetash) < 0$.
  \end{enumerate}
\end{enumerate}
\end{assum}
Assumptions~\ref{hyp:mutheta}, \ref{hyp:model} and \ref{hyp:lipschitz} imply
that $\thetash \mapsto m^{(1)}_{\un \otimes\thetash}$ is continuous on $\bR^d$
(see Proposition~\ref{prop:Regularity} in Appendix~\ref{app:Phi}). Therefore, a
sufficient condition for the
Assumption~\ref{hyp:lyapunov}-\ref{hyp:lyapunov:cont}) is to strengthen the
conditions (\ref{eq:lipschitz1}-\ref{eq:lipschitz3}) of
Assumption~\ref{hyp:lipschitz} as follows:
$
\left| z_\bth - z_{\bth'}\right| \leq C | \bth - \bth'|^{\lambda_\mu} $.

\begin{prop}
  \label{prop:convergence:averagesequence} Let Assumptions~\ref{hyp:step}, \ref{hyp:stability}, \ref{hyp:mutheta}, \ref{hyp:model},\ref{hyp:lipschitz} and
  \ref{hyp:lyapunov} hold true.  Assume in addition that $\lambda \leq
  \lambda_\mu$ where $\lambda, \lambda_\mu$ are resp. given by
  Assumption~\ref{hyp:step} and \ref{hyp:lipschitz}. The average sequence $(\la
  \theta_{n} \ra)_n$ converges almost-surely to a connected component of $\cL$.
\end{prop}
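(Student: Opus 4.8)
The plan is to cast the recursion~(\ref{eq:SAmarkovnoise}) for $\aYn$—or rather for $\langle\theta_n\rangle$—as a Robbins–Monro scheme with Markovian dynamics and to invoke a standard stochastic-approximation convergence theorem (of the type in \cite{bianchi:fort:hachem:2012}, or Benveniste–Métivier–Priouret / Kushner–Yin). Write $x_n \eqdef \langle\theta_n\rangle$. From~(\ref{eq:SAmarkovnoise}) we have $x_n = x_{n-1} + \gamma_n \langle \cW_n(Y_n + \phi_{n-1})\rangle$. The idea is to decompose the increment into the mean field $h(x_{n-1})$ defined in~(\ref{eq:meanfield}) plus a remainder, and to show that the remainder is a sum of a martingale-increment term, a vanishing bias term, and terms controlled by the disagreement $\cJ_\bot\theta_n$, which goes to $0$ a.s.\ by Proposition~\ref{prop:agreement}. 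Concretely: first, replace $\phi_{n-1}$ by the expectation $m^{(1)}_{\un\otimes x_{n-1}}$ of the invariant distribution $\pi_{1,\un\otimes x_{n-1}}$; the error here is governed by the exponential forgetting of the linear recursion~(\ref{eq:thOrth}) (contraction under Assumption~\ref{hyp:model}, with rate $\sqrt{\rho_\cK}$ on compacts) combined with the Lipschitz regularity of $\W_\theta$ and $z_\theta$ from Assumption~\ref{hyp:lipschitz}, and with the step-size regularity $\gamma_n/\gamma_{n+1}\to 1$ and $\sum|\gamma_n-\gamma_{n-1}|<\infty$ from Assumption~\ref{hyp:step}. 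Second, replace $\W_{\theta_{n-1}}$ by $\W_{\un\otimes x_{n-1}}$ and $z_{\theta_{n-1}}$ by $z_{\un\otimes x_{n-1}}$, the error being $O(|\cJ_\bot\theta_{n-1}|^{\lambda_\mu})$ by~(\ref{eq:lipschitz1})–(\ref{eq:lipschitz3}); since $|\cJ_\bot\theta_{n-1}|^2 \le C\gamma_n^2$ in mean (Lemma~\ref{lem:agreement}) and $\lambda\le\lambda_\mu$, these errors are summable against $\gamma_n$ in the appropriate sense.

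After this reduction the recursion reads $x_n = x_{n-1} + \gamma_n\bigl(h(x_{n-1}) + r_n + e_n\bigr)$ where $r_n$ is a martingale-difference sequence with conditionally bounded second moment on the stability compact (using Assumption~\ref{hyp:mutheta}-\ref{hyp:mutheta:item2}) and the bound $\sup_n\E[|\phi_n|^2\indic_{\cap\{\theta_j\in\cK\}}]<\infty$), and $e_n$ is a remainder with $\sum_n\gamma_n|e_n|<\infty$ a.s.\ on the stability event. One then checks: (i) by Assumption~\ref{hyp:step}-\ref{hyp:step:item2}), $\sum\gamma_n^{1+\lambda}<\infty$ with $\lambda\in(0,1)$, hence $\sum\gamma_n^2<\infty$, so the martingale $\sum\gamma_k r_k$ converges a.s.\ on the stability event; (ii) $h$ is continuous (Assumption~\ref{hyp:lyapunov}-\ref{hyp:lyapunov:cont})) and $V$ is a strict Lyapunov function for $h$ with the level-set compactness and $V(\cL)$-empty-interior conditions (Assumption~\ref{hyp:lyapunov}-\ref{hyp:lyapunov:V})); (iii) $(x_n)$ stays in a compact set by Assumption~\ref{hyp:stability}. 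These are exactly the hypotheses of the Kushner–Clark / Delyon-type theorem ensuring that the interpolated trajectory is an asymptotic pseudotrajectory of the flow $\dot x = h(x)$, and therefore converges to a connected component of $\cL$.

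The final step is to relate $x_n = \langle\theta_n\rangle$ back to the statement: but Proposition~\ref{prop:agreement} gives $\cJ_\bot\theta_n\to 0$ a.s., so $\theta_{n,i}-\langle\theta_n\rangle\to 0$ for every $i$ as well, and the convergence of the average sequence to a connected component of $\cL$ is the claimed conclusion.

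The main obstacle I expect is step~(i) of the reduction—controlling the gap between the actual disagreement iterate $\phi_{n-1}$ and the ``frozen'' invariant mean $m^{(1)}_{\un\otimes x_{n-1}}$. This requires a careful telescoping of the linear recursion~(\ref{eq:thOrth}) with time-varying, $\theta$-dependent (hence random) coefficients $\cJ_\bot\cW_n$, using the uniform contraction from Assumption~\ref{hyp:model} to get geometric decay of initial conditions, and then absorbing the drift of the coefficients via the Hölder bound~(\ref{eq:lipschitz2}) on $\W_\theta$, the slow variation of $x_n$ (increments $O(\gamma_n)$), and the step-ratio control $\alpha_n\to 1$ together with $\sum|\gamma_n-\gamma_{n-1}|<\infty$. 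Making the error here summable (in the $L^1$/a.s.\ sense on the stability event, after localization on the compacts $\cK_m$) against $\gamma_n$—and checking that the exponent $\lambda_\mu$ and the condition $\lambda\le\lambda_\mu$ are exactly what is needed—is the technical heart of the argument; everything else is a routine application of the cited stochastic-approximation machinery. Most of these estimates are deferred to Appendix~\ref{app:Phi} (Propositions~\ref{prop:momentPITheta} and~\ref{prop:Regularity}), so in the main proof I would state the decomposition, cite those appendix results for the quantitative bounds, and then quote the abstract convergence theorem.
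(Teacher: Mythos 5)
Your proposal is correct and follows essentially the same route as the paper: decompose $\la\theta_{n+1}\ra-\la\theta_n\ra$ into $\gamma_{n+1}h(\la\theta_n\ra)$ plus a martingale-difference term plus remainders, control the replacement of $\theta_n$ by $\cJ\theta_n$ via the H\"older bounds and Lemma~\ref{lem:agreement} (using $\lambda\le\lambda_\mu$ for summability), handle the gap $\phi_n-m^{(1)}_{\theta_n}(\alpha_{n+1})$ via the Poisson-equation machinery of Appendix~\ref{app:Phi} (your ``telescoping with geometric forgetting'' is exactly the $e,c,s,t$ decomposition the paper derives from $f_{\alpha,\theta}-P_{\alpha,\theta}f_{\alpha,\theta}$), localize on the compacts from Assumption~\ref{hyp:stability}, and conclude by an abstract stochastic-approximation theorem (the paper cites \cite[Theorem 2]{andrieu:2005}; your Kushner--Clark/Delyon-type statement plays the same role given Assumption~\ref{hyp:lyapunov}).
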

The proof of Proposition~\ref{prop:convergence:averagesequence} is given in
Appendix~\ref{sec:proof:cvg}.  It consists in verifying the assumptions of
\cite[Theorem 2]{andrieu:2005}.

\subsection{Main Convergence Result}
As a trivial consequence of Propositions~\ref{prop:agreement} and
\ref{prop:convergence:averagesequence}, we have
 \begin{theo}
   Let Assumptions~\ref{hyp:step}, \ref{hyp:stability}, \ref{hyp:mutheta},
   \ref{hyp:model}, \ref{hyp:lipschitz} and \ref{hyp:lyapunov} hold true.
   Assume in addition that $\lambda \leq \lambda_\mu$ where $\lambda,
   \lambda_\mu$ are resp. given by Assumption~\ref{hyp:step} and
   \ref{hyp:lipschitz}.  The following holds with probability one:
   \begin{enumerate}
   \item The algorithm converges to a consensus i.e., $\lim_{n\to\infty} \cJo\thn = 0$;
   \item $\theta_{n,1}$ converges to a connected
     component of $\cL$. 
   \end{enumerate}
 \label{the:cv}
 \end{theo}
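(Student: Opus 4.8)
The statement is, as announced, an immediate corollary of Propositions~\ref{prop:agreement} and~\ref{prop:convergence:averagesequence}; the plan is therefore to check that their hypotheses are subsumed by those of the theorem and then to combine the two conclusions through the orthogonal decomposition $\theta_n = \un\otimes\athn + \cJo\thn$. First I would record that the assumptions of the theorem (Assumptions~\ref{hyp:step}, \ref{hyp:stability}, \ref{hyp:mutheta}, \ref{hyp:model}, \ref{hyp:lipschitz}, \ref{hyp:lyapunov} together with $\lambda\leq\lambda_\mu$) contain in particular Assumptions~\ref{hyp:step}-\ref{hyp:step:item1}), \ref{hyp:step}-\ref{hyp:step:item2}), \ref{hyp:stability}, \ref{hyp:mutheta} and~\ref{hyp:model}, which are exactly the hypotheses of Proposition~\ref{prop:agreement}, and that they contain all the hypotheses of Proposition~\ref{prop:convergence:averagesequence}. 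Hence both propositions apply, and there is a single event $\Omega_0$ of probability one on which both conclusions hold simultaneously; I would work on $\Omega_0$ for the rest of the argument.

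Next, on $\Omega_0$, Proposition~\ref{prop:agreement} gives $\lim_n \cJo\thn = 0$, which is precisely item~1. For item~2, I would write the $i$-th $\bR^d$-block of $\theta_n$ as $\theta_{n,i} = \athn + [\cJo\thn]_i$, where $[\,\cdot\,]_i$ extracts the $i$-th block; this is just the identity $\cJ x = \un\otimes\la x\ra$ recalled in Section~\ref{sec:CVanalysis}, applied to $x=\theta_n$. Since $|[\cJo\thn]_1| \leq |\cJo\thn| \to 0$ on $\Omega_0$, the sequences $(\theta_{n,1})_n$ and $(\athn)_n$ are asymptotically equivalent. Proposition~\ref{prop:convergence:averagesequence} asserts that on $\Omega_0$ the sequence $(\athn)_n$ converges to a connected component $\mathcal{C}$ of $\cL$, i.e. $\inf_{y\in\mathcal{C}}|\athn - y|\to 0$; the triangle inequality $\inf_{y\in\mathcal{C}}|\theta_{n,1}-y|\leq \inf_{y\in\mathcal{C}}|\athn-y| + |[\cJo\thn]_1|$ then shows that $(\theta_{n,1})_n$ converges to the same $\mathcal{C}$, which is item~2. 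The same argument applies verbatim to every $\theta_{n,i}$, consistently with item~1.

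I expect no genuine obstacle here: all the analytic work is already contained in Propositions~\ref{prop:agreement} and~\ref{prop:convergence:averagesequence} (themselves resting on Lemma~\ref{lem:agreement} and on~\cite[Theorem~2]{andrieu:2005}), so the step above is essentially bookkeeping. The only point I would take care to state explicitly is that ``converging to a connected component of $\cL$'' is stable under adding the asymptotically negligible disagreement term $[\cJo\thn]_1$; this uses crucially that $\mathcal{C}$ is a fixed subset of $\bR^d$, independent of $n$, which is exactly why Proposition~\ref{prop:convergence:averagesequence} is phrased as convergence to a single connected component rather than merely to $\cL$.
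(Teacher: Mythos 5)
Your proposal is correct and follows exactly the paper's route: the paper itself presents Theorem~\ref{the:cv} as a trivial consequence of Propositions~\ref{prop:agreement} and~\ref{prop:convergence:averagesequence}, and your bookkeeping (common almost-sure event, decomposition $\theta_{n,1}=\athn+[\cJo\thn]_1$, triangle inequality to transfer convergence to the connected component) is precisely the intended argument.
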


\section{Convergence Rate}
\label{sec:CvgRate}
\subsection{Main Result}
We derive the rate of convergence of the
sequence $\{\theta_n, n\geq 0\}$ to $\un\otimes \theta_\star $ for some $\theta_\star$ satisfying 
\begin{assum}\label{hyp:TCL:thetastar}
  $\theta_\star$ is a root of $h$ i.e., $h(\theta_\star) = 0$. Moreover, 
  $h$ is twice continuously differentiable in a neighborhood of
  $\theta_\star$.  The Jacobian $\nabla h(\theta_\star)$ is a Hurwitz matrix.
  Denote by $-L$, $L>0$, the largest real part of its eigenvalues.
\end{assum}
The moment conditions on the conditional distributions of the observations
$Y_n$ and the contraction assumption on the network have to be strengthened as
follows: 
\begin{assum}\label{hyp:TCL:momentY}
  There exists $\tau \in (0,2)$ such that for any compact set $\cK \subset \bR^{dN}$,
  one has $\sup_{\theta \in \cK} \int |y|^{2+\tau} \ d \mu_\theta(y,w) < \infty$.
\end{assum}
\begin{assum}\label{hyp:TCL:model}
  Let $\tau$ be given by Assumption~\ref{hyp:TCL:momentY}. For any compact set
  $\cK \subset \bR^{dN}$, there exists $\tilde \rho_\cK \in (0,1)$ such that
  for any $\phi \in \bR^{dN}$ 
\[
\sup_{\theta \in \cK} \int |((J_\bot w)\otimes I_d)|^{2 +\tau} d
\mu_\theta(y,w) \leq \tilde \rho_\cK \ |\phi|^{2+\tau} \ .
\]
\end{assum}
We also go further in the regularity-in-$\theta$ of the integrals w.r.t.
$\mu_\theta$. More precisely
\begin{assum}\label{hyp:TCL:ContinuiteMuTheta}
  There exists $\lambda_\mu \in (1/2,1]$ and for any compact set $\cK \subset
  \bR^{dN}$ there exists a constant $C$ such that
  \begin{enumerate}  
  \item \label{hyp:TCL:ContinuiteMuTheta:item2} for any $\theta, \theta' \in
    \cK$, $\left| \la z_\theta \ra - \la z_{\theta'} \ra \right| \leq C \left|\theta- \theta'
\right|^{\lambda_\mu}$.
\item \label{hyp:TCL:ContinuiteMuTheta:item2bis} Set $\mathcal{Q}_A(x,y,w)
  \eqdef (x+y)\T (w \otimes I_d)\T \cJ_\bot A \cJ_\bot (w \otimes I_d) (x+y)$
  for some $dN \times dN$ matrix $A$. For any $\theta, \theta' \in \cK$, $x \in
  \bR^{dN}$ and any matrix $A$ such that $\|A\| \leq 1$,
  \begin{multline*}
    \left| \int \mathcal{Q}_A(x,y,w) d\mu_\theta(y,w) - \mathcal{Q}_A(x,y,w)
      d\mu_{\theta'}(y,w) \ \right| \\
    \leq C \, \left|\theta - \theta'\right|^{\lambda_\mu} \ (1+|x|^2) \ .
  \end{multline*}
 \end{enumerate}

\end{assum}
We finally have to strengthen the conditions on the step-size sequence.
\begin{assum}\label{hyp:TCL:pas}   Let $\tau$ (resp. $\lambda_\mu$) be given by Assumption~\ref{hyp:TCL:momentY} (resp. Assumption~\ref{hyp:TCL:ContinuiteMuTheta}).  As $n \to \infty$, $\gamma_n \sim \gamma_\star/n^a$ for some  $a \in ((1+\lambda_\mu)^{-1} \vee (1+\tau/2)^{-1}; 1]$ and $\gamma_\star >0$. In addition, if $a =1$ then
  $\gamma_\star > 1/(2L)$ where $L$ is given by
  Assumption~\ref{hyp:TCL:thetastar}.
\end{assum}

Define $m^{(1)}_\star  \eqdef (I_{dN}- \cJo\W_{\un\otimes\theta_\star})^{-1} \cJo z_{\un\otimes\theta_\star}$ and
$ m^{(2)}_\star \eqdef \left(I_{d^2N^2}- \Phi_\star\right)^{-1}\zeta_\star$
where $z_\theta$ is defined in~(\ref{eq:upsilon}), where 
\begin{eqnarray*}
  \Phi_\star &\eqdef& \int T(w)\,d\mu_{\un\otimes\theta_\star}(y,w) \\
\zeta_\star &\eqdef& \int T(w) \vect\left(yy^T+2m^{(1)}_\star y^T \right)\,d\mu_{\un\otimes\theta_\star}(y,w)
\end{eqnarray*}
and where we used the notation $T(w)\eqdef ((\Jo w)\otimes I_d)\otimes ((\Jo
w)\otimes I_d)$. As will be seen in the proofs, $m^{(1)}_\star$ and $m^{(2)}_\star$ represent
the asymptotic first order moment and (vectorized) second order moment of the r.v. $\phi_n$ defined by~(\ref{eq:definition:phin}).
Define also $R_\star(w)\eqdef(w\otimes I_d)-\W_{\un\otimes\theta_\star}$
and $\upsilon_\star(y,w) \eqdef (w\otimes I_d)y- z_{\un\otimes\theta_\star}$.
Finally, define
\begin{align*}
  \A_\star & \eqdef \left(\frac{\un^T}N \otimes I_d\right) (I_{dN}+\W_{\un\otimes\theta_\star}(I_{dN}-\cJo\W_{\un\otimes\theta_\star})^{-1} \cJo) \\
  \cR_\star & \eqdef \int  (R_\star(w)\otimes R_\star(w))\ d\mu_{\un\otimes\theta_\star}(y, w) \\
  \cT_\star & \eqdef  \int (\upsilon_\star(y,w)\otimes R_\star(w)) d\mu_{\un\otimes\theta_\star}(y, w) \\
  \cS_\star & \eqdef  \int \vect (\upsilon_\star(y,w)\upsilon_\star(y,w)^T) d\mu_{\un\otimes\theta_\star}(y, w)\,.
\end{align*}
We establish in Section~\ref{sec:proof:TCL} the following result.
\begin{theo}\label{theo:TCL}
  Let Assumption~\ref{hyp:mutheta}-\ref{hyp:mutheta:item1}),
  Assumption~\ref{hyp:lipschitz}, Assumption~\ref{hyp:model} and
  Assumption~\ref{hyp:TCL:thetastar} to Assumption~\ref{hyp:TCL:pas} hold true.
  Let $U_\star$ be the positive-definite matrix given by
 $$
 \vect U_\star = (\A_\star\otimes \A_\star) (\cR_\star\,m^{(2)}_\star +2\cT_\star m^{(1)}_\star +\cS_\star)
 $$
Then conditionally to the event $\{\lim_n \theta_n = \un \otimes \theta_\star \}$, the
sequence $\{\gamma_n^{-1/2} (\la \theta_n \ra - \theta_\star), n\geq 0\}$ converges
in distribution to a zero mean  Gaussian distribution with
covariance matrix $\mathsf{V}$ where $\mathsf{V}$ is the unique positive-definite
matrix satisfying 
\vspace{-0.2cm}
\small

\begin{align*}
&\mathsf{V} \nabla h(\theta_\star)^T  +  \nabla h(\theta_\star) \mathsf{V} =  - U_\star \quad  \text{if $a<1$,} \\
&\mathsf{V} \left( I_{d} + 2 \gamma_\star\nabla h(\theta_\star)\right)^T  + \left( I_{d} + 2 \gamma_\star\nabla h(\theta_\star)\right)  \mathsf{V} =  - 2 \gamma_\star  U_\star \quad  \text{if $a=1$.}
\end{align*}

\end{theo}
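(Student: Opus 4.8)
The plan is to establish a central limit theorem for the average sequence $\la \theta_n \ra$ by reducing the problem to a standard CLT for Robbins-Monro type recursions with Markov-modulated dynamics, using the decomposition $\theta_n = \un \otimes \la \theta_n \ra + \cJ_\bot \theta_n$ together with the rescaled disagreement vector $\phi_n = \gamma_{n+1}^{-1} \cJ_\bot \theta_n$ introduced in~(\ref{eq:definition:phin}). Recall from Lemma~\ref{lem:agreement} that $\phi_n$ has bounded second moments (on the stability event); the first step is to sharpen this by showing that $\phi_n$ actually has $(2+\tau)$-bounded moments under Assumption~\ref{hyp:TCL:momentY}-\ref{hyp:TCL:model}, and more importantly that the pair $(\la\theta_n\ra, \phi_n)$, or rather the first and (vectorized) second moments of $\phi_n$ conditionally on the past, converge: on the event $\{\lim_n \theta_n = \un\otimes\theta_\star\}$ the conditional law of $\phi_n$ given $\cF_{n-1}$ is driven by the recursion~(\ref{eq:thOrth}), $\phi_n = \alpha_n \cJ_\bot \cW_n(\phi_{n-1}+Y_n)$, whose frozen-$\theta_\star$ version is a contracting affine Markov chain with invariant distribution $\pi_{1,\un\otimes\theta_\star}$ (and its tensor-square analogue $\pi_{2,\un\otimes\theta_\star}$). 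This is where $m^{(1)}_\star$ and $m^{(2)}_\star$ enter: they are exactly the mean and vectorized second moment of that invariant law, and the regularity hypotheses~\ref{hyp:TCL:ContinuiteMuTheta} guarantee that the ``slowly varying'' perturbation of $\theta$ around $\theta_\star$ contributes only a negligible term.

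Next I would rewrite the recursion for $\la \theta_n \ra$ from~(\ref{eq:SAmarkovnoise}), $\la \theta_n \ra = \la \theta_{n-1}\ra + \gamma_n \la \cW_n(Y_n + \phi_{n-1})\ra$, in the canonical stochastic-approximation form $\la\theta_n\ra = \la\theta_{n-1}\ra + \gamma_n(h(\la\theta_{n-1}\ra) + e_n + r_n)$, where $h$ is the mean field~(\ref{eq:meanfield}), $e_n$ is a martingale-increment noise term, and $r_n$ is a remainder collecting (i) the gap between the conditional expectation of $\la\cW_n(Y_n+\phi_{n-1})\ra$ and $h(\la\theta_{n-1}\ra)$ — controlled by the Lipschitz bounds in Assumption~\ref{hyp:lipschitz} and~\ref{hyp:TCL:ContinuiteMuTheta} plus the decay $|\cJ_\bot\theta_n| = O(\gamma_{n+1})$ from Lemma~\ref{lem:agreement} — and (ii) the difference between $\phi_{n-1}$ and its stationary mean $m^{(1)}_\star$, which is a Poisson-equation correction that telescopes. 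The asymptotic covariance of $e_n$ is the crux: one computes $\E[\la\cW_n(Y_n+\phi_{n-1})\ra \la\cW_n(Y_n+\phi_{n-1})\ra^T \mid \cF_{n-1}]$ at $\theta = \theta_\star$, expands the quadratic form using the definitions of $R_\star(w)$, $\upsilon_\star(y,w)$, $\cR_\star$, $\cT_\star$, $\cS_\star$, and takes the limit of the $\phi_{n-1}$-moments to $m^{(1)}_\star, m^{(2)}_\star$; after pulling out the averaging operator $\A_\star$ (which accounts for the interplay between $\la\cdot\ra$ and the disagreement dynamics), this yields exactly $U_\star$ via $\vect U_\star = (\A_\star\otimes\A_\star)(\cR_\star m^{(2)}_\star + 2\cT_\star m^{(1)}_\star + \cS_\star)$.

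With the recursion in standard form, the last step is to invoke a classical CLT for stochastic approximation — e.g.\ the Kushner-Clark / Pelletier / Delyon type theorem — conditionally on the convergence event $\{\lim_n\theta_n = \un\otimes\theta_\star\}$. The hypotheses to verify are: $h$ is twice differentiable near $\theta_\star$ with Hurwitz Jacobian $\nabla h(\theta_\star)$ having eigenvalue real parts $\le -L$ (Assumption~\ref{hyp:TCL:thetastar}); the step size satisfies $\gamma_n \sim \gamma_\star/n^a$ with the stability margin $\gamma_\star > 1/(2L)$ when $a = 1$ (Assumption~\ref{hyp:TCL:pas}); the martingale noise $e_n$ has a limiting conditional covariance $U_\star$ and satisfies a Lindeberg-type condition, which follows from the $(2+\tau)$-moment bounds on $Y_n$ and $\phi_n$; and the remainder $r_n$ is $o(1)$ in the appropriate sense, which is precisely where the exponents must satisfy $a > (1+\lambda_\mu)^{-1} \vee (1+\tau/2)^{-1}$ so that $\gamma_{n}^{\lambda_\mu} = o(\gamma_n^{1/2})$ and the $(2+\tau)$-moment remainders vanish. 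The conclusion is that $\gamma_n^{-1/2}(\la\theta_n\ra - \theta_\star)$ is asymptotically Gaussian with covariance $\mathsf V$ solving the stated Lyapunov equation (the $a=1$ case replacing $\nabla h(\theta_\star)$ by $I_d/(2\gamma_\star) + \nabla h(\theta_\star)$, rescaled). The main obstacle I anticipate is the tightness and convergence of the moments of $\phi_n$ along the trajectory: because $W_n$ may depend on $\theta_{n-1}$ and on $Y_n$, the disagreement recursion is genuinely $\theta$-modulated and not a frozen Markov chain, so one must carefully propagate the $|\cJ_\bot\theta_{n-1}| = O(\gamma_n)$ bound through the affine recursion~(\ref{eq:thOrth}) and use Assumption~\ref{hyp:TCL:ContinuiteMuTheta}-\ref{hyp:TCL:ContinuiteMuTheta:item2bis}) to show the quadratic functionals of $\phi_n$ converge to their stationary values at $\theta_\star$ fast enough that the error is summable against $\gamma_n^{-1/2}$ — this is the delicate quantitative heart of the proof, the rest being a bookkeeping reduction to known CLT machinery.
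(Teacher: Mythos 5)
Your overall architecture coincides with the paper's: strengthen Lemma~\ref{lem:agreement} to $(2+\tau)$-moments (this is Lemma~\ref{lem:TCL:agreementTau}), rewrite~(\ref{eq:SAmarkovnoise}) as a Robbins--Monro recursion with a martingale noise plus a remainder, identify $m^{(1)}_\star$ and $m^{(2)}_\star$ as the stationary first and second moments of the frozen disagreement chain, and conclude by a CLT for stochastic approximation conditional on the convergence event (the paper invokes \cite[Theorem 2.1]{fort:2013}). There is, however, one point where your bookkeeping, taken literally, produces the wrong asymptotic covariance.

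You place the correction for $\phi_{n-1}-m^{(1)}_\star$ entirely in the remainder $r_n$, calling it ``a Poisson-equation correction that telescopes,'' and you then compute the limiting covariance from the conditional second moment of the centered $\la\cW_n(Y_n+\phi_{n-1})\ra$ alone. That centered quantity equals $\left(\un^T/N\otimes I_d\right)\left(R_{\star}(W_n)\phi_{n-1}+\upsilon_\star(Y_n,W_n)\right)$ at $\theta=\un\otimes\theta_\star$, so its conditional covariance, averaged against $\pi_{1,\un\otimes\theta_\star}$, carries the operator $\un^T/N\otimes I_d$ and \emph{not} $\A_\star$. The Poisson decomposition $\phi_n-m^{(1)}_{\theta_n}(\alpha_{n+1})=f_n(\phi_n)-P_nf_n(\phi_n)$ does not purely telescope: only the piece $P_{n-1}f_{n-1}(\phi_n)-P_nf_n(\phi_{n+1})$ (the term $c_{n+1}$ of Section~\ref{sec:proof:decomposition}) telescopes, while the increment $f_n(\phi_{n+1})-P_nf_n(\phi_n)$ is a martingale difference of the same order as the primary noise and must be folded into the noise term, as the paper does in its definition of $E_{n+1}$. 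Because $\phi_{n+1}$ and $(Y_{n+1},W_{n+1})$ are driven by the same randomness, the two martingale pieces are correlated, and their sum is exactly $\xi_{\alpha,\theta,x}(y,w)=\A_{\alpha,\theta}\left((w\otimes I_d-\W_\theta)x+((w\otimes I_d)y-z_\theta)\right)$ as in (\ref{eq:Expression:xi}); this is the only place the factor $\W_{\cJ\theta}(I_{dN}-\cJ_\bot\W_\theta)^{-1}\cJ_\bot$ inside $\A_\star$ can come from. With your split that factor is lost, and you would recover the stated $U_\star$ only in the doubly stochastic situation of Corollary~\ref{coro:double_case}, where the two operators coincide. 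The remaining steps of your plan (negligibility of the terms $u_n,v_n,z_n,s_n,t_n$ under $a>(1+\lambda_\mu)^{-1}\vee(1+\tau/2)^{-1}$, the Lindeberg condition via the $(2+\tau)$-moments, and the Lyapunov equation for $\mathsf V$ in the two regimes $a<1$ and $a=1$) match the paper's proof.
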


\normalsize
\vspace{0.1cm}

\subsection{A Special Case: Doubly-Stochastic Matrices}
\label{sec:TCL:doublySto}
In this paragraph, let us investigate the special case when $(W_n)_n$ are
$N\times N$ doubly-stochastic matrices.  Note that in this case,
(\ref{eq:SAmarkovnoise}) gets into $\la \theta_n \ra = \la \theta_{n-1} \ra +
\gamma_n \la Y_n \ra$ and the mean field function $h$ is equal to
$h(\vartheta)=\int \la y \ra d\mu_{\un \otimes \vartheta}(y,w)$. 
Since $W_n$ is column-stochastic, $\int w\,d\mu_{\un\otimes\theta_\star}(y,w)$
is column-stochastic, and we have $\A_\star = \frac{\un^T}N \otimes I_d$.
Then, it is not difficult to check that $\A_\star R_\star(w) = 0$, which
implies that $\cR_\star=\cT_\star=0$.  
This yields the following corollary
\begin{coro}
\label{coro:double_case}
  In addition to the assumptions of Theorem~\ref{theo:TCL}, assume that
  $(W_n)_n$ are $N\times N$ doubly-stochastic matrices and set $\bar y_\star=\int y\, d\mu_{\un\otimes\theta_\star}(y, w) $. Then 
$$U_\star = \int \la y-\bar y_\star\ra\la y-\bar y_{\star}\ra^T d\mu_{\un\otimes\theta_\star}(y, w) \,.$$
\end{coro}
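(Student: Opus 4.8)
The plan is to specialize the covariance formula of Theorem~\ref{theo:TCL} to the doubly stochastic case by exploiting the structural simplifications recalled just above, namely $\A_\star = (\un^T/N)\otimes I_d$ and $\A_\star R_\star(w) = 0$ for $\mu_{\un\otimes\theta_\star}$-almost every $w$. Starting from $\vect U_\star = (\A_\star\otimes\A_\star)(\cR_\star m^{(2)}_\star + 2\cT_\star m^{(1)}_\star + \cS_\star)$, I would first use the mixed-product property of the Kronecker product to push $\A_\star\otimes\A_\star$ inside the integrals defining $\cR_\star$ and $\cT_\star$: since $(\A_\star\otimes\A_\star)(R_\star(w)\otimes R_\star(w)) = (\A_\star R_\star(w))\otimes(\A_\star R_\star(w)) = 0$ and likewise $(\A_\star\otimes\A_\star)(\upsilon_\star(y,w)\otimes R_\star(w)) = (\A_\star\upsilon_\star(y,w))\otimes(\A_\star R_\star(w)) = 0$, the first two terms vanish and we are left with $\vect U_\star = (\A_\star\otimes\A_\star)\,\cS_\star$.

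Next I would unfold $\cS_\star = \int \vect(\upsilon_\star(y,w)\upsilon_\star(y,w)^T)\,d\mu_{\un\otimes\theta_\star}(y,w)$ and apply the identity $(A\otimes A)\vect(M) = \vect(AMA^T)$ under the integral sign, which gives
\[
U_\star = \int \left(\A_\star\upsilon_\star(y,w)\right)\left(\A_\star\upsilon_\star(y,w)\right)^T d\mu_{\un\otimes\theta_\star}(y,w).
\]
It then remains to identify $\A_\star\upsilon_\star(y,w)$, where $\upsilon_\star(y,w) = (w\otimes I_d)y - z_{\un\otimes\theta_\star}$. Here I would use column-stochasticity: $\A_\star(w\otimes I_d) = ((\un^T w)/N)\otimes I_d = (\un^T/N)\otimes I_d$, so $\A_\star(w\otimes I_d)y = \langle y\rangle$; averaging this over $\mu_{\un\otimes\theta_\star}$ and using the definition of $z_{\un\otimes\theta_\star}$ together with the linearity of $\langle\cdot\rangle$ yields $\A_\star z_{\un\otimes\theta_\star} = \langle\bar y_\star\rangle$. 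Hence $\A_\star\upsilon_\star(y,w) = \langle y\rangle - \langle\bar y_\star\rangle = \langle y-\bar y_\star\rangle$, and substituting into the display above gives exactly $U_\star = \int \langle y-\bar y_\star\rangle\langle y-\bar y_\star\rangle^T\,d\mu_{\un\otimes\theta_\star}(y,w)$.

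The derivation is purely computational, so I do not foresee a genuine obstacle; the only step that calls for a little care is the Kronecker/$\vect$ bookkeeping --- in particular noting that $\cS_\star$ is itself a $\vect$ of a matrix so that $(\A_\star\otimes\A_\star)\cS_\star = \vect(\A_\star(\cdot)\A_\star^T)$ can be applied termwise inside the integral --- and invoking column-stochasticity at precisely the point where $\A_\star(w\otimes I_d)$ collapses to the network-averaging operator $(\un^T/N)\otimes I_d$, which is what makes $U_\star$ independent of the particular doubly stochastic matrices used.
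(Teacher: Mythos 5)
Your argument is correct and is essentially the paper's own proof: the paper likewise notes that double stochasticity forces $\A_\star = \frac{\un^T}{N}\otimes I_d$ and $\A_\star R_\star(w)=0$, so that only the $\cS_\star$ term survives in $\vect U_\star = (\A_\star\otimes\A_\star)(\cR_\star m^{(2)}_\star + 2\cT_\star m^{(1)}_\star + \cS_\star)$, and then identifies $\A_\star \upsilon_\star(y,w) = \la y - \bar y_\star\ra$ via column-stochasticity. You merely spell out the Kronecker/$\vect$ bookkeeping that the paper leaves implicit, which is a faithful (and slightly more careful) rendering of the same computation.
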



\section{Concluding remarks}
\label{sec:conc_remarks}

In this paragraph, we informally draw some general conclusions of our study.
We assimilate the communication protocol to the selection of the sequence $W_n$,
which we assume i.i.d. in this paragraph for simplicity.
We say that a protocol is doubly stochastic if $W_n$ is doubly stochastic for each $n$.
We say that a protocol is doubly stochastic \emph{in average} if $\EE{W_n}$ is doubly stochastic for each $n$.

\begin{enumerate}
\item {\bf Consensus is fast.}
Theorem~\ref{theo:TCL} states that the average estimation error
converges to zero at rate $\sqrt{\gamma_n}$.  This result was actually
expected, as $\sqrt{\gamma_n}$ is the well-known convergence rate of
standard stochastic approximation algorithms. 

On the other hand, Lemma~\ref{lem:agreement} suggests that the disagreement
vector $\cJo\theta_n$ goes to zero at rate $\gamma_n$ that is, one order of
magnitude faster.  Asymptotically, the fluctuations of the normalized
estimation error $(\theta_n-\un\otimes \theta_\star)/\sqrt{\gamma_n}$ are fully
supported by the consensus space.

This remark also suggests to analyze non-stationary communication protocols, for which 
the number of transmissions per unit of time decreases with $n$. This problem is addressed in~\cite{bianchi:fort:hachem:2012}.

\item {\bf Non-doubly stochastic protocols generally influence the limit points.} 
This issue is discussed in Section~\ref{sec:discussion}.
The choice of the matrices $W_n$ is likely to have an impact on the set of limit points of the algorithms.
This may be inconvenient especially in distributed optimization tasks.

\item {\bf Protocols that are doubly stochastic "in average'' all lead to the same limit points}. 
In the framework of distributed optimization, the latter set of limit points precisely coincides with the sought
critical points of the minimization problem. It means that non-doubly stochastic protocols can be used
provided that they are doubly stochastic in average.

\item {\bf Asymptotically, doubly stochastic protocols perform as well as a
    centralized algorithm.}  By Corollary~1, if $W_n$ is chosen to be doubly
  stochastic for all $n$, the asympotic error covariance characterized in
  Theorem~\ref{theo:TCL} does not depend on the specific choice of $W_n$. In distributed
  optimization, the asymptotic performance is identical to the performance that
  would have been obtained by replacing $W_n$ by the orthogonal projector $J$,
  which would lead to the centralized update
$
\la\theta_n\ra = \la\theta_{n-1}\ra + \frac{\gamma_n}N\sum_{i=1}^N  Y_{n,i}\,.
$
On the opposite, protocols that are not doubly stochastic generally influence the asymptotic error covariance, \emph{even if they are doubly stochastic in average}.
\end{enumerate}

\section{Numerical Results}
We illustrate the convergence results obtained in Section~\ref{sec:res_do} and discussed in sections~\ref{sec:discussion} and \ref{sec:conc_remarks}. We depict a particular case of the distributed optimization problem described in Section~\ref{sec:dist_opt}. Consider a network of $N=5$
agents and for any $i=1,\dots,5$, we define a private cost function $f_i:\bR\to\bR$.
We address the following minimization problem:
\begin{align}
\label{eq:ex}
\min_{\theta \subset \bR} \, \sum_{i=1}^5 \frac{1}{2}(\theta - \alpha_i)^2
\end{align}
where $\alpha^T=(-3,5,5,1,-3)$. The minimizer of~\eqref{eq:ex} is $\theta_f = \la \alpha \ra =1$. The network is represented by an undirected graph $G=(V,E)$ with vertices $\{1,\dots,N\}$ and $6$ fixed edges $E$. The corresponding adjacency matrix is given by
$$
A = \begin{pmatrix}
  0 & 1 & 0 & 1 & 0 \\
  1 & 0 & 1 & 0 & 0 \\
  0& 1 & 0 & 1 & 1  \\
1 & 0 & 1 & 0 & 1 \\
0 & 0 & 1 & 1 & 0
 \end{pmatrix}.
$$
We choose $\theta_{0,i} =0$ for each agent $i$ and the step-size sequence of the form
$\gamma_n = 0.1/n^{0.7}$. Observations $Y_{n,i} $ are defined as
in~\eqref{eq:YnDistOpt}: $(\xi_{n,i})_{n,i}$ is an i.i.d. sequence with Gaussian
distribution ${\cal N}(0,\sigma^2)$ where $\sigma^2=1$. 

Figure~\ref{fig:theo1} illustrates the two results of Theorem~\ref{the:distOpt} according to different gossip matrices $(W_n)_n$. First, Figure~\ref{subfig:convergence} addresses the convergence of sequence $(\theta_{n,1})_{n \geq 0}$ as a function of~$n$ to show the influence of matrices $W_n$ to the limit points. In particular, the dashed line curve corresponds to the algorithm~(\ref{eq:tempupdate})-(\ref{eq:gossipIntro}) when $W_n$ is assumed to be fixed and deterministic
($W_n=W_1$ for all $n$); we select $W_1$ in such a way that each agent computes the average
of the temporary estimates in its neighborhood. This is equivalent to set $W_1
= (I_N+D)^{-1}(I_N+A)$, where
$D$ is the diagonal matrix containing the degrees, \emph{i.e.} $D(i,i) = \sum_{j=1}^NA(i,j)$ for each agent $i$. Note that $W_1$ is not doubly stochastic since $\un^T
W_1\neq \un^T$. Computing the left Perron eigenvector defined by Lemma~\ref{lem:irred} yields the minimizer of $V= \sum_i v_if_i$ being $\theta_V =v^T\alpha=1.24$. In that case, the sequence $(\theta_{n,1})_{n}$
converges to $\theta_\star=\theta_V$ instead of the desired $\theta_\star=\theta_f$. Figure~\ref{subfig:convergence} includes the trajectory of $\theta_{n,1}$ generated by Algorithm~\eqref{eq:weighted}-\eqref{eq:gossipIntro} with $W_1
= (I_N+D)^{-1}(I_N+A)$. As proposed in Section~\ref{rem:weighted} when introducing the weighted step size such $\gamma_nv_i^{-1}$ the sequence now converge to the sought value $\theta_f$. 

\begin{figure}[h!]
        \centering
        \begin{subfigure}{0.5\textwidth}
        \centering
                  \includegraphics[width=0.95\textwidth,height=5.5cm]{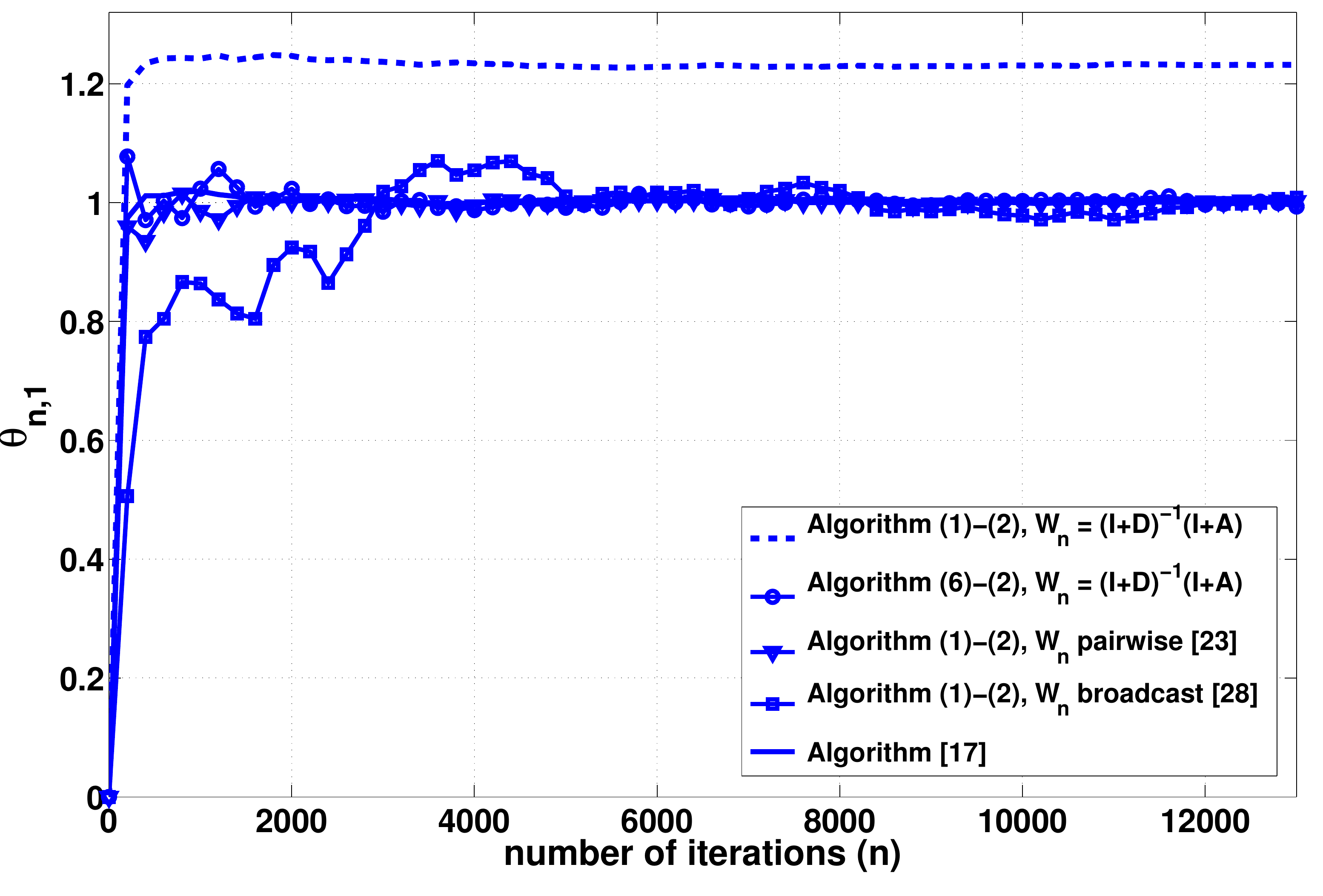}
                \caption{Trajectories of $\theta_{n,1}$ as a function of $n$.} 
                \label{subfig:convergence}
        \end{subfigure}%
        \\
        \vspace{0.1cm}
        \begin{subfigure}{0.5\textwidth}
        \centering
                 \includegraphics[width=0.95\textwidth,height=5.5cm]{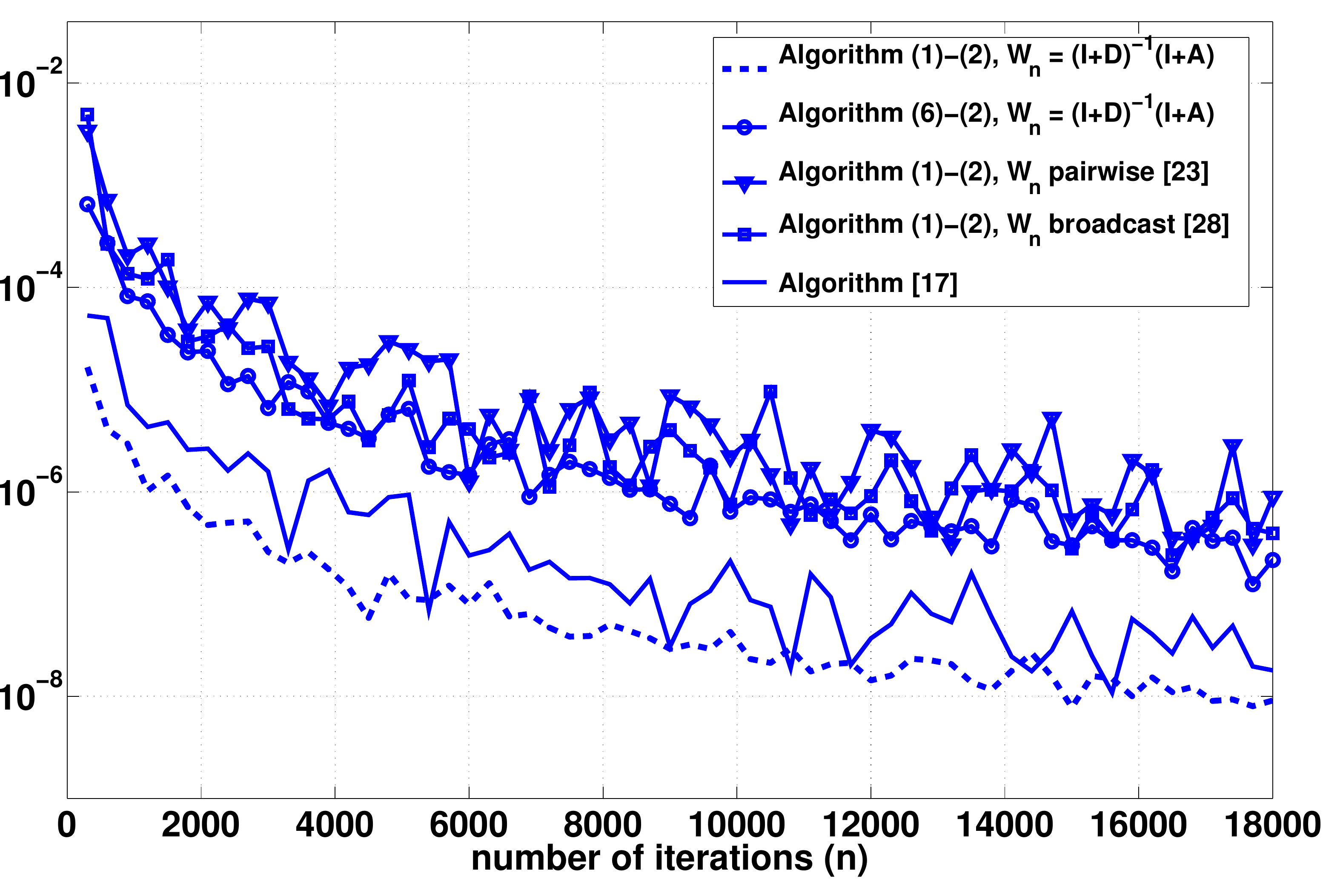}
                \caption{$\sqrt{\frac 1N\sum_{i=1}^N (\theta_{n,i} - \athn)^2}$ as a function of $n$.} 
                \label{subfig:consensus}
        \end{subfigure}
        \caption{Convergence result of Theorem~\ref{the:distOpt} according to different communication schemes for $(W_n)_n$.}
        \label{fig:theo1}
\end{figure}
Figure~\ref{subfig:convergence} also illustrates the convergence behavior of Scenario~2 where the limit point $\theta_\star$ of Algorithm~(\ref{eq:tempupdate})-(\ref{eq:gossipIntro}) corresponds with $\theta_f$. In that case, we consider two standard models for $W_n$, namely the pairwise gossip of~\cite{boyd:2006} and the broadcast gossip of~\cite{aysal:2009} (we set $\beta=\frac 12$). Finally, the plain line in Figure~\ref{subfig:convergence} shows the performance of the algorithm proposed by~\cite{nedic2013distributed} for distributed optimization which is based on a synchronous version of the push-sum model of \cite{kempe:dobra:gehrke:focs-2003}.


We conclude the illustration of Theorem~\ref{the:distOpt} by the results on the consensus convergence for the same examples of $W_n$ considered in Figure~\ref{subfig:convergence}. Thus, Figure~\ref{subfig:consensus} represents the norm of
the scaled disagreement vector as a function of~$n$. As expected from Theorem~\ref{the:distOpt}-\ref{theo:distOpt:item2}), consensus is asymptotically achieved independently of the limit point, \emph{i.e.} $\theta_f$ or $\theta_V$. Note that the synchronous models of $W_1$ and \cite{nedic2013distributed} require $N$ transmissions at each iteration $n$ whereas the gossip protocols of \cite{boyd:2006} and \cite{aysal:2009} only require two and one transmissions respectively due to their asynchronous nature. This may explain the gap between the curves in Figure~\ref{subfig:consensus} when regarding the convergence rate towards the consensus.

The result of Theorem~\ref{theo:TCL} is illustrated in Figure~\ref{fig:theo3} which leads to the concluding remark 4) of Section~\ref{sec:conc_remarks}. Figures~\ref{subfig:boxplot} and \ref{subfig:delta} display the asymptotic analysis of the normalized average error $ \gamma_n^{-1/2}(\la \theta_n\ra - \theta^\star)$. Indeed, once the convergence is achieved, the asymptotic distribution can be characterized by the closed form of the variance $U^\star \in \bR$. In this example, Theorem~\ref{theo:TCL} states that $ \gamma_n^{-1/2}(\la \theta_n\ra - \theta^\star)$ converges in distribution to a r.v. $\sim \mathcal N(0,\mathsf{V})$ where $\nabla h(\theta_\star)=-1$ and thus the variance is $\mathsf{V}=\frac{U_\star}{2}$. The first boxplot and the first histogram in Figure~\ref{fig:theo3} are related to the algorithm implemented in a centralized manner. We consider the distributed algorithm~(\ref{eq:tempupdate})-(\ref{eq:gossipIntro}) with different choices of $W_n$: the pairwise gossip of \cite{boyd:2006}, the broadcast gossip of \cite{aysal:2009} and the fixed $W_1$ defined by $(I_N+D)^{-1}(I_N+A)$. The normal distribution obtained in Theorem~\ref{theo:TCL} is coherent with the empirical results.

\begin{figure}[h!]
        \centering
         \begin{subfigure}{0.5\textwidth}
         \centering
                  \includegraphics[width=0.84\textwidth,height=4.8cm]{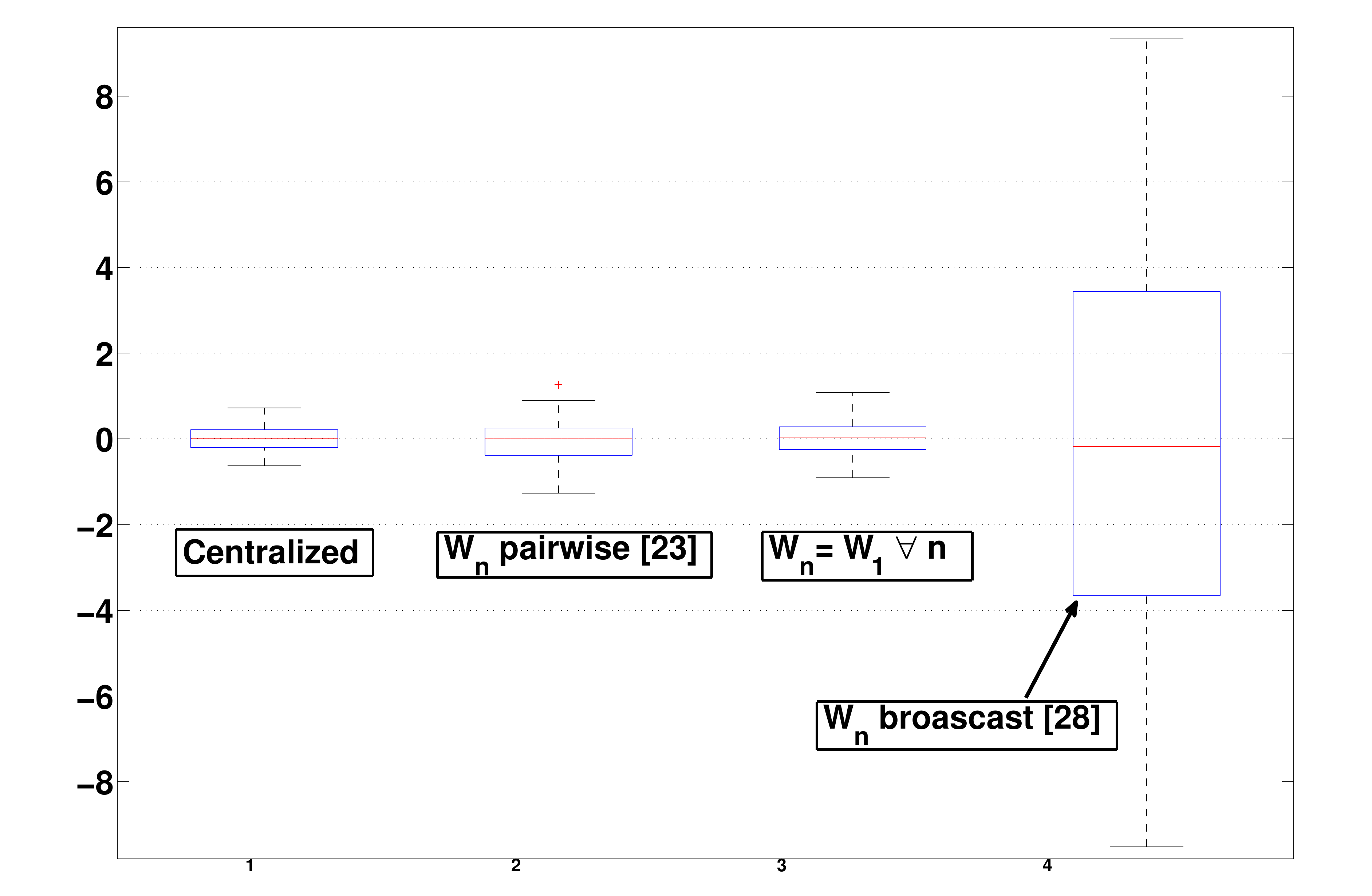}
                \caption{Boxplots of the normalized average error.} 
                \label{subfig:boxplot}
        \end{subfigure}%
        \\
          \vspace{0.1cm}
        \begin{subfigure}{0.5\textwidth}
        \centering
                 \includegraphics[width=\textwidth,height=5.2cm]{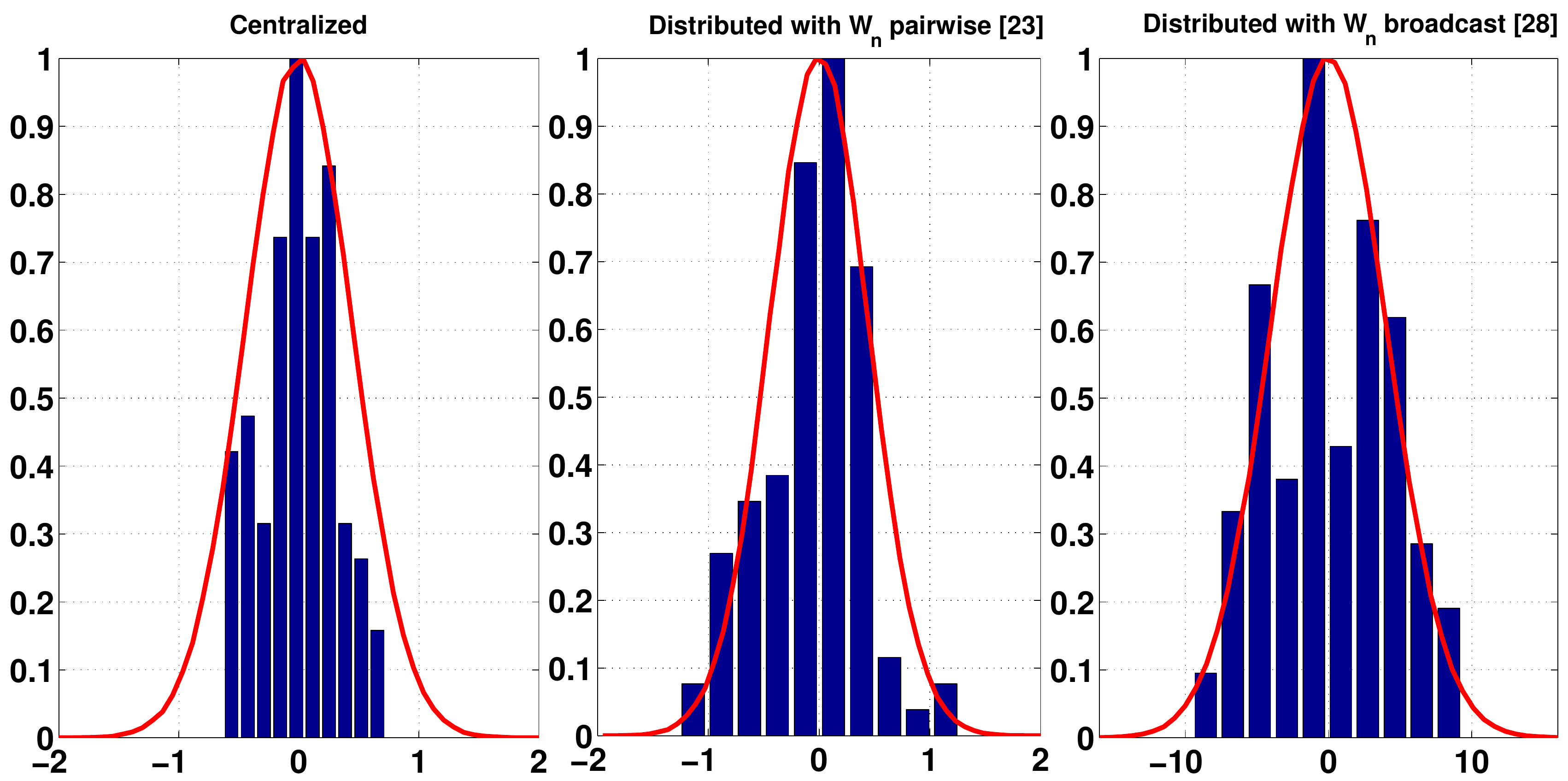}
                \caption{Empirical distribution (dark bars) \emph{versus} theoretical distribution given by Theorem~\ref{theo:TCL} (solid line).} 
                \label{subfig:delta}
        \end{subfigure}
        \caption{Asymptotic analysis of the normalized average error $\frac{1}{\sqrt{\gamma_n}}(\la \theta_n\ra - \theta^\star)$ of Algorithm~(\ref{eq:tempupdate})-(\ref{eq:gossipIntro}) according to different communication schemes for $(W_n)_n$ after $n=30000$ iterations and over $100$ independent Monte-Carlo runs.}
        \label{fig:theo3}
\end{figure}

\appendices

\section{Proof of Theorem~\ref{the:distOpt}}
\label{sec:proofCV-distOpt} 
We prove that the Assumptions~\ref{hyp:mutheta} to
  \ref{hyp:lyapunov} hold.  Then Theorem~\ref{the:distOpt} will follow from
  Theorem~\ref{the:cv}.  For any $\theta =(\theta_1, \dots, \theta_N) \in
  \bR^{dN}$ where $\theta_i \in \bR^d$, define the $\bR^{dN}$-valued function
  $g$ by
$
g(\theta) \eqdef (-\nabla f_1(\theta_1)^T,\dots,-\nabla f_N(\theta_N)^T)^T$.
Under Assumption~\ref{hyp:Wiid}-\ref{hyp:Wiid:item1}) and
Assumption~\ref{hyp:Wiid}-\ref{hyp:Wiid:item2}), for any Borel set $A
\times B$ of $\bR^{dN} \times \mathsf M_1$
$
\bP[(Y_{n+1},W_{n+1})\in A\times B \vert  \cF_n]=\bP[Y_{n+1}\in A \vert 
\cF_n]\bP[W_{n+1}\in B]$.
In addition, by Assumption~\ref{hyp:distOpt} and Eq. (\ref{eq:YnDistOpt})
$
\bP[Y_{n+1}\in A \vert \cF_n] = \int \indic_{A}\left(g(\theta_n) +z\right) \ 
d\nu_{\theta_n}(z) \, .
$
The above discussion provides the expression of $\mu_\theta$ in
Assumption~\ref{hyp:mutheta}-\ref{hyp:mutheta:item1}). In addition, under
Assumption~\ref{hyp:distOpt}-\ref{hyp:distOpt:item2}), for any compact set
$\cK$ of $\bR^{dN}$,
\small
\[
\sup_{\theta\in \cK}\int |y|^2d\mu_\theta(y,w)=\sup_{\theta\in
  \cK}\left(|g(\theta)|^2 + \int |z|^2d\nu_\theta(z)\right) < \infty
\]
\normalsize
which proves Assumption~\ref{hyp:mutheta}-\ref{hyp:mutheta:item2}). 
Assumption~\ref{hyp:model} easily follows from
Assumption~\ref{hyp:Wiid}-\ref{hyp:Wiid:item3}).   The regularity conditions of
Assumption~\ref{hyp:lipschitz} are satisfied with $\lambda_\mu = \delta$, where
$\delta$ is given by Assumption~\ref{hyp:distOpt:item1}.
 Observe indeed that the
left hand side of \eqref{eq:lipschitz2} is zero and \eqref{eq:lipschitz1} and
\eqref{eq:lipschitz3} are true as long as $(\nabla f_i)_{i}$ are locally
Hölder-continuous. Again, the expression of $\mu_\theta$ implies that
$  \W_\bth =   \bE\left[\cW_1\right]$.
Therefore, 
$ h(\vartheta) = \la \bE[\cW_1] \, g\left(\un \otimes \vartheta\right)\ra = -\sum_{i=1}^N v_i \nabla f_i(\vartheta)$
which completes the proof.

\section{Proof of  Lemma~\ref{lem:agreement}}
\label{app:proof:sectiondisagreement}
From~(\ref{eq:thOrth}), we compute $|\phi_n|^2 = \alpha_n^2(\phi_{n-1}
+Y_{n})^T\cW_n^T\cJ_\bot\cW_n (\phi_{n-1}+Y_{n})$. Using
Assumption~\ref{hyp:mutheta}-\ref{hyp:mutheta:item1}),
$\E\left[|\phi_n|^2|\cF_{n-1} \right]$ is equal to
\small
\[
\alpha_n^2 \ \int (\phi_{n-1}+ y)^T (w \otimes I_d) \cJ_\bot (w \otimes I_d)
(\phi_{n-1}+y) \ d\mu_{\theta_{n-1}}(y, w) \, .
\]
\normalsize
By Fubini Theorem and Assumption~\ref{hyp:model}, there exists $\rho_\cK \in
(0,1)$ such that for any $n\geq 1$, $\E\left[|\phi_n|^2|\cF_{n-1} \right]\leq
\alpha_n^2 \rho_\cK \int |\phi_{n-1}+y|^2 d\mu_{\theta_{n-1}}(y, w)$.  By
Assumption~\ref{hyp:mutheta}-\ref{hyp:mutheta:item2}), there exists a constant
$C$ such that for any $n \geq 1$ almost-surely
\[
\E\left[|\phi_n|^2|\cF_{n-1} \right] \indic_{\theta_{n-1} \in \cK}\leq \alpha_n^2 \rho_\cK \left(|\phi_{n-1}|^2 +
  2 |\phi_{n-1}| \sqrt{C}+ C \right) \, .
\]
Set $U_n \eqdef|\phi_n|^2\indic_{\bigcap_{j \leq n-1} \{\theta_j \in \cK \}}$.
Upon noting that $\indic_{\bigcap_{j \leq n-1} \{\theta_j \in \cK \}} \leq
\indic_{\bigcap_{j \leq n-2} \{\theta_j \in \cK \}}$, the previous inequality
implies $ \E[U_n]\leq \alpha_n^2 \rho_\cK \left( \E\left[U_{n-1} \right] + 2
  \sqrt{ \E[U_{n-1}]}\, \sqrt{C} + C \right)$.  Let $\delta \in (\rho_\cK, 1)$.
For any $n$ large enough (say $n \geq n_0$), $\alpha_n^2 \rho_\cK \leq 1-
\delta$ since $\lim_n \alpha_n=1$ under
Assumption~\ref{hyp:step}-\ref{hyp:step:item1}). There exist positive constants
$M,b$ such that for any $n \geq n_0$,
\begin{align*}
  \E[U_n] & \leq \left(1- \delta \right) \left( \E\left[U_{n-1} \right] + 2 \sqrt{
      \E[U_{n-1}]}\, \sqrt{C} + C \right)  \\
  & \leq \left(1- \frac{\delta}{2} \right) \E[U_{n-1}] + b\un_{\E[U_{n-1}] \leq
    M}\,.
\end{align*}
A trivial induction implies that $\E[U_n] \leq (1-\delta/2)^{n-n_0} \E[U_{n_0}]
+ 2b /\delta$, which concludes the proof.

\section{Preliminary results on the sequence
  $(\phi_n)_n$}
\label{app:Phi}
Due to the coupling of the sequences $(\athn)_n$ and $(\phi_n)_n$ (see
Eq.~(\ref{eq:SAmarkovnoise})), the asymptotic analysis of $(\athn)_n$ requires
a more detailed understanding of the behavior of $\phi_n$. Note from
Assumption~\ref{hyp:mutheta}-\ref{hyp:mutheta:item1}) and (\ref{eq:thOrth})
that $\{\phi_n, n\geq 0\}$ is a Markov chain w.r.t. the filtration $\{\F_n,
n\geq 0\}$ with a transition kernel controlled by $\{\alpha_n, \theta_{n}, n
\geq 0\}$ (see also (\ref{eq:kernel:phin}) below).

  Let us introduce some notations and definitions. If $(x,A)\mapsto P(x,A)$ is a probability
  transition kernel on $\bR^{dN}$, then for any bounded continuous function
  $f:\bR^{dN}\to \bR$, $Pf$ is the measurable function $x\mapsto \int f(y)
  P(x,dy)$\,.  If $\nu$ is a probability on $\bR^{dN}$, $\nu P$ is the
  probability on $\bR^{dN}$ given by $\nu P(A) = \int \nu(dx) \, P(x,A)$.  For
  $n \geq 0$, notation $P^n$ stands for the $n$-order iterated kernel
  \emph{i.e.}, $P^nf(x) = \int P^{n-1}f(y) P(x,dy)$; by convention $P^0(x,A) =
  \un_A(x) = \delta_x(A)$. A measure $\pi$ is said to be an invariant
  distribution w.r.t. $P$ if $\pi P = \pi$.  For $p\geq 0$, denote by
  $\cL_p(\bR^{dN})$ the set of lipschitz functions $f:\bR^{dN}\to\bR^{dN}$
  satisfying
\[
[f]_p \eqdef  \sup_{x,y \in \bR^{dN}} \frac{|f(x) -f(y)|}{|x-y| (1+|x|^p+ |y|^p)} <
\infty \, .
\]
We define $N_p(f) \eqdef   (\sup_{x \in \bR^{dN}} \frac{|f(x)|}{1+|x|^{p+1}})\vee [f]_p$
for $f \in \cL_p(\bR^{dN})$.  
For any $\bth\in \bR^{dN}$ and any $\alpha\geq 0$, define the probability
transition kernel $P_{\alpha,\bth}$ on $\bR^{dN}$ as
\begin{equation}
  \label{eq:kernel:Palphatheta}
  P_{\alpha,\bth}f(x) = \int f\left(\alpha \cJ_\bot(w\otimes
I_d)(x+y)\right) \ d\mu_\bth(y,w) \, .
\end{equation}
This collection of kernels is related to the sequence $(\phi_n)_n$ since by
Assumption~\ref{hyp:mutheta}-\ref{hyp:mutheta:item1}) and (\ref{eq:thOrth}),
for any measurable positive function $f$ it holds almost-surely
\begin{equation}
  \label{eq:kernel:phin}
  \bE\left[ f(\phi_{n+1})|\cF_n\right] =
P_{\alpha_{n+1},\thn}f(\phi_n)\ .
\end{equation}

We start with a result that claims that any transition kernel
$P_{\alpha,\theta}$ possesses an unique invariant distribution
$\pi_{\alpha,\theta}$ and is ergodic at a geometric rate. This also implies
that for a large family of functions $f$, a solution $f_{\alpha,\theta}$ to
the Poisson equation 
\begin{equation}
  \label{eq:PoissonEquation}
   f - \pi_{\alpha, \theta}(f) = f_{\alpha, \theta} - P_{\alpha,\theta} f_{\alpha,\theta}
\end{equation}
exists, and is unique up to an additive constant.

\begin{prop}
  \label{prop:invariantPI}
  Let Assumptions~\ref{hyp:mutheta} and~\ref{hyp:model} hold.  Let $\cK \subset
  \bR^{dN} $ be a compact set and let $\rho_\cK \in (0,1)$ be given by
  Assumption~\ref{hyp:model}. The following holds for any $a \in
  (0,1/\sqrt{\rho_\cK})$.
\begin{enumerate}
\item \label{prop:invariantPI:item1} For any $\theta \in \cK$ and $\alpha \in
  [0,a]$, $P_{\alpha,\theta}$ admits an unique invariant distribution
  $\pi_{\alpha,\theta}$ such that $
  \sup_{\alpha\in [0,a],\bth\in \cK} \int
  |x|^2d\pi_{\alpha,\bth}(x)<\infty\,.
  $
\item \label{prop:invariantPI:item2} For any $p \in [0,1]$, there exists a
  constant $K$ such that for any $x\in \bR^{dN}$ and any $f \in
  \cL_p(\bR^{dN})$,
$    \sup_{\alpha\in [0,a],\bth\in \cK}
    |P_{\alpha,\bth}^nf(x)-\pi_{\alpha,\bth}(f)| 
    \leq K N_p(f) \, \left(a \sqrt{\rho_\cK} \right)^n \  (1+|x|^{p+1})\, .
  $
\item \label{prop:invariantPI:item3} For any $\alpha \in (0, a]$, $\theta \in
  \cK$, $p \in [0,1]$ and $f \in \cL_p(\bR^{dN})$, the function
  $f_{\alpha,\theta}: x \mapsto \sum_{n \geq 0} \left(P_{\alpha,\theta}^n f(x)
    - \pi_{\alpha,\theta}f\right)$ exists, solves the Poisson equation
  (\ref{eq:PoissonEquation}) and is in $\cL_p(\bR^{dN})$. In addition, 
\[
\sup_{\alpha \in [0,a], \theta \in \cK} |f_{\alpha,\theta}(x)| \leq \frac{K N_p(f) \ }{1-a
  \sqrt{\rho_\cK}} \ (1+|x|^{p+1}) \, .
\]
\end{enumerate}
\end{prop}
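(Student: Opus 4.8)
The plan is to establish a uniform contraction estimate for the kernels $P_{\alpha,\theta}$ in a weighted Wasserstein-type metric, from which ergodicity, the existence of a unique invariant distribution, and the solvability of the Poisson equation all follow by standard arguments. The starting point is Assumption~\ref{hyp:model}, which I would first recast at the level of the transition kernel: for a test function of the linear form $f(x) = |\phi + x|^2$ (or, more precisely, by polarization and the coupling below, for the map $x \mapsto P_{\alpha,\theta}(\text{linear})$), the inequality in Assumption~\ref{hyp:model} reads $\int |(\cJ_\bot (w\otimes I_d))(\phi + y)|^2 \, d\mu_\theta(y,w) \leq \rho_\cK \int |\phi + y|^2 \, d\mu_\theta(y,w)$. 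Applying this with $\phi = x$ gives $P_{\alpha,\theta}(|\cdot|^2)(x) \leq \alpha^2 \rho_\cK \int |x+y|^2 \, d\mu_\theta(y,w) \leq \alpha^2\rho_\cK(|x|^2 + 2|x|\sqrt{c_\cK} + c_\cK)$, where $c_\cK \eqdef \sup_{\theta\in\cK}\int |y|^2 d\mu_\theta(y,w) < \infty$ by Assumption~\ref{hyp:mutheta}-\ref{hyp:mutheta:item2}). Since $a^2\rho_\cK < 1$ by hypothesis, this is a geometric-drift (Lyapunov) inequality for $W(x) = 1 + |x|^2$, uniform in $\alpha\in[0,a]$ and $\theta\in\cK$; iterating it yields $\sup_{\alpha\in[0,a],\theta\in\cK} P_{\alpha,\theta}^n(|\cdot|^2)(x) \leq K(1+|x|^2)$ with $K$ independent of $n$, which already gives item~\ref{prop:invariantPI:item1}) once existence of $\pi_{\alpha,\theta}$ is known.

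For the contraction, the key observation is that the randomness entering $P_{\alpha,\theta}$ acts affinely: if $(Y,W)\sim\mu_\theta$, then starting from $x$ and $x'$ and using the \emph{same} draw $(Y,W)$, the one-step images are $\alpha\cJ_\bot(W\otimes I_d)(x+Y)$ and $\alpha\cJ_\bot(W\otimes I_d)(x'+Y)$, whose difference is $\alpha\cJ_\bot(W\otimes I_d)(x-x')$, a quantity \emph{independent of $Y$}. Hence $\E|\alpha\cJ_\bot(W\otimes I_d)(x-x')|^2 = \alpha^2 \int |(\cJ_\bot w\otimes I_d)(x-x')|^2 d\mu_\theta \leq \alpha^2\rho_\cK |x-x'|^2$ by Assumption~\ref{hyp:model} with $\phi$ replaced by $x-x'$ and $A=0$. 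This is a synchronous coupling contracting at rate $a\sqrt{\rho_\cK} < 1$ in the $L^2$-Wasserstein distance, uniformly over $\alpha\in[0,a]$, $\theta\in\cK$. Combining with the drift bound, a routine argument (e.g. the weighted total-variation / Wasserstein contraction of Hairer--Mattingly type, or simply iterating the coupling and controlling moments via the drift) upgrades this to the statement of item~\ref{prop:invariantPI:item2}): for $f\in\cL_p(\bR^{dN})$ with $p\in[0,1]$, write $|P_{\alpha,\theta}^n f(x) - \pi_{\alpha,\theta}(f)| \leq \int |P_{\alpha,\theta}^n f(x) - P_{\alpha,\theta}^n f(x')|\, d\pi_{\alpha,\theta}(x')$ and bound the integrand, using $[f]_p$ together with the coupled trajectories, by $[f]_p \cdot \E[|X_n^x - X_n^{x'}|(1+|X_n^x|^p + |X_n^{x'}|^p)]$; Cauchy--Schwarz splits this into the $L^2$ contraction factor $(a\sqrt{\rho_\cK})^n$ times an $n$-uniform moment bound coming from the drift, and integrating $(1+|x'|^{p+1})$ against $\pi_{\alpha,\theta}$ (finite by item~\ref{prop:invariantPI:item1)}) produces the claimed $K N_p(f)(1+|x|^{p+1})$. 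Existence and uniqueness of $\pi_{\alpha,\theta}$ itself follows because the sequence $(\delta_x P_{\alpha,\theta}^n)_n$ is Cauchy in the complete metric space $(\mathcal{P}_2, W_2)$ by the same contraction, and uniqueness is immediate from the contraction applied to two invariant measures.

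Finally, item~\ref{prop:invariantPI:item3}) is a direct consequence of item~\ref{prop:invariantPI:item2}): the series $f_{\alpha,\theta}(x) = \sum_{n\geq 0}(P_{\alpha,\theta}^n f(x) - \pi_{\alpha,\theta}f)$ converges absolutely and locally uniformly because its $n$-th term is bounded by $K N_p(f)(a\sqrt{\rho_\cK})^n(1+|x|^{p+1})$, giving the stated bound $\sup_{\alpha\in[0,a],\theta\in\cK}|f_{\alpha,\theta}(x)| \leq \frac{K N_p(f)}{1-a\sqrt{\rho_\cK}}(1+|x|^{p+1})$ after summing the geometric series; a telescoping computation shows $f_{\alpha,\theta}$ solves the Poisson equation~(\ref{eq:PoissonEquation}), and the fact that $f_{\alpha,\theta}\in\cL_p(\bR^{dN})$ follows by applying the same coupling bound term by term to $P_{\alpha,\theta}^n f(x) - P_{\alpha,\theta}^n f(y)$ and summing. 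I expect the main obstacle to be bookkeeping rather than conceptual: one must carry the uniformity in $(\alpha,\theta)$ through every estimate, be careful that the drift constant $K$ in item~\ref{prop:invariantPI:item1}) is genuinely $n$-free (which requires $a^2\rho_\cK<1$, not merely $\rho_\cK<1$), and verify that the weighted-Lipschitz seminorm $[f]_p$ interacts correctly with the $p$-th moment growth under the coupled dynamics — the Cauchy--Schwarz split in item~\ref{prop:invariantPI:item2}) is where the constraint $p\leq 1$ is used, since then $2(p+1)\leq 4$ and second moments suffice to control the companion factor.
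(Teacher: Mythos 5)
Your proposal follows essentially the same route as the paper: the same geometric drift bound on $\int P_{\alpha,\theta}(x,dy)|y|^2$ derived from Assumptions~\ref{hyp:mutheta} and~\ref{hyp:model}, the same synchronous coupling exploiting that the difference of coupled trajectories is $\alpha\cJ_\bot(W\otimes I_d)(x-x')$ (independent of $Y$, contracting at rate $a\sqrt{\rho_\cK}$ in $L^2$), and the same Cauchy--Schwarz split to handle the weighted Lipschitz class $\cL_p$ with $p\leq 1$. The only cosmetic difference is that the paper packages the final assembly (existence/uniqueness of $\pi_{\alpha,\theta}$, ergodicity, Poisson solution) by verifying the hypotheses of a cited proposition of Benveniste--M\'etivier--Priouret, whereas you carry out that standard assembly directly; both are valid.
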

\begin{proof}
  Let $\cK$ be a compact subset of $\bR^{dN}$. Throughout this proof, for ease
  of notations, we will write $\rho$ instead of $\rho_\cK$. Let $a \in (0,
  1/\sqrt{\rho})$ be fixed. We check the assumptions of \cite[Proposition 2 p.
  253]{benveniste:metivier:priouret:1987} from which all the items follow.  We
  first prove \cite[(2.1.10) p.253]{benveniste:metivier:priouret:1987}. By
  Assumption~\ref{hyp:model}, for any $\alpha \in [0,a]$ and $\theta \in \cK$
  \begin{multline*}
    \int P_{\alpha,\theta}(x,dy) |y|^2  \\
    \leq a^2 \rho \left( |x|^2 + \int |y|^2 d\mu_\theta(y, w) + 2 |x| \int |y|
      d\mu_\theta(y,w) \right) \, ;
  \end{multline*}
  by Assumption~\ref{hyp:mutheta}-\ref{hyp:mutheta:item2}), for any $\bar \rho
  \in (a^2 \rho , 1)$, there exists a positive constant $c$ such that for any
  $x \in \bR^{dN}$
\[
\sup_{\alpha \in [0,a], \theta \in \cK} \int P_{a,\theta}(x,dy) |y|^2 \leq \bar
\rho |x|^2 +c \, .
\]
This concludes the proof of \cite[(2.1.10)
p.253]{benveniste:metivier:priouret:1987}. Note that iterating this inequality
and applying the Jensen's inequality yield for any $n \geq 1$, $p \in [0,1]$,
$x \in \bR^{dN}$,
\begin{equation}
\label{eq:drift}
\sup_{\alpha \in [0,a], \theta \in \cK} \int P^n_{a,\theta}(x,dy) |y|^{p+1} \leq  \left( \bar \rho^n |x|^2 + \frac{c}{1-\bar \rho} \right)^{\frac{p+1}{2}} \,  .
\end{equation} We now prove \cite[(2.1.9)
p.253]{benveniste:metivier:priouret:1987} Let $x,z \in \bR^{dN}$, $\alpha \in
[0, a]$ and $\theta \in \cK$.  We consider a coupling of the distributions
$P_{\alpha,\theta}^n(x,\cdot)$ and $P_{\alpha,\theta}^n(z,\cdot)$ defined as
follows: $(\overline W_n,\overline Y_n)_{n\in \bN}$ are i.i.d. random
variables with distribution $\mu_\bth$ and set $\overline\cW_n = \overline
W_n\otimes I_d$. The stochastic process $(\varphi^x_n)_{n\in \bN}$ defined
recursively by $\varphi_n^x = \alpha \cJo
\overline\cW_n(\varphi_{n-1}^x+\overline Y_n)$ and $\varphi^x_0=x$ is a Markov
chain with transition kernel $P_{\alpha,\theta}$ starting from $x$. We denote
by $\bE_{\alpha,\theta}$ the expectation on the associated canonical space. Let
$p \in [0,1]$. For any $g \in \cL_p(\bR^{dN})$, it holds
  \begin{align}
    & \left| P_{\alpha,\theta}^n g(x) - P_{\alpha,\theta}^n g(z) \right| =
    \left|
      \bE_{\alpha,\theta}\left(g(\phi_n^x) - g(\phi_n^z) \right) \right|  \nonumber \\
    & \leq  \bE_{\alpha,\theta}\left( \left| g(\phi_n^x) - g(\phi_n^z)  \right|\right) \nonumber \\
    & \leq [g]_p \, \bE_{\alpha, \theta} \left[ \left| \phi_n^x- \phi_n^z
      \right| \left(1 +
        |\phi_n^x|^p + |\phi_n^z|^p \right)\right] \nonumber  \\
    & \leq [g]_p \, \left\{ \bE_{\alpha, \theta} \left| \phi_n^x- \phi_n^z
      \right|^2 \, \bE_{\alpha,\theta}\left[ \left( 1 + |\phi_n^x|^p +
          |\phi_n^z|^p \right)^2\right] \right\}^{1/2} \, .
    \label{eq:PoissonContinue}
  \end{align} By
  Assumption~\ref{hyp:model} combined with a trivial induction,
  \begin{align}
    \bE_{\alpha,\theta}(|\varphi^x_{n} &-\varphi^z_{n}|^2)^{1/2} = \alpha
    \bE_{\alpha,\theta}(|\cJo
    \overline\cW_n(\varphi^x_{n-1}-\varphi_{n-1}^z)|^2)^{1/2}  \nonumber \\
    & \hspace{-1cm} = \alpha  \bE_{\alpha,\theta}( (\varphi^x_{n-1}-\varphi_{n-1}^z)^T  \mathsf{A}_\theta (\varphi^x_{n-1}-\varphi_{n-1}^z))^{1/2} \nonumber \\
    & \hspace{-1cm} \leq a \sqrt\rho \,
    \bE_{\alpha,\theta}(|\varphi^x_{n-1}-\varphi_{n-1}^z|^2)^{1/2} \nonumber \\
    & \hspace{-1cm} \leq \left(a \sqrt{\rho} \right)^n |x-z| \, , \label{eq:contraction}
  \end{align}  
  where $\mathsf{A}_\theta \eqdef \int (w \otimes I_d)^T \cJo (w \otimes I_d)
  d\mu_{\theta}(y,w) $.  Combining (\ref{eq:drift}) and (\ref{eq:contraction})
  shows that there exists $C>0$ such that for any $x,z \in \bR^{dN}$, $g \in
  \cL_p(\bR^{dN})$ and $n \geq 1$,
\begin{multline}
  \label{eq:BMP:219}
  \sup_{\alpha \in [0,a], \theta \in \cK} \left| P_{\alpha,\theta}^n g(x) -
  P_{\alpha,\theta}^n g(z) \right|  \\
\leq C\, [g]_p |x-z|\left(a \sqrt{\rho} \right)^n
\left(1+|x|^{p}+ |z|^{p} \right) \, .
\end{multline} 
This concludes the proof of \cite[(2.1.9)
p.253]{benveniste:metivier:priouret:1987}.  Finally, we show that the
transition kernels are weak Feller.  From (\ref{eq:kernel:Palphatheta}) and the
dominated convergence theorem, it is easily checked that for any bounded
continuous function $f$ on $\bR^{dN}$, $x \mapsto P_{\alpha,\theta} f(x)$ is
continuous.  Therefore, all the assumptions of \cite[Proposition 2
p.253]{benveniste:metivier:priouret:1987} are verified.
\end{proof}

\begin{prop}\label{prop:momentPITheta}
  Let Assumptions~\ref{hyp:mutheta} and~\ref{hyp:model} hold. Let $\theta \in
  \bR^{dN}$ and $\alpha$ such that $\pi_{\alpha,\theta}$ exists.
  \begin{enumerate}
  \item \label{prop:momentPITheta:item1} The first order moment
    $m^{(1)}_\theta(\alpha) \eqdef \int x\,d\pi_{\alpha,\theta}(x)$ of
    $\pi_{\alpha,\theta}$ is given by $m^{(1)}_\theta(\alpha) = (\alpha^{-1} I_{dN}- \cJo\W_\theta)^{-1} \cJo z_\theta$
where $\W_\theta$ and $z_\theta$ are given by (\ref{eq:Omega}) and
(\ref{eq:upsilon}).  
\item \label{prop:momentPITheta:item2} Set $T(w)\eqdef ((\Jo w)\otimes
  I_d)\otimes ((\Jo w)\otimes I_d)$. The vector $m^{(2)}_\theta(\alpha) \eqdef
  \vect(\int xx^Td\pi_{a,\theta}(x))$ is given by $m^{(2)}_\theta(\alpha) = \left(\alpha^{-2}I_{d^2N^2}- \Phi_\theta
  \right)^{-1}\zeta_\theta(\alpha)$
where $  \Phi_\theta \eqdef  \int T(w)d\mu_{\theta}(y,w)$ and
$
  \zeta_\theta(\alpha) \eqdef \int T(w)\vect\left(yy^T+2y\,
    m^{(1)}_\theta(\alpha)^T \right)d\mu_{\theta}(y,w)
$.
  \end{enumerate}
\end{prop}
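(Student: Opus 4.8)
The plan is to exploit the invariance of $\pi_{\alpha,\theta}$ under $P_{\alpha,\theta}$ together with the explicit transition rule \eqref{eq:kernel:Palphatheta}. Realize the invariant law by a vector $X\sim\pi_{\alpha,\theta}$, and let $(Y,W)\sim\mu_\theta$ be drawn independently of $X$; then $\pi_{\alpha,\theta}P_{\alpha,\theta}=\pi_{\alpha,\theta}$ says exactly that
\[
X'\eqdef\alpha\bigl((\Jo W)\otimes I_d\bigr)(X+Y)
\]
is again distributed according to $\pi_{\alpha,\theta}$. Since $W$ is non-negative and row-stochastic, $\|(\Jo W)\otimes I_d\|=\|\Jo W\|\le\|\Jo\|\,\|W\|\le\sqrt N$; combined with Assumption~\ref{hyp:mutheta}-\ref{hyp:mutheta:item2}) and $\int|x|^2\,d\pi_{\alpha,\theta}(x)<\infty$ (Proposition~\ref{prop:invariantPI}-\ref{prop:invariantPI:item1})), this makes $X$, $Y$ and $X'$ square-integrable, which legitimates every exchange of expectation and integration below. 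I also note that the linear maps $\alpha\,\cJo\W_\theta$ and $\alpha^2\Phi_\theta$ --- which propagate, respectively, the first and second moments of the chain under $P_{\alpha,\theta}$ --- have spectral radius strictly less than one by the geometric ergodicity of Proposition~\ref{prop:invariantPI}-\ref{prop:invariantPI:item2}) applied with $p=0$ and $p=1$; hence $I_{dN}-\alpha\,\cJo\W_\theta$ and $I_{d^2N^2}-\alpha^2\Phi_\theta$ are invertible.

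\textbf{First moment.} Taking expectations in $X'\stackrel{d}{=}X$, using $X\perp(Y,W)$, the identity $(\Jo W)\otimes I_d=\cJo\,(W\otimes I_d)$ and the definitions \eqref{eq:Omega}--\eqref{eq:upsilon}, one obtains
\[
m^{(1)}_\theta(\alpha)=\alpha\,\cJo\W_\theta\,m^{(1)}_\theta(\alpha)+\alpha\,\cJo z_\theta .
\]
Solving for $m^{(1)}_\theta(\alpha)$ and factoring out $\alpha$ gives item~\ref{prop:momentPITheta:item1}).

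\textbf{Second moment.} Write $G=(\Jo W)\otimes I_d$ and $M^{(2)}=\int xx^T\,d\pi_{\alpha,\theta}(x)$, so $m^{(2)}_\theta(\alpha)=\vect M^{(2)}$. Conditioning on $(Y,W)$ and using $X\perp(Y,W)$,
\[
\E\bigl[(X+Y)(X+Y)^T\,\big|\,Y,W\bigr]=M^{(2)}+m^{(1)}_\theta(\alpha)\,Y^T+Y\,m^{(1)}_\theta(\alpha)^T+YY^T ,
\]
so from $X'X'^T=\alpha^2\,G(X+Y)(X+Y)^TG^T$ and $X'\stackrel{d}{=}X$,
\[
M^{(2)}=\alpha^2\!\int\! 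G\Bigl(M^{(2)}+m^{(1)}_\theta(\alpha)\,y^T+y\,m^{(1)}_\theta(\alpha)^T+yy^T\Bigr)G^T\,d\mu_\theta(y,w) .
\]
Vectorizing through $\vect(GBG^T)=(G\otimes G)\vect B$, and recognizing $G\otimes G=T(W)$ and $\Phi_\theta=\int T(w)\,d\mu_\theta(y,w)$, this reads
\[
(I_{d^2N^2}-\alpha^2\Phi_\theta)\,m^{(2)}_\theta(\alpha)=\alpha^2\!\int\! T(w)\,\vect\!\bigl(yy^T+m^{(1)}_\theta(\alpha)\,y^T+y\,m^{(1)}_\theta(\alpha)^T\bigr)\,d\mu_\theta(y,w) .
\]
Because $m^{(2)}_\theta(\alpha)$ is the vectorization of a symmetric matrix and the map $B\mapsto\int GBG^T\,d\mu_\theta$ preserves symmetry, only the symmetric part of the mixed term $m^{(1)}_\theta(\alpha)\,y^T+y\,m^{(1)}_\theta(\alpha)^T$ matters, and it coincides with that of $2\,y\,m^{(1)}_\theta(\alpha)^T$; substituting the latter turns the right-hand side into $\alpha^2\zeta_\theta(\alpha)$, and inverting $I_{d^2N^2}-\alpha^2\Phi_\theta$ yields item~\ref{prop:momentPITheta:item2}).

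The calculation is otherwise routine. The two places that need some care are the invertibility of $I_{dN}-\alpha\,\cJo\W_\theta$ and $I_{d^2N^2}-\alpha^2\Phi_\theta$ --- this is where Assumption~\ref{hyp:model} enters, through the geometric ergodicity of Proposition~\ref{prop:invariantPI} --- and the Kronecker/vectorization bookkeeping for the second-order cross term, in particular the symmetrization that brings it into the form $\zeta_\theta(\alpha)$.
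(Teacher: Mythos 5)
Your proof is correct and follows the same route as the paper's: exploit the invariance $\pi_{\alpha,\theta}=\pi_{\alpha,\theta}P_{\alpha,\theta}$ together with the explicit form of the kernel, and solve the resulting linear equations for the first and second moments. The paper only writes out the first-moment computation and declares that item~2 ``follows the same lines''; your explicit vectorization bookkeeping and the symmetrization remark explaining why the cross term may be replaced by $2y\,m^{(1)}_\theta(\alpha)^T$ supply exactly the omitted details.
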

\begin{proof} Since $\pi_{\alpha,\theta} = \pi_{\alpha,\theta} P_{\alpha,\theta}$,
  we obtain: $   m^{(1)}_\theta(\alpha) =\int\!\!\!\int \alpha \cJ_\bot(w\otimes I_d)(y+x)d\mu_{\theta}(y,w)d\pi_{\alpha,\theta}(x)= \alpha \int ((\Jo w)\otimes I_d)(y+ m^{(1)}_\bth(\alpha))d\mu_{\theta}(y,w)$. This yields the expression of $m^{(1)}_\theta(\alpha)$. The proof of
item~\ref{prop:momentPITheta:item2}) follows the same lines as above and is
omitted. 
\end{proof}
The proof of the following Proposition is left to the reader.
\begin{prop}
  \label{prop:Regularity}
  Let Assumptions~\ref{hyp:mutheta},~\ref{hyp:model} and \ref{hyp:lipschitz}
  to hold.  Let $\cK \subset \bR^{dN} $ be a compact set and let $\rho_\cK \in
  (0,1)$ and $\lambda_\mu \in (0,1]$ be given resp. by
  Assumption~\ref{hyp:model} and Assumption~\ref{hyp:lipschitz}.  The following
  holds for any $a \in (0,1/\sqrt{\rho_\cK})$.
\begin{enumerate}
\item \label{prop:Regularity:item1} For any $f \in \cL_1(\bR^{dN})$, there exists a constant $C_f$ such that
  for any $\alpha, \alpha' \in [0,a]$ and $\theta, \theta' \in \cK$,
$  \left|\int f(x) \, \left( d\pi_{\alpha,\theta}(x) - d\pi_{\alpha',\theta'}(x)
    \right) \right| 
  \leq C_f \ \left(|\alpha - \alpha'| + \left|\theta-\theta'
    \right|^{\lambda_\mu} \right) $.
\item \label{prop:Regularity:item2} When $f$ is the identity function $f(x) =x$
  then for any $\alpha \in (0, a]$, $\theta \in \cK$, $x\in \bR^{dN}$, one has
  \begin{equation}
    \label{eq:poisson-identite}
    f_{\alpha,\theta}(x) = (I_{dN}-\alpha\cJo\W_\theta)^{-1}(x-m^{(1)}_\theta(\alpha))\,.
  \end{equation}
  In addition, there exists a constant $K$ such that for any $\alpha,\alpha'
  \in [0, a]$, $\theta, \theta' \in \cK$, one has
$  \left| P_{\alpha, \theta} f_{\alpha,\theta}(x) - P_{\alpha', \theta'}
    f_{\alpha',\theta'}(x) \right| + \left| f_{\alpha,\theta}(x) -
    f_{\alpha',\theta'}(x) \right| 
  \leq K\left( \left|\alpha- \alpha'\right| + \left|\theta -
      \theta'\right|^{\lambda_\mu}\right) \left(1 +|x| \right)$\,.
\item \label{prop:Regularity:item3} For any function $f$ of the form $x\T A x$,
  the Poisson solution $f_{\alpha,\theta}$ exists and there exists a constant
  $K$ such that for any  $\alpha,\alpha'
  \in [0, a]$, $\theta, \theta' \in \cK$, one has
$  \left| P_{\alpha, \theta} f_{\alpha,\theta}(x) - P_{\alpha', \theta'}
    f_{\alpha',\theta'}(x) \right|  \leq K\left(\left|\alpha- \alpha'\right|
    + \left|\theta - \theta'\right|^{\lambda_\mu}\right)  \left(1 +|x|^2
  \right) $.
\end{enumerate}
\end{prop}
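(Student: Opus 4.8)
The plan is to derive all three items from the geometric ergodicity of the kernels $P_{\alpha,\theta}$ (Proposition~\ref{prop:invariantPI}) together with elementary resolvent--perturbation estimates fed by the H\"older regularity of $\W_\theta$, $\cJ_\bot z_\theta$ and $m^{(1)}_\theta(\alpha)$. Fix the compact set $\cK$, write $\rho$ for $\rho_\cK$ and fix $a\in(0,1/\sqrt{\rho})$; I will repeatedly use that $\sup_{\theta\in\cK}\|\cJ_\bot\W_\theta\|\le\sqrt{\rho}$, so that $R_{\alpha,\theta}\eqdef(I_{dN}-\alpha\cJ_\bot\W_\theta)^{-1}$ is well defined for all $\alpha\in[0,a]$, $\theta\in\cK$ with $\|R_{\alpha,\theta}\|\le(1-a\sqrt{\rho})^{-1}$. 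I would start with item~\ref{prop:Regularity:item2}): since $f(x)=x$ lies in $\cL_1(\bR^{dN})$, $f_{\alpha,\theta}$ exists by Proposition~\ref{prop:invariantPI}-\ref{prop:invariantPI:item3}), and to obtain the explicit formula I would simply verify that the right-hand side of (\ref{eq:poisson-identite}) solves (\ref{eq:PoissonEquation}): applying $P_{\alpha,\theta}$ to $x\mapsto R_{\alpha,\theta}(x-m^{(1)}_\theta(\alpha))$ and using the identity $(I_{dN}-\alpha\cJ_\bot\W_\theta)\,m^{(1)}_\theta(\alpha)=\alpha\cJ_\bot z_\theta$ provided by Proposition~\ref{prop:momentPITheta}-\ref{prop:momentPITheta:item1}), one gets $P_{\alpha,\theta}f_{\alpha,\theta}(x)=R_{\alpha,\theta}\,\alpha\cJ_\bot\W_\theta\,(x-m^{(1)}_\theta(\alpha))$, whence $f_{\alpha,\theta}-P_{\alpha,\theta}f_{\alpha,\theta}=x-m^{(1)}_\theta(\alpha)=f-\pi_{\alpha,\theta}(f)$, and uniqueness up to a constant gives (\ref{eq:poisson-identite}). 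Then $f_{\alpha,\theta}(x)$ and $P_{\alpha,\theta}f_{\alpha,\theta}(x)$ are explicit affine functions of $x$ built from $R_{\alpha,\theta}$, $\cJ_\bot\W_\theta$ and $m^{(1)}_\theta(\alpha)$, so the joint H\"older bound reduces to: H\"older-continuity of $(\alpha,\theta)\mapsto\alpha\cJ_\bot\W_\theta$ (trivial in $\alpha$, and (\ref{eq:lipschitz2}) in $\theta$); the resolvent identity $R_{\alpha,\theta}-R_{\alpha',\theta'}=R_{\alpha,\theta}(\alpha\cJ_\bot\W_\theta-\alpha'\cJ_\bot\W_{\theta'})R_{\alpha',\theta'}$ with the uniform bound above; and the ensuing joint H\"older-continuity of $m^{(1)}_\theta(\alpha)=\alpha R_{\alpha,\theta}\cJ_\bot z_\theta$, using also (\ref{eq:lipschitz3}).

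For item~\ref{prop:Regularity:item1}), let $f\in\cL_1(\bR^{dN})$ and abbreviate $P=P_{\alpha,\theta}$, $P'=P_{\alpha',\theta'}$, $\pi=\pi_{\alpha,\theta}$, $\pi'=\pi_{\alpha',\theta'}$. Since $\pi(f)=\pi P^n(f)$ and $\pi'(f)=\pi'(P')^n(f)$ for every $n$,
\[
\pi(f)-\pi'(f)=(\pi-\pi')\big((P')^n f-\pi'(f)\big)+\pi\big(P^n f-(P')^n f\big).
\]
By Proposition~\ref{prop:invariantPI}-\ref{prop:invariantPI:item2}) and the uniform second-moment bound of Proposition~\ref{prop:invariantPI}-\ref{prop:invariantPI:item1}), the first term is $O\big((a\sqrt{\rho})^n\big)$ and vanishes as $n\to\infty$. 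For the second, I would telescope $P^n-(P')^n=\sum_{k=0}^{n-1}P^k(P-P')(P')^{n-1-k}$, apply $\pi$ and use $\pi P^k=\pi$, obtaining $\sum_{k=0}^{n-1}\pi\big[(P-P')(P')^{n-1-k}f\big]$. The contraction estimate (\ref{eq:BMP:219}) gives $[(P')^{m}f]_1\le C(a\sqrt{\rho})^{m}N_1(f)$ with a matching polynomial-growth bound, and from (\ref{eq:kernel:Palphatheta}) the single-step difference $(P-P')g$ for $g\in\cL_1$ splits into an $\alpha\!\to\!\alpha'$ part, bounded by $[g]_1|\alpha-\alpha'|$ times the uniform second moment of $\mu_\theta$ (Assumption~\ref{hyp:mutheta}-\ref{hyp:mutheta:item2})), and a $\mu_\theta\!\to\!\mu_{\theta'}$ part controlled by the H\"older regularity of $\theta\mapsto\mu_\theta$ through (\ref{eq:lipschitz1})--(\ref{eq:lipschitz3}) (for the affine test functions actually used in Proposition~\ref{prop:convergence:averagesequence} this last step reduces to the regularity of $\W_\theta,z_\theta,m^{(1)}_\theta$ obtained above). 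Summing the geometric factors $(a\sqrt{\rho})^{n-1-k}$ over $k$ gives a bound uniform in $n$; letting $n\to\infty$ then yields $|\pi(f)-\pi'(f)|\le C_f(|\alpha-\alpha'|+|\theta-\theta'|^{\lambda_\mu})$ with $C_f$ controlled by $N_1(f)$, hence uniform over $\{x\mapsto x^TAx:\|A\|\le1\}$, which will be needed for item~\ref{prop:Regularity:item3}).

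For item~\ref{prop:Regularity:item3}), $f(x)=x^TAx$ satisfies $N_1(f)\le\|A\|$, so $f_{\alpha,\theta}=\sum_{n\ge0}\big(P_{\alpha,\theta}^n f-\pi_{\alpha,\theta}(f)\big)$ exists by Proposition~\ref{prop:invariantPI}-\ref{prop:invariantPI:item3}). The key observation is that $P_{\alpha,\theta}$ maps polynomials of degree $\le2$ to polynomials of degree $\le2$: from (\ref{eq:kernel:Palphatheta}), the quadratic part of $P_{\alpha,\theta}(x^TBx+b^Tx+c)$ is $x^T\mathcal A_{\alpha,\theta}(B)x$ with $\mathcal A_{\alpha,\theta}(B)\eqdef\alpha^2\int\big((J_\bot w)\otimes I_d\big)^T B\,\big((J_\bot w)\otimes I_d\big)\,d\mu_\theta(y,w)$, and $\|\mathcal A_{\alpha,\theta}\|\le\alpha^2\rho\le a^2\rho<1$ by Assumption~\ref{hyp:model}; a short induction shows the linear and constant coefficients of $P_{\alpha,\theta}^n f$ also decay geometrically in $n$. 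Summing, $f_{\alpha,\theta}(x)=x^T\widetilde A_{\alpha,\theta}x+\widetilde b_{\alpha,\theta}^Tx+\widetilde c_{\alpha,\theta}$ with $\widetilde A_{\alpha,\theta}=(\mathrm{id}-\mathcal A_{\alpha,\theta})^{-1}(A)$ and $\widetilde b,\widetilde c$ given by the corresponding Neumann series. Finally (\ref{eq:PoissonEquation}) gives $P_{\alpha,\theta}f_{\alpha,\theta}=f_{\alpha,\theta}-f+\pi_{\alpha,\theta}(f)$, so the $f(x)$ cancels in
\[
P_{\alpha,\theta}f_{\alpha,\theta}(x)-P_{\alpha',\theta'}f_{\alpha',\theta'}(x)=\big(f_{\alpha,\theta}(x)-f_{\alpha',\theta'}(x)\big)+\big(\pi_{\alpha,\theta}(f)-\pi_{\alpha',\theta'}(f)\big);
\]
the second bracket is $\le C(|\alpha-\alpha'|+|\theta-\theta'|^{\lambda_\mu})$ by item~\ref{prop:Regularity:item1}) (uniformly over $\|A\|\le1$), and the first is controlled by the H\"older-continuity in $(\alpha,\theta)$ of the coefficients $\widetilde A,\widetilde b,\widetilde c$, obtained exactly as in item~\ref{prop:Regularity:item2}) from resolvent identities for $(\mathrm{id}-\mathcal A_{\alpha,\theta})^{-1}$ and the H\"older-continuity of $\mathcal A_{\alpha,\theta}$ and of the lower-order coefficient maps; this produces the factor $(1+|x|^2)$.

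The hard part will be the regularity \emph{in $\theta$} of the measures $\mu_\theta$ when tested against the \emph{nonlinear} integrands produced by the telescoping, since Assumption~\ref{hyp:lipschitz} directly controls only the first-order functionals $\W_\theta$ and $z_\theta$. For item~\ref{prop:Regularity:item1}) as it is actually used in Proposition~\ref{prop:convergence:averagesequence} the test functions are affine and one is back to the already-established regularity of $\W_\theta,z_\theta,m^{(1)}_\theta$; for item~\ref{prop:Regularity:item3}) one needs in addition the H\"older-continuity in $\theta$, uniform over $\|A\|\le1$, of the quadratic moment map $\theta\mapsto\int\mathcal{Q}_A(x,y,w)\,d\mu_\theta(y,w)$, which is automatic when the communication protocol does not depend on $\theta$ and otherwise coincides with Assumption~\ref{hyp:TCL:ContinuiteMuTheta}-\ref{hyp:TCL:ContinuiteMuTheta:item2bis}). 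Apart from this, the remaining work is bookkeeping, the only care being to keep all H\"older constants uniform over $\cK$, over $\alpha\in[0,a]$ and (for item~\ref{prop:Regularity:item3})) over $\|A\|\le1$, simultaneously with the geometric factors $(a\sqrt{\rho})^n$.
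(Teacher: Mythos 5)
The paper gives no proof of this proposition (it is explicitly ``left to the reader''), so there is nothing to compare line by line; I can only assess your argument on its own terms, and it follows what is surely the intended route. Your verification of the explicit Poisson solution in item~\ref{prop:Regularity:item2}) is correct (the identity $\alpha\cJ_\bot z_\theta - m^{(1)}_\theta(\alpha) = -\alpha\cJ_\bot\W_\theta\, m^{(1)}_\theta(\alpha)$ does make the affine ansatz close), and the resolvent-identity bookkeeping combined with \eqref{eq:lipschitz2}--\eqref{eq:lipschitz3} gives the joint H\"older bounds. The telescoping decomposition $\pi(f)-\pi'(f)=(\pi-\pi')\bigl((P')^nf-\pi'(f)\bigr)+\sum_{k}\pi\bigl[(P-P')(P')^{n-1-k}f\bigr]$ together with Proposition~\ref{prop:invariantPI} and \eqref{eq:BMP:219} is the standard perturbation argument for item~\ref{prop:Regularity:item1}), and your observation that $P_{\alpha,\theta}$ preserves quadratic polynomials with a contraction $\|\mathcal{A}_{\alpha,\theta}\|\le a^2\rho_\cK<1$ on the leading coefficient is a clean and correct way to get item~\ref{prop:Regularity:item3}).

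The one substantive issue is the one you already flag yourself: Assumption~\ref{hyp:lipschitz} only controls the first-moment functionals $\W_\theta$ and $z_\theta$ of $\mu_\theta$, so the single-step difference $\int g\,(d\mu_\theta-d\mu_{\theta'})$ is not controlled for the nonlinear $g=(P')^mf$ arising from a general $f\in\cL_1(\bR^{dN})$, nor for the quadratic coefficient map needed in item~\ref{prop:Regularity:item3}). As literally stated, the proposition therefore appears to need more than its listed hypotheses; in the paper this is papered over by the fact that item~\ref{prop:Regularity:item1}) is only ever applied to (essentially) affine test functions in the proof of Proposition~\ref{prop:convergence:averagesequence}, and item~\ref{prop:Regularity:item3}) is only invoked inside the proof of Theorem~\ref{theo:TCL}, where Assumption~\ref{hyp:TCL:ContinuiteMuTheta}-\ref{hyp:TCL:ContinuiteMuTheta:item2bis}) supplies exactly the missing H\"older control of $\theta\mapsto\int\mathcal{Q}_A(x,y,w)\,d\mu_\theta(y,w)$. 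Since you identify this restriction explicitly and your argument is complete for the classes of test functions the paper actually uses, I regard the proposal as correct in substance; just be aware that you have, in effect, proved a slightly weaker (but sufficient) version of item~\ref{prop:Regularity:item1}) and a version of item~\ref{prop:Regularity:item3}) that borrows a hypothesis stated only later in the paper.
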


\section{Proof of Proposition~\ref{prop:convergence:averagesequence}}
\label{sec:proof:cvg}
\begin{lemma}
\label{lem:deltaTheta}
Under Assumptions~\ref{hyp:step}-\ref{hyp:step:item1}) and ~\ref{hyp:mutheta},
$\exists C>0$ s.t. 
$\left|\theta_{n+1} - \theta_n \right| \leq C\, \gamma_n \left( |Y_{n+1}| +
  |\phi_n |\right)$ a.s.
\end{lemma}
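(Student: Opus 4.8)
The plan is to work directly from the matrix form~(\ref{eq:algo}) of the recursion. It gives $\theta_{n+1} = \cW_{n+1}\big(\theta_n + \gamma_{n+1} Y_{n+1}\big)$, hence
\[
\theta_{n+1} - \theta_n = (\cW_{n+1} - I_{dN})\,\theta_n + \gamma_{n+1}\,\cW_{n+1} Y_{n+1}\,.
\]
The only awkward term is $(\cW_{n+1}-I_{dN})\theta_n$, because it still involves the whole estimate rather than its disagreement part. To deal with it I would split $\theta_n = \un \otimes \la\theta_n\ra + \cJ_\bot\theta_n$ and use that, by Assumption~\ref{hyp:mutheta}-\ref{hyp:mutheta:item1}), $W_{n+1}$ is a.s. row-stochastic, so $\cW_{n+1}(\un \otimes \la\theta_n\ra) = (W_{n+1}\un)\otimes\la\theta_n\ra = \un\otimes\la\theta_n\ra$. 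Therefore $(\cW_{n+1}-I_{dN})\theta_n = (\cW_{n+1}-I_{dN})\cJ_\bot\theta_n$, and since $\cJ_\bot\theta_n = \gamma_{n+1}\phi_n$ by definition~(\ref{eq:definition:phin}), this yields the exact identity
\[
\theta_{n+1} - \theta_n = \gamma_{n+1}\big[(\cW_{n+1}-I_{dN})\phi_n + \cW_{n+1}Y_{n+1}\big]\,.
\]

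The rest is elementary norm bookkeeping. The one point worth stating carefully is that even though a non-negative row-stochastic matrix need not be contractive in spectral norm, its spectral norm is \emph{uniformly} bounded: all entries of $W_{n+1}$ lie in $[0,1]$ and each row sums to one, so $\|W_{n+1}\| \le \sqrt{\|W_{n+1}\|_1\,\|W_{n+1}\|_\infty} \le \sqrt{N}$ a.s., whence $\|\cW_{n+1}\| = \|W_{n+1}\| \le \sqrt N$ and $\|\cW_{n+1}-I_{dN}\| \le \sqrt N + 1$ a.s. Plugging these into the identity above gives
\[
|\theta_{n+1}-\theta_n| \le \gamma_{n+1}\big[(\sqrt N+1)\,|\phi_n| + \sqrt N\,|Y_{n+1}|\big]\,.
\]
Finally, by Assumption~\ref{hyp:step}-\ref{hyp:step:item1}) the ratios $\gamma_{n+1}/\gamma_n$ converge to $1$ and are all positive and finite, so $c \eqdef \sup_n \gamma_{n+1}/\gamma_n < \infty$; replacing $\gamma_{n+1}$ by $c\,\gamma_n$ gives the claim with $C = c(\sqrt N + 1)$.

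There is essentially no genuine obstacle in this lemma: the mild subtleties are (i) the uniform $\sqrt N$ bound on $\|W_n\|$ (row-stochasticity does not directly give a spectral contraction) and (ii) the boundedness of $\gamma_{n+1}/\gamma_n$ from Assumption~\ref{hyp:step}-\ref{hyp:step:item1}); the key structural step is the cancellation of the consensus component $\un\otimes\la\theta_n\ra$ by row-stochasticity, which reduces $\theta_n$ in the leading term to $\gamma_{n+1}\phi_n$.
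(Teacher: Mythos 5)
Your proof is correct and rests on the same ingredients as the paper's: row-stochasticity of $W_{n+1}$ to neutralize the consensus component, the identity $\cJ_\bot\theta_n=\gamma_{n+1}\phi_n$, uniform boundedness of the entries of $W_{n+1}$, and $\sup_n\gamma_{n+1}/\gamma_n<\infty$ from Assumption~\ref{hyp:step}-\ref{hyp:step:item1}). The only cosmetic difference is that you derive an exact identity for $\theta_{n+1}-\theta_n$ directly from the recursion~(\ref{eq:algo}), whereas the paper splits $|\theta_{n+1}-\theta_n|$ by the triangle inequality into the consensus increment and the two disagreement terms and then invokes the already-derived dynamics~(\ref{eq:SAmarkovnoise})--(\ref{eq:thOrth}); both routes are equally valid.
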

\begin{proof}
  Since $\lim_n \gamma_n / \gamma_{n+1}=1$, there exists a constant $C$ such
  that $   |\theta_{n+1} -\theta_n|\leq |\bs 1 \otimes \la \theta_{n+1}\ra- \bs 1
    \otimes \la \theta_n \ra|+|\cJo\theta_{n+1}|+|\cJo\theta_n| 
     \leq C \left|\la \theta_{n+1} \ra - \la \theta_n \ra \right| + \gamma_{n}
    \phi_{n+1} + \gamma_{n} \phi_n$.
  The result follows from Eqs~(\ref{eq:SAmarkovnoise}), (\ref{eq:thOrth}) and $\sup_n \alpha_n < \infty$.
\end{proof}

\subsection{Decomposition of  $\la \theta_{n+1} \ra-  \la \theta_n \ra$}\label{sec:proof:decomposition}
By (\ref{eq:SAmarkovnoise}), it holds $\la \theta_{n+1} \ra =\la \theta_n \ra
+\gamma_{n+1} h(\athn)+\gamma_{n+1} (\eta_{n+1,1}+\eta_{n+1,2})$ where
$  \eta_{n+1,1} =\la\cW_{n+1}(Y_{n+1}+\phi_{n})\ra\ -
  \la z_{\theta_n} + \W_{\theta_n} \phi_n \ra $, $\eta_{n+1,2} = \la z_{\theta_n} + \W_{\theta_n} \phi_n \ra - h(\athn)$.
We write $\eta_{n+1,2}=u_{n}+v_{n}+w_{n+1}+z_{n}$ where
$u_{n} = \la z_{\bth_n} - z_{\cJ \bth_n} \ra$,
$v_n =  \la \W_{\bth_n} - \W_{\cJ \bth_n} \ra\ \, \phi_{n}$,
  $ w_{n+1} =  \la \W_{\cJ\thn}\ra (\phi_{n}- m^{(1)}_{\theta_{n}}(\alpha_{n+1}))$,
  $ z_n = \la \W_{\cJ\thn}\ra (m^{(1)}_{\theta_{n}}(\alpha_{n+1})) - m^{(1)}_{\cJ\thn}(1))$.
We
finally introduce a decomposition of $w_n$. For any compact $\cK$, let
$\rho_\cK \in (0,1)$ be given by Assumption~\ref{hyp:model}. Let $a \in (1, 1
/\sqrt{\rho_\cK})$.  Under Assumption~\ref{hyp:step}, the sequence
$(\alpha_n)_n$ given by (\ref{eq:definition:phin}) converges to one; hence,
there exists a (deterministic) integer $n_0$ (depending on $\cK$) such that $
\alpha_n \in (0, a)$ for all $n\geq n_0$. The identity function is in
$\cL_0(\bR^{dN})$ and by Proposition~\ref{prop:Regularity}, there exists a
solution $gf{\alpha,\bth}$ to the Poisson equation (\ref{eq:PoissonEquation})
with the $f$ equal to the identity function, for any $\alpha \in (0, a)$ and
$\theta \in \cK$; by (\ref{eq:poisson-identite}) $f_{\alpha,\bth}(x) =
(I_{dN}-\alpha\cJo\W_\theta)^{-1}(x-m^{(1)}_{\theta}(\alpha))$.  To make the
notation easier, we will set below $ f_n \eqdef f_{\alpha_{n+1},\thn}$ and $
P_n \eqdef P_{\alpha_{n+1},\thn}$.  By
Proposition~\ref{prop:invariantPI}-\ref{prop:invariantPI:item3}), there exists
a constant $C>0$ such that a.s.
\begin{equation}
\sup_{n \geq n_0} |f_{n}(x)| \indic_{E_\cK} \leq C (1+|x|)\,.
\label{eq:gbornee}
\end{equation}
Letting $x=\phi_n$ in the Poisson equation (\ref{eq:PoissonEquation}), we
obtain $\phi_{n}- m^{(1)}_{\theta_{n}}(\alpha_{n+1}) = f_n(\phi_n) - P_n f_n(\phi_n)$.
We set $w_{n+1}= e_{n+1}+c_{n+1}+s_{n+1}+t_{n}$ where
$ e_{n+1} = \la \W_{\cJ\thn}\ra \left( f_{n}(\phi_{n+1})- P_{n} f_{n}(\phi_{n})
  \right)$,
  $ c_{n+1} =  \la \W_{\cJ\thn}\ra  \, f_{n-1}(\phi_{n}) - \la \W_{\cJ\theta_{n+1}}\ra \, f_{n}(\phi_{n+1})$,
  $ s_{n+1} = \la \W_{\cJ\theta_{n+1}} - \W_{\cJ\thn}\ra  \, f_{n}(\phi_{n+1})$ and
  finally $ t_{n} = \la \W_{\cJ\thn}\ra \, \left( f_{n}(\phi_{n}) - f_{n-1}(\phi_{n}) \right)$.
As a conclusion, we have $\eta_{n+1,2} = u_n + v_n + z_n +  e_{n+1} + c_{n+1} +s_{n+1} +t_n$.

\subsection{Proof of Proposition~\ref{prop:convergence:averagesequence}}

Define $E_\cK=\{ \forall j\in \bN,\,\theta_j\in \cK\}$ and
$E_{n,\cK}=\cap_{j\leq n}\{\theta_j\in\cK\}$ for some compact set $\cK$.

We show that $\sum_n\gamma_n\eta_{n,i}<\infty$ a.s.  for both $i=1,2$. The
proposition will then follow from \cite{andrieu:2005}. By
Assumption~\ref{hyp:stability}, it is enough to show that for any fixed compact
set $\cK$, $\sum_{k \geq 1}\gamma_k \eta_{k,i}\indic_{E_\cK}$ is finite a.s.
Hereafter, $\cK$ is fixed and $n_0$ is defined as in
Section~\ref{sec:proof:decomposition}.

We first study $\eta_{n,1}$. Note that for any $\omega$, the sequence
$\indic_{E_{n,\cK}}(\omega)$ is identically equal to $\indic_{E_{\cK}}(\omega)$
for all large $n$. As a consequence, $\sum_n\gamma_n
\eta_{n,1}(\indic_{E_{\cK}}-\indic_{E_{n-1,\cK}})$ is finite a.s. and it is
therefore sufficient to prove that $\sum_n\gamma_n
\eta_{n,1}\indic_{E_{n-1,\cK}}$ is finite a.s. Since
$\eta_{n,1}\indic_{E_{n-1,\cK}}$ is a martingale difference noise, the sought
result will be obtained provided $\sum_n\gamma_n^{1+\lambda}
\bE[|\eta_{n,1}|^{1+\lambda}\indic_{E_{n-1,\cK}}]<\infty$ where $\lambda >0$
(see \emph{e.g.} \cite[Theorem 2.18]{hall:heyde:1980}); we choose $\lambda \in (0,1)$
given by Assumption~\ref{hyp:step}.  After some algebra, $\sup_n
\bE[|\eta_{n,1}|^2\indic_{E_{n-1,\cK}}] \leq 2
\sup_n\bE[|\la\cW_n(Y_n+\phi_{n-1})\ra|^2\indic_{E_{n-1,\cK}}]\leq C \sup_n
\bE[(|Y_n|^2+|\phi_{n-1}|^2)\indic_{E_{n-1,\cK}}]$ for some constant $C$ -
where we used the fact that $W_n$ is row-stocahstic and thus has bounded entries.
Assumption~\ref{hyp:mutheta}-\ref{hyp:mutheta:item2}) directly leads to
$\sup_n\bE[|Y_n|^2\indic_{E_{n-1,\cK}}]<\infty$ whereas by
Lemma~\ref{lem:agreement},
$\sup_n\bE[|\phi_{n-1}|^2\indic_{E_{n-1,\cK}}]<\infty$.  Hence,
$\sum_n\gamma_n^{1+\lambda} \bE[|\eta_{n,1}|^{1+\lambda}\indic_{E_{n-1,\cK}}]
\leq C' \sum_n \gamma_n^{1+\lambda}$ for some $C'>0$. And the upper bound is
finite by Assumption~\ref{hyp:step}. This concludes the first step.

We now study $\eta_{n,2}$ for any $n \geq n_0$.  By (\ref{eq:lipschitz1}),
there exists $C$ such that $|u_n|\indic_{E_\cK}\leq
C|\cJo\thnmu|^{\lambda_\mu}\indic_{E_\cK}\leq C \gamma_n^{\lambda_\mu} |\phi_{n-1}|^{\lambda_\mu}\indic_{E_{n-2,\cK}}$\,.
Therefore, $\bE(\indic_{E_\cK}\sum_{n} \gamma_n |u_n|)\leq C \sum_{n}
\gamma_n^{1+\lambda_\mu} \sup_n \bE(|\phi_{n-1}|\indic_{E_{n-2,\cK}})$ which is finite by
Assumption~\ref{hyp:step} and Lemma~\ref{lem:agreement}. Thus $\sum_{n}
\gamma_n |u_n| \indic_{E_\cK}$ is a.s. finite.

The term $v_n$ can be analyzed similarly: by (\ref{eq:lipschitz2})
applied with $\cK \leftarrow \cK \cup \{\cJ \theta, \theta \in \cK \}$, there
exists a constant $C$ such that $|v_n|\indic_{E_\cK}\leq C |\cJo\thn|^{\lambda_\mu}
\,|\phi_{n}|\indic_{E_{n-1,\cK}} \leq C \gamma_{n+1}^{\lambda_\mu}
|\phi_{n}|^{1+\lambda_\mu} \indic_{E_{n-1,\cK}}$ and the fact that $\sum_n\gamma_n|v_n|
\indic_{E_\cK}$ is finite a.s. follows from the same arguments as above.

We now study  $|z_n|\leq
C_v\,|m^{(1)}_{\theta_{n}}(\alpha_{n+1}) - m^{(1)}_{\cJ\thn}(1)|$.  By
Proposition~\ref{prop:Regularity}-\ref{prop:Regularity:item1}), since $\alpha_{n+1} < a
<1/\sqrt{\rho_\cK}$, there exists a constant $C'$ such that
$  \sum_n \gamma_n \bE(|z_n| \indic_{E_\cK}) $ is no larger than $C' \sum_n | \gamma_n - \gamma_{n+1} | +
    \gamma_n^{1+\lambda_\mu} \sup_k
    \bE(|\phi_{k}|^{\lambda_\mu}\indic_{E_{k-1,\cK}})  $.
The latter is finite by Lemma~\ref{lem:agreement} and Assumption~\ref{hyp:step}.
Hence, $\sum_n \gamma_n |z_n| \indic_{E_\cK}$ is finite a.s.

$(e_n)_n$ is a martingale-increment sequence: as above for the term
$\eta_{n,1}$, $\sum_n \gamma_n e_n \indic_{E_\cK}$ is finite a.s. if $\sup_n
\bE(|e_{n+1}|^{1+\lambda} \indic_{E_{n,\cK}}) < \infty$. This holds true by (\ref{eq:gbornee}) and Lemma~\ref{lem:agreement}.

Let us now investigate $c_{n+1}$. We write
$  \sum_{k=1}^n\gamma_{k+1} c_{k+1} = \sum_{k=2}^{n}(\gamma_{k+1}-\gamma_k) \la \W_{\cJ\theta_k}\ra  f_{k-1}(\phi_{k}) 
  -\gamma_{n+1} \la \W_{\cJ\theta_{n+1}}\ra f_n(\phi_{n+1})+\gamma_2 \la
  \W_{\cJ\theta_1}\ra f_0(\phi_1)\,.$
Using again (\ref{eq:gbornee}) and
Lemma~\ref{lem:agreement}, there exists $C>0$ such that
$
\sum_{k=1}^n\gamma_{k+1} \bE\left( |c_{k+1}| \indic_{E_{\cK}} \right) \leq C \left(\sum_{k\geq1} |
  \gamma_{k+1}-\gamma_k| + \gamma_n + 1 \right) \, .
$
The right hand side is finite by Assumption~\ref{hyp:step}, thus implying that $\sum_n
\gamma_n c_n \indic_{E_\cK}$ is finite a.s.

Consider the term $s_{n+1}$. Following similar arguments and using
(\ref{eq:gbornee}) again, we obtain
$$
\sum_{k\leq n}\gamma_k |s_k|\indic_{E_{\cK}}\leq C\sum_{k\leq n}\gamma_k\|\la \W_{\cJ\theta_k} - \W_{\cJ \theta_{k-1}}\ra \| (1+|\phi_k|)\indic_{E_{\cK}}
$$
for some constant $C$ which depends only on $\cK$. By
condition~(\ref{eq:lipschitz2}) and Lemma~\ref{lem:deltaTheta}, one has $\|\la
\W_{\cJ\theta_k} - \W_{\cJ \theta_{k-1}}\| \indic_{E_\cK}\leq C_\cK \gamma_{k}^{\lambda_\mu}
\left( |Y_{k}|^{\lambda_\mu} + | \phi_{k-1}|^{\lambda_\mu}\right)\indic_{E_{\cK}}$.  By Cauchy-Schwarz
inequality, Assumption~\ref{hyp:mutheta} and Lemma~\ref{lem:agreement}, it can
be proved that
\begin{equation}
  \label{eq:controle:moments:Yphi}
   \sup_k \bE\left[(|Y_k|+|\phi_{k-1}|) (1+|\phi_k|)\indic_{E_{\cK}}
\right] < \infty \, .
\end{equation}
By Assumption~\ref{hyp:step}, $\E(\sum_{k}\gamma_k
|s_k|\indic_{E_{\cK}})$ is finite thus implying that $ \sum_{k\geq 1}
\gamma_ks_k\indic_{E_\cK}$ exists a.s.

Finally consider the term $t_{n}$.  By 
Proposition~\ref{prop:Regularity}-\ref{prop:Regularity:item2}), there exists a
constant $C$ such that for any $n \geq n_0$,$ |t_{n}| \indic_{E_\cK} \leq C
\left(|\alpha_{n} - \alpha_{n-1}| + |\theta_{n} - \theta_{n-1}|^{\lambda_\mu}
\right) \left(1 + |\phi_{n}| \right)$.  By Lemma~\ref{lem:deltaTheta},
(\ref{eq:controle:moments:Yphi}) and Assumption~\ref{hyp:step}, it can be shown
that $\sum_n \gamma_n \bE(|t_n| \indic_{E_\cK}) < \infty$ which proves that
$\sum_n \gamma_n t_n \indic_{E_\cK}$ converges a.s.

\section{Proof of Theorem~\ref{theo:TCL}}
\label{sec:proof:TCL}
The core of the proof consists in checking the conditions of \cite[Theorem
2.1]{fort:2013}.  To make the notations easier, we write the proofs in the case
$d=1$ and under the assumption that $\lim_n \theta_n = \theta_\star \un$
almost-surely.  
Throughout the proof, we will write that a sequence of r.v. $(Z_n)_n$ is
$O_{w.p.1}(1)$ iff $\sup_n |Z_n| < \infty$ almost-surely; and $(Z_n)_n$ is
$O_{L^1}(1)$ iff $\sup_n \EE{|Z_n|} < \infty$.


Fix $\delta >0$. Set for any positive integers $m \leq k$
$\cA_m \eqdef \bigcap_{j \geq m} \{ |\theta_j - \theta_\star \un| \leq \delta$.
From Section~\ref{sec:proof:decomposition}, it holds
$
\la \theta_{n+1} \ra= \la \theta_n \ra + \gamma_{n+1} h(\la \theta_n \ra) +
\gamma_{n+1} E_{n+1} + \gamma_{n+1} R_{n+1}
$
where $  E_{n+1}  \eqdef \la \cW_{n+1} \left(Y_{n+1} + \phi_n \right) \ra - \left( \la
  z_{\theta_n}  \ra + \la \W_{\theta_n} \ra \phi_n  \right) 
  + \la \W_{\cJ \theta_n} \ra  \left( f_n(\phi_{n+1}) - P_n f_n(\phi_n) \right)$ and where
  $R_{n+1}  \eqdef  u_n+ v_n+ z_n +c_{n+1} +s_{n+1} +t_n$.
Note that $\EE{ E_{n+1} \vert \F_n} =0$ \emph{i.e.}, $(E_n)_n$ is a $\cF_n$-adapted
martingale increment. From the expression of $f_n = f_{\alpha_{n+1}, \theta_n}$
(see Proposition~(\ref{eq:poisson-identite})), we have
\begin{equation}
  \label{eq:Proof:TCL:poisson}
  f_{\alpha, \theta}(y) - { P_{\alpha,\theta} f_{\alpha, \theta}(x)} =
\B_{\alpha,\theta} \left(y - \alpha \cJ_\bot \W_\theta x - \alpha
 \cJ_\bot  z_\theta\right)
\end{equation}
with $\B_{\alpha,\theta} \eqdef \left(I_{dN} - \alpha \cJ_\bot \W_\theta
\right)^{-1}$. Hence, 
\small
\begin{align*}
E_{n+1}  = &\, \la \cW_{n+1} \left(Y_{n+1} + \phi_n \right) \ra - \la
  z_{\theta_n}  \ra - \la \W_{\theta_n} \ra \phi_n \\
  &\, \,  + \la \W_{\cJ \theta_n} \ra \B_{\alpha_{n+1},\theta_n} \left( \phi_{n+1} -
    \alpha_{n+1} \cJ_\bot \left( \W_{\theta_n} \phi_n + z_{\theta_n}
    \right)\right).
    \end{align*}
\normalsize

\vspace{-0.3cm}

\subsection{Checking condition C2 of \cite[Theorem
  2.1]{fort:2013}} 

   We start with a preliminary Lemma which extends Lemma~\ref{lem:agreement}. The proof follows the same line and is thus omitted.
\begin{lemma}\label{lem:TCL:agreementTau} 
  Let Assumptions~\ref{hyp:step}-\ref{hyp:step:item1}),~\ref{hyp:mutheta},
  \ref{hyp:TCL:momentY} and \ref{hyp:TCL:model} hold.  Let $(\phi_n)_{n \geq
    0}$ be the sequence given by (\ref{eq:definition:phin}) and $\tau$ be given
  by Assumption~\ref{hyp:TCL:momentY}.  For any compact set $\cK\subset
  \bR^{dN}$,
\[
\sup_n \bE \left(|\phi_n|^{2+\tau} \indic_{\bigcap_{j \leq n-1} \{\theta_j \in
    \cK \}} \right)<\infty \, .
\]
Let $\tilde \rho_\cK$ be given by Assumption~\ref{hyp:TCL:model}. For any $a
\in (0,1 /\sqrt{\tilde \rho_\cK})$,
$
\sup_{\alpha \in [0,a], \theta \in \cK} \int |x|^{2+\tau}
d\pi_{\alpha,\theta}(x) < \infty $.
\end{lemma}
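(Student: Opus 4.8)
The plan is to rerun, almost verbatim, the proof of Lemma~\ref{lem:agreement} and the drift step of the proof of Proposition~\ref{prop:invariantPI}, with the exponent $2$ replaced by $2+\tau$ throughout and with Assumptions~\ref{hyp:TCL:momentY} and~\ref{hyp:TCL:model} playing the roles that Assumption~\ref{hyp:mutheta}-\ref{hyp:mutheta:item2}) and Assumption~\ref{hyp:model} play there.

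For the bound on $(\phi_n)_n$, start from $\phi_n = \alpha_n ((J_\bot W_n)\otimes I_d)(\phi_{n-1}+Y_n)$ (recall~(\ref{eq:thOrth})). Using Assumption~\ref{hyp:mutheta}-\ref{hyp:mutheta:item1}) to identify the conditional law of $(Y_n,W_n)$ given $\cF_{n-1}$, and then Assumption~\ref{hyp:TCL:model} with inner argument $\phi_{n-1}+y$ (if one reads that assumption in its ``$\phi$ only'' form, one first splits the two contributions of $\phi_n$ via the triangle inequality in conditional $L^{2+\tau}$ and dominates the $Y_n$-part through $|((J_\bot w)\otimes I_d)y|\le C|y|$), one obtains, on $\{\theta_{n-1}\in\cK\}$,
\[
\E\left[ |\phi_n|^{2+\tau} \,|\, \cF_{n-1}\right] \le \alpha_n^{2+\tau}\,\tilde\rho_\cK \int |\phi_{n-1}+y|^{2+\tau}\, d\mu_{\theta_{n-1}}(y,w)\,.
\]
Next, for a small $\eta>0$, use the elementary inequality $(s+t)^{2+\tau}\le (1+\eta)s^{2+\tau}+C_\eta t^{2+\tau}$ (valid since $2+\tau>1$) together with Assumption~\ref{hyp:TCL:momentY} to bound $\int |y|^{2+\tau}d\mu_{\theta_{n-1}}(y,w)$ uniformly over $\theta_{n-1}\in\cK$, so that on $\{\theta_{n-1}\in\cK\}$
\[
\E\left[ |\phi_n|^{2+\tau} \,|\, \cF_{n-1}\right] \le (1+\eta)\,\alpha_n^{2+\tau}\,\tilde\rho_\cK\,|\phi_{n-1}|^{2+\tau}+C'
\]
for some $C'=C'(\cK,\eta)$. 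Choosing $\eta$ with $(1+\eta)\tilde\rho_\cK<1$ and using $\alpha_n\to 1$ (Assumption~\ref{hyp:step}-\ref{hyp:step:item1})), there are $\delta\in(0,1)$ and a deterministic $n_0=n_0(\cK)$ with $(1+\eta)\alpha_n^{2+\tau}\tilde\rho_\cK\le 1-\delta$ for $n\ge n_0$. Setting $U_n\eqdef |\phi_n|^{2+\tau}\indic_{\bigcap_{j\le n-1}\{\theta_j\in\cK\}}$ and proceeding exactly as at the end of the proof of Lemma~\ref{lem:agreement} (take expectations, use $\indic_{\bigcap_{j\le n-1}\{\theta_j\in\cK\}}\le\indic_{\bigcap_{j\le n-2}\{\theta_j\in\cK\}}$), one gets $\E[U_n]\le (1-\delta)\E[U_{n-1}]+b$ for $n\ge n_0$, hence $\E[U_n]\le (1-\delta)^{n-n_0}\E[U_{n_0}]+b/\delta$; and $\E[U_{n_0}]<\infty$ follows from finitely many iterations of the same bound. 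This proves the first assertion.

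For the bound on $\pi_{\alpha,\theta}$, I would use the canonical chain $(\varphi_n^x)_n$ of transition kernel $P_{\alpha,\theta}$ built in the proof of Proposition~\ref{prop:invariantPI} from an i.i.d.\ sequence with law $\mu_\theta$. The very same computation as above, now with the fixed constant $\alpha\le a$ in place of $\alpha_n$, yields a geometric drift $\int P_{\alpha,\theta}^n(x,dz)\, |z|^{2+\tau}\le \kappa^n|x|^{2+\tau}+c$ with $\kappa<1$ and $c$ both uniform over $\alpha\in[0,a]$ and $\theta\in\cK$, provided $a$ lies in the stated range. Taking $x=0$, letting $n\to\infty$, and using the ergodicity $P_{\alpha,\theta}^n(x,\cdot)\to\pi_{\alpha,\theta}$ (Proposition~\ref{prop:invariantPI}-\ref{prop:invariantPI:item2})) together with Fatou's lemma (portmanteau, since $z\mapsto|z|^{2+\tau}$ is nonnegative and continuous) gives $\int |z|^{2+\tau}d\pi_{\alpha,\theta}(z)\le c$, uniformly in $(\alpha,\theta)$.

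I expect the only delicate point to be the bookkeeping of the contraction factor: the exponent $2+\tau$ makes $\tilde\rho_\cK^{1/(2+\tau)}$ (rather than $\sqrt{\rho_\cK}$) the natural per-step contraction, so one has to line up the normalization of $\tilde\rho_\cK$ in Assumption~\ref{hyp:TCL:model} with the condition $a<1/\sqrt{\tilde\rho_\cK}$ stated in the lemma (equivalently, read $\tilde\rho_\cK^{1/(2+\tau)}$ there as the analogue of $\sqrt{\rho_\cK}$ in Assumption~\ref{hyp:model}); once the exponents are matched, the geometric ratio is $<1$ and the argument closes. For the $(\phi_n)_n$ part there is no constraint on any parameter, since $\alpha_n\to1$ and $\tilde\rho_\cK<1$ already leave room for the extra factor $1+\eta$. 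Everything else is a verbatim repetition of the two cited proofs.
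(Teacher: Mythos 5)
Your proof is correct and is exactly the argument the paper intends: the paper omits the proof of this lemma, stating only that it ``follows the same line'' as Lemma~\ref{lem:agreement}, and your rerun of that drift recursion with exponent $2+\tau$ (Minkowski in conditional $L^{2+\tau}$ to absorb the $Y_n$-term, the $(1+\eta)$-Young inequality, the geometric induction on $\E[U_n]$, and the Fatou/portmanteau step for the invariant-measure moment) is the standard way to carry it out. Your remark on the contraction bookkeeping is also a legitimate catch: with Assumption~\ref{hyp:TCL:model} normalized as written, the drift closes only for $a<\tilde\rho_\cK^{-1/(2+\tau)}$, a strictly smaller range than the stated $a<1/\sqrt{\tilde\rho_\cK}$ since $\tau>0$, though this is harmless in the paper's applications where $a$ is taken close to $1$.
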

Let $m \geq 1$.  From Assumption~\ref{hyp:TCL:momentY}
and Lemma~\ref{lem:TCL:agreementTau}, it is easily seen from the above expression of $E_{n+1}$ 
that $\sup_n \EE{\left|
    E_{n+1}\right|^{2 +\tau} \un_{\bigcap_{m \leq j \leq n} \{ |\theta_j - \theta_\star \un| \leq \delta
\} }} < \infty$ where $\tau$ is given
by Assumption~\ref{hyp:TCL:momentY}.

In order to derive the asymptotic covariance, we go further in the
expression of the conditional covariance $\EE{E_{n+1}^2 \vert \cF_n }$.
We write $\EE{E_{n+1}^2 \vert \cF_n }= \Xi(\alpha_{n+1},
\theta_n,\phi_n)$ where $\Xi(\alpha, \theta,x) \eqdef \int \left( \xi_{\alpha, \theta,x}(y,w) \right)^2 \ d\mu_{\theta}(y, w)$
\begin{equation}
\xi_{\alpha, \theta,x}(y,w) \eqdef \A_{\alpha,\theta} \left(\left(w  - \W_\theta \right) x + \left(  w y  - z_\theta \right) \right)  \label{eq:Expression:xi} 
\end{equation}
and $\A_{\alpha,\theta} \eqdef \frac{\un\T}{N} \left( I_{dN}+ \alpha \, \W_{\cJ \theta} \,  \left(I_{dN} - \alpha \cJ_\bot \W_\theta
\right)^{-1}  \cJ_\bot  \right).$
Set $\pi_\star \eqdef \pi_{1, \theta_\star \un}$ and $\pi_n \eqdef \pi_{\alpha_{n+1},
  \theta_n}$ where $\pi_{\alpha,\theta}$ is defined by
Proposition~\ref{prop:invariantPI}.  We write
\small
\begin{align*}
  & \Xi(\alpha_{n+1}, \theta_n,\phi_n) = \Xi(\alpha_{n+1}, \theta_n,\phi_n) -
  \Xi(1, \theta_n, \phi_n)   \\
  & + \int \Xi(1, \theta_n, x) d \pi_n(x) - \int \Xi(1, \theta_\star \un, x) d
  \pi_{\star}(x)  \\
  & + \Xi(1,\theta_n, \phi_n) - \int \Xi(1,\theta_n, x) d \pi_n(x) \\
  & + \int \Xi(1, \theta_\star \un, x) d \pi_{\star}(x) \ .
\end{align*}
\normalsize

For any $m \geq 1$, we have on the set $\cA_m$

\begin{align*}
  & \left( \Xi(\alpha_{n+1}, \theta_n,\phi_n) - \Xi(1, \theta_n, \phi_n) \right) \to  0 \ \text{a.s.} \\
  & \left( \int \Xi(1, \theta_n, x) d \pi_n(x) - \int \Xi(1,
    \theta_\star \un, x) d \pi_{\star}(x) \right) \to  0 \ \text{a.s.} \\
  &  \gamma_n \E{\left| \sum_{k=1}^n \left\{\Xi(1,\theta_k, \phi_k) -
        \int \Xi(1,\theta_k, x) d \pi_l(x) \right\} \right| \un_{\cA_m}} \to 0 \ .
\end{align*}

The detailed computations are given in Section~\ref{sec:longlongdemo}. This
implies that the key quantity involved in the asymptotic covariance matrix is $
\int \Xi(1, \theta_\star \un, x) d \pi_{\star}(x) $.

\vspace{-0.3cm}

\subsection{Expression of $U_\star$} 
Set
$
U_\star \eqdef  \int \Xi(1, \un\otimes\theta_\star,x) \, d\pi_{1,\un\otimes\theta_\star}(x)\,.
$
Lemma~\ref{lem:Ustar} gives an explicit expression for $U_\star$.

\begin{lemma}
\label{lem:Ustar}
Under the assumptions of Theorem~\ref{theo:TCL},
$
\vect U_\star = (\A_\star\otimes \A_\star) (\cR_\star\,m^{(2)}_\star +2\cT_\star
m^{(1)}_\star +\cS_\star).
$
\end{lemma}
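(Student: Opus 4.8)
The plan is to compute $U_\star \eqdef \int \Xi(1,\un\otimes\theta_\star,x)\,d\pi_{1,\un\otimes\theta_\star}(x)$ by substituting the explicit forms of $\Xi$ and $\xi_{\alpha,\theta,x}$ at $(\alpha,\theta)=(1,\un\otimes\theta_\star)$ and then rearranging the resulting integral with the $\vect$/Kronecker calculus. First I would check that at this point $\A_{\alpha,\theta}$ (as defined below~\eqref{eq:Expression:xi}) coincides with $\A_\star$ of Theorem~\ref{theo:TCL}: indeed $\cJ(\un\otimes\theta_\star)=\un\otimes\theta_\star$, so $\W_{\cJ(\un\otimes\theta_\star)}=\W_{\un\otimes\theta_\star}$, and $\frac{\un^T}{N}(\cdot)=\la\,\cdot\,\ra$. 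With $R_\star(w)=(w\otimes I_d)-\W_{\un\otimes\theta_\star}$ and $\upsilon_\star(y,w)=(w\otimes I_d)y-z_{\un\otimes\theta_\star}$, the integrand of~\eqref{eq:Expression:xi} becomes $\xi_{1,\un\otimes\theta_\star,x}(y,w)=\A_\star\bigl(R_\star(w)\,x+\upsilon_\star(y,w)\bigr)$, whence
\[
U_\star=\A_\star\!\left(\int\!\!\int\bigl(R_\star(w)x+\upsilon_\star(y,w)\bigr)\bigl(R_\star(w)x+\upsilon_\star(y,w)\bigr)^{\!T}\,d\mu_{\un\otimes\theta_\star}(y,w)\,d\pi_{1,\un\otimes\theta_\star}(x)\right)\!\A_\star^T .
\]

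Next I would exploit the product structure of this double integral, in which $x$ is distributed according to $\pi_\star\eqdef\pi_{1,\un\otimes\theta_\star}$ and, independently, $(y,w)$ according to $\mu_\star\eqdef\mu_{\un\otimes\theta_\star}$. Proposition~\ref{prop:momentPITheta} (at $\alpha=1$, $\theta=\un\otimes\theta_\star$) identifies $\int x\,d\pi_\star(x)=m^{(1)}_\star$ and $\vect\!\bigl(\int xx^T\,d\pi_\star(x)\bigr)=m^{(2)}_\star$, both finite by Proposition~\ref{prop:invariantPI} (and Lemma~\ref{lem:TCL:agreementTau} for the higher moment used to apply Fubini). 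Expanding the outer product into $R_\star xx^T R_\star^T$, $R_\star x\upsilon_\star^T$, $\upsilon_\star x^T R_\star^T$ and $\upsilon_\star\upsilon_\star^T$, integrating, and applying the identities $\vect(ABC)=(C^T\otimes A)\vect B$ and $(A\otimes C)(B\otimes D)=(AB)\otimes(CD)$ gives: for the first term, $\vect\!\bigl(\A_\star\E[R_\star xx^T R_\star^T]\A_\star^T\bigr)=(\A_\star\otimes\A_\star)\cR_\star m^{(2)}_\star$; for each cross term, $\vect\!\bigl(\A_\star\E[R_\star m^{(1)}_\star\upsilon_\star^T]\A_\star^T\bigr)=(\A_\star\otimes\A_\star)\cT_\star m^{(1)}_\star$, using $\vect(R_\star m^{(1)}_\star\upsilon_\star^T)=(\upsilon_\star\otimes R_\star)m^{(1)}_\star$; and for the last term, $\vect\!\bigl(\A_\star\E[\upsilon_\star\upsilon_\star^T]\A_\star^T\bigr)=(\A_\star\otimes\A_\star)\cS_\star$. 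Since (as throughout Section~\ref{sec:proof:TCL}) the computation is carried out for $d=1$, $U_\star$ is scalar and equals its transpose, so the two cross terms coincide and add up to $2(\A_\star\otimes\A_\star)\cT_\star m^{(1)}_\star$; collecting the three pieces yields $\vect U_\star=(\A_\star\otimes\A_\star)\bigl(\cR_\star m^{(2)}_\star+2\cT_\star m^{(1)}_\star+\cS_\star\bigr)$.

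There is essentially no analytic difficulty here: the whole argument is a substitution followed by linear algebra, and the only points demanding care are formal — matching $\A_{1,\un\otimes\theta_\star}$ with $\A_\star$ (and likewise the $\alpha=1$ instances of $m^{(1)}_\theta(\alpha)$, $m^{(2)}_\theta(\alpha)$ with their starred counterparts via Proposition~\ref{prop:momentPITheta}), keeping the ordering of the Kronecker factors in $\cR_\star$, $\cT_\star$, $\cS_\star$ consistent with the $\vect$ identities, and invoking Proposition~\ref{prop:invariantPI} and Lemma~\ref{lem:TCL:agreementTau} to justify finiteness of the moments and the interchange of integrations. The resulting expression is precisely the one appearing in the statement of Theorem~\ref{theo:TCL}, so Lemma~\ref{lem:Ustar} follows.
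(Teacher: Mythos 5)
Your proof is correct and follows essentially the same route as the paper's: substitute $\xi_{1,\un\otimes\theta_\star,x}(y,w)=\A_\star\bigl(R_\star(w)x+\upsilon_\star(y,w)\bigr)$ into $\Xi$, expand the outer product, apply the $\vect$/Kronecker identities to get $(\A_\star\otimes\A_\star)\bigl(\cR_\star\vect(xx^T)+2\cT_\star x+\cS_\star\bigr)$, and integrate $x$ against $\pi_{1,\un\otimes\theta_\star}$ using Proposition~\ref{prop:momentPITheta}. The only additions are routine verifications (matching $\A_{1,\un\otimes\theta_\star}$ with $\A_\star$, moment bounds for Fubini) that the paper leaves implicit.
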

\begin{proof}
    For simplicity, we use the notations $R_\theta(w)\eqdef w-\W_\theta$ and
    $\upsilon_\theta(y,w) \eqdef w y- z_\theta$ and
$
\tilde T_{\theta,x}(y,w) \eqdef
(R_\theta(w)x+\upsilon_\theta(y,w))(R_\theta(w)x+\upsilon_\theta(y,w))^T \ .
$
Note that $\tilde T_{\theta,x}(y,w)$ coincides with
$
R_\theta(w)xx^TR_\theta(w)^T +2 R_\theta(w)x\upsilon_\theta(y,w)^T+ \upsilon_\theta(y,w)\upsilon_\theta(y,w)^T\,.
$
From (\ref{eq:Expression:xi}), $\xi_{\alpha, \theta,x}(y,w) =
\A_{\alpha,\theta} (R_\theta(w)x+\upsilon_\theta(y,w))$ so that
$
\vect \Xi(\alpha, \theta,x) = (\A_{\alpha,\theta}\otimes \A_{\alpha,\theta}) \int  \vect \tilde T_{\theta,x}(y,w)\ d\mu_{\theta}(y, w) \ .
$
Applying the $\vect$ operator on $\tilde T_{\theta,x}(y,w)$ yields
$  (R_\theta(w)\otimes R_\theta(w))\vect(xx^T) + 2 (\upsilon_\theta(y,w)\otimes R_\theta(w)) x 
  + \vect (\upsilon_\theta(y,w)\upsilon_\theta(y,w)^T)\,.$
When applied with $\alpha=1$ and $\theta = \theta_\star \un$, it holds
$
\vect \Xi(1, \theta_\star \un,x) = (\A_\star \otimes
\A_\star) (\cR_\star\,\vect(xx^T) +2\cT_\star x +\cS_\star)\,.
$
This yields the result by integrating $x$ w.r.t.  $\pi_\star$.
\end{proof}

\vspace{-0.2cm}

\subsection{Checking condition C3 of \cite[Theorem
2.1]{fort:2013}}
We first prove that for any $m \geq 1$,
\begin{equation}
  \label{eq:proofTCL:step1:reste}
  \left|  u_n  + v_n + z_n +   s_{n+1} + t_n \right|  \un_{\cA_m} \leq \sqrt{\gamma_n}
o(1) O_{L^1}(1) \ .
\end{equation} Let $m \geq 1$. By (\ref{eq:definition:phin}) and
Proposition~\ref{prop:Regularity}-\ref{prop:Regularity:item1}), there exists
a constant $C_1$ such that almost-surely on the set $\cA_m$,
$  \left|z_n \right|  \leq C_1 \left( |\alpha_{n+1} -1| + |\Jo
    \theta_n |^{\lambda_\mu} \right)  \leq C_1 \left( |\alpha_{n+1} -1| + \gamma_{n+1}^{\lambda_\mu} \right)
  \left(1+| \phi_n|^{\lambda_\mu} \right)$.
Assumption~\ref{hyp:TCL:pas}, Lemma~\ref{lem:agreement} and $\lambda_\mu >
1/2$ imply that $ \left| z_n \right|\un_{\cA_m} = \sqrt{\gamma_n}
o(1)O_{L¹}(1)$. By Assumption~\ref{hyp:lipschitz},
Proposition~\ref{prop:invariantPI}-\ref{prop:invariantPI:item3}) and
Lemma~\ref{lem:deltaTheta}, there exist a constant $C_2>0$ and $n \geq n_0$
such that almost-surely, for all $n \geq n_0$,
$
\left|s_{n+1} \right| \un_{\cA_m} \leq C_2 \gamma_n^{\lambda_\mu}
\left(|Y_{n+1}|^{\lambda_\mu} + |\phi_n|^{\lambda_\mu} \right) \left(1 +
  |\phi_{n+1}| \right)\un_{\cA_m} \,.
$ 
Assumption~\ref{hyp:mutheta}, Lemma~\ref{lem:agreement} and the condition
$\lambda_\mu > 1/2$ imply that $\left|s_{n+1} \right| \un_{\cA_m} =
\sqrt{\gamma_n} O_{L^1}(1)$.  By Proposition~\ref{prop:Regularity}-\ref{prop:Regularity:item2}) and
Lemma~\ref{lem:deltaTheta}, there exist a constant $C_3>0$ and $n_0$ such that
almost-surely, for any $n \geq n_0$,
$
\left|t_{n} \right| \un_{\cA_m} \leq C_3 \left( |\alpha_{n+1} - \alpha_{n}| +
  \gamma_n^{\lambda_\mu} \left( |Y_{n}|^{\lambda_\mu} + |\phi_n|^{\lambda_\mu}
  \right)\right)\un_{\cA_m}$.
Lemma~\ref{lem:agreement}, Assumption~\ref{hyp:TCL:pas} and $\lambda_\mu>1/2$
imply that $\left|t_{n+1} \right| \un_{\cA_m} = \sqrt{\gamma_n} o(1)
O_{L^1}(1)$.  By Assumption~\ref{hyp:lipschitz}, there exists a constant
$C_4>0$ such that almost-surely, $ \left| u_n \right| \un_{\cA_m} \leq C_4
\gamma_n^{\lambda_\mu} |\phi_n|^{\lambda_\mu}\un_{\cA_m}$.
Lemma~\ref{lem:agreement} and the property $\lambda_\mu >1/2$ imply $u_n =
o(\sqrt{\gamma_n}) O_{L^1}(1)$. Finally, by Assumption~\ref{hyp:lipschitz},
there exists a constant $C$ such that almost-surely, $\left| v_n \right|
\un_{\cA_m} \leq C \gamma_{n+1}^{\lambda_\mu} |\phi_n|^{1+\lambda_\mu}
\un_{\cA_m}$ so that by Lemma~\ref{lem:agreement} again and the condition
$\lambda_\mu >1/2$, $v_n = o(\sqrt{\gamma_n}) O_{L^1}(1)$.
The above discussion concludes the proof of (\ref{eq:proofTCL:step1:reste}).

The second step is to prove that for any $m \geq 1$, $\sqrt{\gamma_n}
\sum_{k=1}^n c_k \un_{\cA_m} = o(1) O_{w.p.1.}(1) O_{L^1}(1)$. By (\ref{eq:gbornee}), there exists a constant $C>0$ such
that almost-surely,
\[
\left| \sum_{k=1}^n c_k \right| \un_{\cA_m} \leq C \left( 1 + |\phi_0| +
  |\phi_n| \right) \un_{\cA_m}\ .
\]
Lemma~\ref{lem:agreement} implies that $\sum_{k=1}^n c_k = O_{L^1}(1)$.
This concludes the proof of the condition C3 in \cite{fort:2013}.

\vspace{-0.3cm}

\subsection{Detailed computations for verifying the condition C2}
\label{sec:longlongdemo}
The proof of the following lemma follows from
standard computations and is thus omitted.
\begin{lemma} 
\label{lem:TCL:C2} Let Assumptions~\ref{hyp:mutheta}, \ref{hyp:TCL:model} and \ref{hyp:TCL:ContinuiteMuTheta}-\ref{hyp:TCL:ContinuiteMuTheta:item2}) to hold.
Let $\delta >0$ and set $\cK \eqdef \{ \theta: |\theta - \theta_\star \un |
\leq \delta \}$.  Fix $a \in (0, 1/\sqrt{\tilde \rho_\cK})$ where $\tilde
\rho_\cK$ be given by Assumption~\ref{hyp:TCL:model}.  There exists a constant
$C$ such that for any $\theta, \theta'\in \cK$, $\alpha,\alpha' \in [0,a]$,
$x, z, y \in \Rset^{dN}$ and $w \in \mathsf M_1$
\begin{align*}
  & \left| \xi_{\alpha,\theta,x}(y,w)\right| \leq C \left(1 + |y| + |x| \right) \, , \\
  & \| \A_{\alpha,\theta} - \A_{\alpha',\theta'}\| \leq C \left( \left|\alpha-
      \alpha'
    \right| + \left| \theta- \theta' \right|^{\lambda_\mu} \right) \, , \\
  & \left|\xi_{\alpha, \theta,x}(y,w) - \xi_{\alpha',\theta',x}(y,w) \right|
  \\
  & \quad \leq C \left( \left|\alpha- \alpha' \right| + \left| \theta- \theta'
    \right|^{\lambda_\mu} \right) \left(1+ |x| + |y| \right) \ , \\
  & \left|\xi_{\alpha, \theta,x}(y,w) - \xi_{\alpha,\theta,z}(y,w) \right| \leq C \ 
  \left|x-z\right| 
\end{align*}
where $\lambda_\mu$ is given by Assumptions~\ref{hyp:mutheta} and
\ref{hyp:TCL:ContinuiteMuTheta}-\ref{hyp:TCL:ContinuiteMuTheta:item2}).
\end{lemma}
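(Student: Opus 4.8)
The plan is to deduce all four bounds from elementary matrix-perturbation estimates, the only non-trivial ingredient being the resolvent identity for $(I_{dN}-\alpha\cJo\W_\theta)^{-1}$. Throughout, $C$ denotes a constant depending only on $\cK$, possibly changing from line to line. First I would record the uniform bounds that make everything go through. Since every $w\in\mathsf M_1$ is row-stochastic with non-negative entries, $\|w\otimes I_d\|\le\sqrt N$, hence $\|\W_\theta\|\vee\|\W_{\cJ\theta}\|\le\sqrt N$ for all $\theta$; by Assumption~\ref{hyp:mutheta}-\ref{hyp:mutheta:item2}), $\sup_{\theta\in\cK}|z_\theta|\le\sqrt N\,\big(\sup_{\theta\in\cK}\int|y|^2d\mu_\theta(y,w)\big)^{1/2}<\infty$; and, since $\sup_{\theta\in\cK}\|\cJo\W_\theta\|<1$ (a consequence of Assumption~\ref{hyp:model}), for $\alpha$ in the range $[0,a]$ fixed in the statement the Neumann series gives $\sup_{\alpha\in[0,a],\theta\in\cK}\|(I_{dN}-\alpha\cJo\W_\theta)^{-1}\|<\infty$ (this is already in Proposition~\ref{prop:invariantPI}, using also Assumption~\ref{hyp:TCL:model}). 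Combining these, $\sup_{\alpha\in[0,a],\theta\in\cK}\|\A_{\alpha,\theta}\|\le C$.

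Next, writing $b_\theta(x,y,w)\eqdef(w-\W_\theta)x+wy-z_\theta$, so that $\xi_{\alpha,\theta,x}(y,w)=\A_{\alpha,\theta}\,b_\theta(x,y,w)$, the first inequality is immediate from $|b_\theta(x,y,w)|\le(\|w\|+\|\W_\theta\|)|x|+\|w\|\,|y|+|z_\theta|\le C(1+|x|+|y|)$ together with the bound on $\|\A_{\alpha,\theta}\|$. Likewise, since $\xi_{\alpha,\theta,x}(y,w)-\xi_{\alpha,\theta,z}(y,w)=\A_{\alpha,\theta}(w-\W_\theta)(x-z)$, the last inequality follows from $\|\A_{\alpha,\theta}\|\,\|w-\W_\theta\|\,|x-z|\le C|x-z|$.

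The core step is the bound on $\|\A_{\alpha,\theta}-\A_{\alpha',\theta'}\|$. Put $M_{\alpha,\theta}\eqdef(I_{dN}-\alpha\cJo\W_\theta)^{-1}$, so that $\A_{\alpha,\theta}=\frac{\un\T}{N}\big(I_{dN}+\alpha\,\W_{\cJ\theta}M_{\alpha,\theta}\cJo\big)$ and hence $\A_{\alpha,\theta}-\A_{\alpha',\theta'}=\frac{\un\T}{N}\big(\alpha\W_{\cJ\theta}M_{\alpha,\theta}-\alpha'\W_{\cJ\theta'}M_{\alpha',\theta'}\big)\cJo$. I would telescope the middle factor as $(\alpha-\alpha')\W_{\cJ\theta}M_{\alpha,\theta}+\alpha'(\W_{\cJ\theta}-\W_{\cJ\theta'})M_{\alpha,\theta}+\alpha'\W_{\cJ\theta'}(M_{\alpha,\theta}-M_{\alpha',\theta'})$ and bound each summand: the first by $C|\alpha-\alpha'|$; the second, using \eqref{eq:lipschitz2} applied on the compact set $\cK\cup\cJ\cK$ together with $\|\cJ\|\le1$, by $C|\cJ\theta-\cJ\theta'|^{\lambda_\mu}\le C|\theta-\theta'|^{\lambda_\mu}$; and the third via the resolvent identity $M_{\alpha,\theta}-M_{\alpha',\theta'}=M_{\alpha,\theta}\big(\alpha\cJo\W_\theta-\alpha'\cJo\W_{\theta'}\big)M_{\alpha',\theta'}$ together with $\|\alpha\cJo\W_\theta-\alpha'\cJo\W_{\theta'}\|\le|\alpha-\alpha'|\,\|\cJo\W_\theta\|+\alpha'\|\cJo\|\,\|\W_\theta-\W_{\theta'}\|\le C(|\alpha-\alpha'|+|\theta-\theta'|^{\lambda_\mu})$, again by \eqref{eq:lipschitz2}. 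Summing gives $\|\A_{\alpha,\theta}-\A_{\alpha',\theta'}\|\le C(|\alpha-\alpha'|+|\theta-\theta'|^{\lambda_\mu})$.

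Finally, for the third inequality I would write $\xi_{\alpha,\theta,x}(y,w)-\xi_{\alpha',\theta',x}(y,w)=(\A_{\alpha,\theta}-\A_{\alpha',\theta'})b_\theta(x,y,w)+\A_{\alpha',\theta'}\big((\W_{\theta'}-\W_\theta)x+(z_{\theta'}-z_\theta)\big)$. The first term is controlled by the previous step together with $|b_\theta(x,y,w)|\le C(1+|x|+|y|)$; the second by $\|\A_{\alpha',\theta'}\|\le C$, by \eqref{eq:lipschitz2} for $\|\W_{\theta'}-\W_\theta\|\le C|\theta-\theta'|^{\lambda_\mu}$, and by the bound $|z_\theta-z_{\theta'}|\le C|\theta-\theta'|^{\lambda_\mu}$, which I obtain by splitting $z_\theta=\un\otimes\la z_\theta\ra+\cJo z_\theta$ and combining Assumption~\ref{hyp:TCL:ContinuiteMuTheta}-\ref{hyp:TCL:ContinuiteMuTheta:item2}) with \eqref{eq:lipschitz3}. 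Adding the two contributions yields the third inequality. No genuine difficulty arises: the argument is a chain of ``bounded operator times small increment'' estimates, with the resolvent perturbation identity as the one technical step. The two points needing care are keeping track of the compact sets on which the H\"older constants of $\W_{\cdot}$ and $z_{\cdot}$ are valid (one must enlarge $\cK$ to contain $\cJ\cK$), and recalling that $z_{\cdot}$ is $\lambda_\mu$-H\"older only after the consensus/disagreement decomposition, since Assumption~\ref{hyp:TCL:ContinuiteMuTheta}-\ref{hyp:TCL:ContinuiteMuTheta:item2}) bounds only the consensus component $\la z_{\cdot}\ra$.
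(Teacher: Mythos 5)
Your proof is correct, and it is exactly the ``standard computation'' the paper omits: uniform bounds on $\W_\theta$, $z_\theta$ and the resolvent $(I_{dN}-\alpha\cJo\W_\theta)^{-1}$ over $\cK\times[0,a]$, the resolvent perturbation identity for the H\"older continuity of $(\alpha,\theta)\mapsto\A_{\alpha,\theta}$, and the consensus/disagreement split of $z_\theta$ to recover its H\"older continuity from Assumption~\ref{hyp:TCL:ContinuiteMuTheta}-\ref{hyp:TCL:ContinuiteMuTheta:item2}) together with \eqref{eq:lipschitz3}. The only point worth flagging is that your argument necessarily invokes \eqref{eq:lipschitz2} and \eqref{eq:lipschitz3} of Assumption~\ref{hyp:lipschitz}, which does not appear in the lemma's stated hypothesis list but is assumed in Theorem~\ref{theo:TCL} and is clearly intended (the lemma's closing reference to ``$\lambda_\mu$ given by Assumption~\ref{hyp:mutheta}'' is evidently a typo for Assumption~\ref{hyp:lipschitz}), so this is an omission in the lemma's statement rather than a gap in your proof.
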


\subsubsection{First term: $\Xi(\alpha_{n+1}, \theta_n,\phi_n)  - \Xi(1, \theta_n,\phi_n)$}
It is sufficient to prove that this term converges almost-surely to zero along
the event $\cA_m$, for any $m \geq 1$; which is implied by the almost-sure
convergence to zero along the event $\theta \in \cK \eqdef \{\theta: |\theta -
\theta_\star| \leq \delta \}$. Below, $C_m$ is a constant whose value may change
upon each appearance. By using the inequality $|a^2 -b^2| \leq |a-b| (|a| +
|b|)$, Assumption~\ref{hyp:TCL:momentY} and Lemma~\ref{lem:TCL:C2}, there
exists a constant $C_m$ such that for any $\alpha$ close enough to~1 and
$\theta \in \cK$, $\left| \Xi(\alpha, \theta,x) - \Xi(1,\theta,x) \right| \leq C_m
  \left(1+|x|^2 \right) \, \left|\alpha- 1 \right|$.
By Lemma~\ref{lem:TCL:agreementTau}, for any $\varepsilon >0$, there exists
$C_m$ such that
$\PP{\sup_{n \geq \ell} (1+|\phi_n|)^2 |\alpha_{n+1} -1 |\un_{\theta_n
    \in \cK} \geq \varepsilon}$ is no larger than
$C_m \sum_{n \geq \ell} \left|\alpha_{n+1}-1 \right|^{(1+\tau/2)}$.
The latter term converges to zero as $\ell \to \infty$ by Assumption~\ref{hyp:TCL:pas}.
This implies that almost-surely, $\lim_n \left| \Xi(\alpha_{n+1},
  \theta_n,\phi_n) - \Xi(1, \theta_n,\phi_n) \right| \un_{\theta_n \in \cK}
=0$.

\subsubsection{Second term: $\int \Xi(1, \theta_n, x)  d\pi_n(x)  - \int \Xi(1, \theta_\star \un,x) d\pi_\star(x)$}
We apply the following lemma (see \cite[Proposition
4.3.]{fort:moulines:priouret:2012}).
\begin{lemma}
  \label{lem:fmp12}
  Let $\mu, \{\mu_n, n \geq 0 \}$ be probability distributions on $\bR^{d N}$
  endowed with its Borel $\sigma$-field. Let $\{h_n, n\geq 1 \}$ be an
  equicontinuous family of functions from $\bR^{dN}$ to $\bR$. Assume
  \begin{enumerate}
  \item the sequence $\{\mu_n, n\geq 0\}$ weakly converges to $\mu$.
  \item for any $x \in \bR^{dN}$, $\lim_n h_n(x)$ exists, and there exists $a
    >1$ such that $\sup_n \int |h_n|^a d \mu_n + \int |\lim_n h_n| d \mu < \infty$.
  \end{enumerate}
  Then $\lim_n \int h_n d \mu_n = \int \lim_n h_n \ d \mu$.
\end{lemma}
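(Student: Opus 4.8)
The statement is Lemma~\ref{lem:fmp12}, a generalized dominated convergence result that reconciles the weak convergence $\mu_n\Rightarrow\mu$ with the merely pointwise (and a priori undominated) convergence of the integrands $h_n$. The plan is to split $\int h_n\,d\mu_n$ into a ``core'' part over a large ball, controlled by weak convergence together with uniform convergence of the $h_n$ on compacta, and a ``tail'' part, controlled by tightness of $\{\mu_n\}$ together with the $L^a$-bound with $a>1$.

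First I would record two preliminary facts. (i) Since $\{h_n\}$ is equicontinuous and $h_n\to h\eqdef\lim_n h_n$ pointwise, $h$ is continuous and $h_n\to h$ \emph{uniformly on every compact} $K$: cover $K$ by finitely many equicontinuity balls $B(x_i,\delta_{x_i})$ and combine equicontinuity with the pointwise convergence at the centers $x_i$. (ii) By Prokhorov's theorem on the Polish space $\bR^{dN}$, the weakly convergent family $\{\mu_n\}\cup\{\mu\}$ is tight, so for every $\varepsilon>0$ there is $R>0$ with $\sup_n\mu_n(\{|x|>R\})<\varepsilon$. Moreover, Hölder's inequality with conjugate exponents $(a,a')$ gives $\int|h_n|\,d\mu_n\le M^{1/a}$ with $M\eqdef\sup_n\int|h_n|^a\,d\mu_n<\infty$, so all integrals in sight are finite.

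Next, fix a Lipschitz cutoff $\psi_R:\bR^{dN}\to[0,1]$ with $\psi_R\equiv1$ on $\{|x|\le R\}$ and $\psi_R\equiv0$ on $\{|x|\ge R+1\}$, and write $\int h_n\,d\mu_n=\int h_n\psi_R\,d\mu_n+\int h_n(1-\psi_R)\,d\mu_n$. For the tail, Hölder's inequality and $(1-\psi_R)^{a'}\le\indic_{\{|x|>R\}}$ give
\[
\Bigl|\int h_n(1-\psi_R)\,d\mu_n\Bigr|\le M^{1/a}\,\mu_n(\{|x|>R\})^{1/a'}\le M^{1/a}\varepsilon^{1/a'},
\]
uniformly in $n$ by (ii); similarly $\bigl|\int h(1-\psi_R)\,d\mu\bigr|\le\int_{\{|x|>R\}}|h|\,d\mu\to0$ as $R\to\infty$, since $\int|h|\,d\mu<\infty$. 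For the core term, $h_n\psi_R$ and $h\psi_R$ are bounded continuous, and $h_n\psi_R\to h\psi_R$ uniformly on $\bR^{dN}$ (uniformly on the compact $\{|x|\le R+1\}$ by (i), identically zero outside); hence, as $n\to\infty$,
\[
\Bigl|\int h_n\psi_R\,d\mu_n-\int h\psi_R\,d\mu\Bigr|\le\|h_n\psi_R-h\psi_R\|_\infty+\Bigl|\int h\psi_R\,d\mu_n-\int h\psi_R\,d\mu\Bigr|\longrightarrow0,
\]
the first term by uniform convergence and the second by weak convergence $\mu_n\Rightarrow\mu$ applied to the test function $h\psi_R\in C_b(\bR^{dN})$.

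It remains to assemble: given $\varepsilon>0$, choose $R$ so that both tail contributions above are at most a fixed multiple of $\varepsilon$, then take $\limsup_n$ in the decomposition; since the core term vanishes in the limit, $\limsup_n|\int h_n\,d\mu_n-\int h\,d\mu|\le C\varepsilon$ with $C$ depending only on $(M,a)$, and letting $\varepsilon\downarrow0$ concludes. The one genuinely non-routine point is the tail bound: the strict inequality $a>1$ is exactly what converts the $L^a(\mu_n)$ bound plus tightness into \emph{uniform-in-$n$} smallness of $\int_{\{|x|>R\}}|h_n|\,d\mu_n$ — a uniform integrability property that would fail under only $\sup_n\int|h_n|\,d\mu_n<\infty$. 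Everything else is standard weak-convergence bookkeeping.
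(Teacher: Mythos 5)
Your proof is correct. Note, however, that the paper does not prove this lemma at all: it is imported verbatim as \cite[Proposition 4.3]{fort:moulines:priouret:2012}, so there is no in-paper argument to compare against. What you supply is the standard self-contained route — truncation by a Lipschitz cutoff $\psi_R$, a tail estimate combining tightness of the weakly convergent family (Prokhorov) with the $L^a$-bound via H\"older (this is exactly where $a>1$ is needed, as you correctly flag: it upgrades the moment bound to uniform integrability of $|h_n|$ against $\mu_n$ over the tails), and a core estimate combining uniform-on-compacts convergence $h_n\to h$ (which follows from equicontinuity plus pointwise convergence) with weak convergence tested against the bounded continuous function $h\psi_R$. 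All the individual steps check out: $(1-\psi_R)^{a'}\le\indic_{\{|x|>R\}}$ makes the tail bound uniform in $n$; $\int_{\{|x|>R\}}|h|\,d\mu\to 0$ uses the assumed $\mu$-integrability of the limit; and the splitting $|\int h_n\psi_R\,d\mu_n-\int h\psi_R\,d\mu|\le\|(h_n-h)\psi_R\|_\infty+|\int h\psi_R\,d(\mu_n-\mu)|$ is valid because $\mu_n$ are probability measures and $h\psi_R$ is continuous with compact support. The only cosmetic remark is that you do not need the full strength of Prokhorov on a Polish space: on $\bR^{dN}$, tightness of a weakly convergent sequence follows from the portmanteau inequality for the open ball plus individual tightness of the finitely many initial terms. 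Your argument gives the lemma in the stated generality, so nothing is missing.
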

\paragraph{Almost-sure weak convergence}
In our case $\mu_n \leftarrow \pi_n$ and $\mu \leftarrow \pi_\star$ and $\mu_n$
is a random probability.  Since the set of bounded Lipschitz functions is
convergence determining (see \emph{e.g.} \cite[Theorem 11.3.3.]{dudley:2002}), we
prove that for any bounded and Lipschitz function $h$, $\lim_n \int h d \pi_n =
\int h d\pi_\star$ almost-surely, with an almost-sure set which has to be
uniform for the set of bounded Lipschitz functions. Following the same lines as
in the proof of \cite[Proposition 5.2.]{fort:moulines:priouret:2012}, this
convergence occurs almost-surely if and only if for any bounded Lipschitz function $h$,
there exists a full set such that on this set, $\lim_n \int h d\pi_n = \int h
d\pi_\star$.

Let $h$ be a bounded Lipschitz function. Then $h \in \cL_0(\bR^{dN})$.  By
Proposition~\ref{prop:Regularity}-\ref{prop:Regularity:item1}), there exists a
constant $C_f$ such that for any $n$ large enough, on the set $\{\theta_n \in
\cK \}$
$
\left| \int h d \pi_n - \int h d\pi_\star \right| \leq C_f \ 
\left(\left|\alpha_{n+1} -1 \right| + \left| \theta_{n+1} - \theta_\star
    \un\right|^{\lambda_\mu} \right)$.
Since $\lim_n \theta_n = \theta_\star \un$ almost-surely and $\lim_n \alpha_n
=1$, we have $\lim_n \int h d\pi_n = \int h d\pi_\star$ almost-surely. This
concludes the proof of the a.s. weak convergence.

\paragraph{Equicontinuity of the family of functions}
\label{sec:Xi:Equicontinuity}
We prove that the family of functions $\{ x \mapsto \Xi(1, \theta,x); \theta
\in \cK \}$ is equicontinuous. Using again the inequality $|a^2 - b^2| \leq
|a-b| (|a| +|b|)$, Lemma~\ref{lem:TCL:C2} and Assumption~\ref{hyp:TCL:momentY},
we know there exists a constant $C_m$ such that for any $\theta \in \cK$, $x,z
\in \Rset^{dN}$, $\left|\Xi(1,\theta,x) - \Xi(1,\theta,z)\right| \leq C_m \, (1+|x| +|z|) |x-z|$.

\paragraph{Almost-sure limit of $ \Xi(1, \theta_n,x)$ when $n \to \infty$} \label{sec:Xi:RegularityInTheta} Let $x$ be fixed.
We write
\small

\vspace{-0.2cm}

\begin{align*}
  & \left| \Xi(1,\theta,x) - \Xi(1,\theta', x) \right| \\
  & \leq \int \left| \xi^2_{1,\theta,x}(y,w) - \xi^2_{1,\theta',
      x}(y,w) \right| d \mu_{\theta'}(y,w) \\
  &+ \left|\int \xi^2_{1,\theta,x}(y,w) d\mu_{\theta}(y,w) - \int
  \xi^2_{1,\theta,x}(y,w) d\mu_{\theta'}(y,w) \right| \ .
\end{align*}

\normalsize

Let us consider the first term. Using again $|a² - b^2| \leq |a-b|(|a| + |b|)$
and Lemma~\ref{lem:TCL:C2}, there exists a constant $C_m$ such that the first
term is upper bounded by $C_m \, (1+|x|^2) |\theta - \theta_\star
\un|^{\lambda_\mu}$ for any $\theta \in \cK$. For the second term, we use
Assumption~\ref{hyp:TCL:ContinuiteMuTheta}-\ref{hyp:TCL:ContinuiteMuTheta:item2bis})
and obtain the same upper bound. Then, there exists a constant $C_m$ such that
for any $\theta, \theta' \in \cK$
\begin{equation}
  \label{eq:TCL:proof:Xi:RegulTheta}
  \left| \Xi(1,\theta,x) - \Xi(1,\theta', x) \right| \leq C_m \ 
(1+|x|^2) \left| \theta - \theta' \right|^{\lambda_\mu} \ .
\end{equation}
Since $\lim_n \theta_n = \theta_\star \un$ almost-surely, the above discussion
implies that for any fixed $x$, $\lim_n \Xi(1,\theta_n,x) = \Xi(1,\theta_\star
\un, x)$ almost-surely on $\cA_m$.

\paragraph{Moment conditions} \label{sec:Xi:DependanceX}
It is easily seen (using again Lemma~\ref{lem:TCL:C2}) that there exists a
constant $C_m$ such that for any $\theta \in \cK$, $|\Xi(1,\theta,x)| \leq C_m
(1+|x|^2)$. Therefore, Lemma~\ref{lem:TCL:agreementTau} implies that $\int
|\Xi(1,\theta_\star \un,x)| d\pi_{\star}(x) < \infty$. In addition, for any
$\theta \in \cK$, $\alpha$ in a neighborhood of $1$ and $a>1$,
\[
\int |\Xi(1,\theta,x)|^a \pi_{\alpha,\theta}(dx) \leq C_m \left(1 + \int |x|^{2a}
  \pi_{\alpha,\theta}(dx)\right) \ .
\]
Lemma~\ref{lem:TCL:agreementTau} implies that there exists $a>1$ such that
\[
\sup_n \un_{\theta_n \in \cK} \int |\Xi(1,\theta_n,x)|^a
\pi_{\alpha_{n+1},\theta_n}(dx) < \infty \ .
\]
\paragraph{Conclusion}
We can apply Lemma~\ref{lem:fmp12}; we have a.s., $\lim_n \left| \int \Xi(1, \theta_n, x) d\pi_n(x) - \int \Xi(1, \theta_\star
  \un,x) d\pi_\star(x) \right| \un_{\cA_m} = 0$.
  
  \vspace{0.1cm}
  
\subsubsection{Third term: $\Xi(1,\theta_n, \phi_n) - \int \Xi(1,\theta_n, x) d \pi_n(x)$} We prove that for any $m\geq 1$
\small

\vspace{-0.2cm}

\[
\lim_n \gamma_n \  \EE{ \left| \sum_{k=1}^n \left\{ \Xi(1,\theta_k, \phi_k) - \int
        \Xi(1,\theta_k, x) d \pi_k(x)\right\}\right| \un_{\cA_m} } = 0 \ 
.
\]
\normalsize

Set $\sum_{k=1}^n \left\{ \Xi(1,\theta_k, \phi_k) - \int \Xi(1,\theta_k,
  x) d \pi_k(x)\right\} = \sum_{i=1}^3 \mathcal{T}_n^{(i)}$  
 
  \vspace{-0.2cm}

  \small
  
\begin{align*}
  \text{with }\,\mathcal{T}_n^{(1)} & =  \sum_{k=1}^n \left\{ \Xi(1,\theta_k, \phi_k) -  \Xi(1, \theta_{k-1}, \phi_k)  \right\} \\
  \mathcal{T}_n^{(2)} & = \sum_{k=1}^n \left\{ \Xi(1, \theta_{k-1}, \phi_k) - \int
    \Xi(1,\theta_{k-1},
    x) d \pi_{k-1}(x)\right\}  \\
  \mathcal{T}_n^{(3)} & = \int \Xi(1,\theta_{0}, x) d \pi_{0}(x) - \int
  \Xi(1,\theta_{n}, x) d \pi_{n}(x) \ .
\end{align*}
\normalsize

\paragraph{Term $\mathcal{T}_n^{(1)}$}
By (\ref{eq:TCL:proof:Xi:RegulTheta}),  there exists a constant $C_m$ such that
for any $k \geq m+1$, on the set $\cA_m$,
$
|\Xi(1,\theta_k,\phi_k) - \Xi(1,\theta_{k-1},\phi_k) | \leq C_m |\theta_k -
\theta_{k-1}|^{\lambda_\mu} (1+ |\phi_k|^2) $.
Hence, by Lemma~\ref{lem:deltaTheta}, on the set $\cA_m$,
$  |\Xi(1,\theta_k,\phi_k) - \Xi(1,\theta_{k-1},\phi_k) | 
  \leq C_m \gamma_k^{\lambda_\mu} (1+ |\phi_k|^2) (|Y_k|^{\lambda_\mu} +
  |\phi_{k-1}|^{\lambda_\mu})$.
By Assumption~\ref{hyp:TCL:momentY}, Lemma~\ref{lem:TCL:agreementTau} and
Assumption~\ref{hyp:TCL:pas}, the sum
$
\sum_{k\geq 1} \gamma_k^{1+\lambda_\mu} \EE{ (1+ |\phi_k|^2)
  (|Y_k|^{\lambda_\mu} + |\phi_{k-1}|^{\lambda_\mu}) \un_{\cA_m} }
$
is finite which implies $\lim_n \gamma_n \EE{|\mathcal{T}_n^{(1)}|\un_{\cA_m}}
=0$ by the Kronecker Lemma.

\paragraph{Term $\mathcal{T}_n^{(2)}$}
From the expression of $\xi$ (see (\ref{eq:Expression:xi})), we have
$
\Xi(1,\theta,\phi) - \Xi(1,\theta,x) = \phi\T \C_{\theta} \phi - x\T \C_\theta
x + (\phi-x)\T \D_\theta
$
with $\C_\theta \eqdef  \int (w - \W_\theta) \A_{1,\theta}\T \A_{1,\theta} (w - \W_\theta) \,d \mu_\theta(y, w) $ and 
$  \D_\theta \eqdef 2 \int (w - \W_\theta) \A_{1,\theta}\T \A_{1,\theta} (wy -
  z_\theta)  \,d \mu_\theta(y, w)$.
We detail the proof of the statement

\vspace{-0.2cm}

\small

\[
\lim_n \gamma_n \EE{ \left| \sum_{k=1}^n \left( \phi_{k}- \int x \,
      d\pi_{\alpha_k,\theta_{k-1}}(x) \right)\T \D_{\theta_{k-1}} \right| \un_{\cA_m}} =0
\]

\normalsize

The second statement, with the quadratic dependence on $\phi_k$ is similar and
omitted (its proof will use
Proposition~\ref{prop:Regularity}-\ref{prop:Regularity:item3}) and the
condition $\lim_n \gamma_n n^{1/(1+\tau/2)} = 0$). Using again the Poisson
solution $f_n \eqdef f_{\alpha_{n+1},\theta_n}$ associated to the identity function and
the kernel $P_n \eqdef P_{\alpha_{n+1},\theta_n}$, it holds by
(\ref{eq:Proof:TCL:poisson})

\vspace{-0.2cm}

\small

\begin{align}
  & \left( \phi_{k}- \int x \, d\pi_{k-1}(x) \right)\T \D_{\theta_{k-1}}   \nonumber \\
  & \quad = \left( f_{k-1}(\phi_k) - P_{k-1} f_{k-1}(\phi_{k-1})
  \right)\T \, \D_{\theta_{k-1}} \label{proof:TCL:Tn2:bout1} \\
  & \quad + P_{k-1} f_{k-1}\T(\phi_{k-1}) \D_{\theta_{k-1}} - P_{k} f_{k}\T(\phi_{k}) \D_{\theta_{k}} \label{proof:TCL:Tn2:bout2}  \\
  & \quad + \left( P_{k} f_{k}\T(\phi_{k})  - P_{k-1} f_{k-1}\T(\phi_k) \right) \, 
  \D_{\theta_{k}} \label{proof:TCL:Tn2:bout3}  \\
  & \quad + P_{k-1} f_{k-1}\T(\phi_k) \left( \D_{\theta_{k}} - \D_{\theta_{k-1}}\right) \label{proof:TCL:Tn2:bout4}  \ .
\end{align}

\normalsize

From
Assumption~\ref{hyp:TCL:ContinuiteMuTheta}-\ref{hyp:TCL:ContinuiteMuTheta:item2bis})
and Lemma~\ref{lem:TCL:C2}, there exists a constant $C_m$ such that for any
$k$, 

\vspace{-0.2cm}

\small

\begin{align}
 & |\D_{\theta_k}| \un_{\cA_m} \leq C_m \label{proof:TCL:Tn2:bount5}  \\
& |\D_{\theta_k} - \D_{\theta_{k-1}}| \un_{\cA_m} \leq C_m \, |\theta_k - \theta_{k-1}|^{\lambda_\mu} \label{proof:TCL:Tn2:bount6}  \ .
\end{align}

\normalsize

Let us control the first term (\ref{proof:TCL:Tn2:bout1}). Upon noting that it
is a martingale-increment, the Burkholder inequality (see \emph{e.g.} \cite[Theorem
2.10]{hall:heyde:1980}) applied with $p\leftarrow 2+\tau$ and
Lemma~\ref{lem:TCL:agreementTau} imply

\vspace{-0.3cm}

\small

\[
\E{\left| \sum_{k=1}^n \left( f_{k-1}(\phi_k) - P_{k-1} f_{k-1}(\phi_{k-1})
    \right)\T \, \D_{\theta_{k-1}} \right| \un_{\cA_m}} = O\left( \sqrt{n}
\right)\ .
\]

\normalsize

This term is $o(1/\gamma_n)$ by Assumption~\ref{hyp:TCL:pas}. Let us consider
(\ref{proof:TCL:Tn2:bout2}).

\vspace{-0.3cm}

\small

\begin{align*}
 & \E{\left| \sum_{k=1}^n \left( P_{k-1} f_{k-1}\T(\phi_{k-1}) \D_{\theta_{k-1}}
        - P_{k} f_{k}\T(\phi_{k}) \D_{\theta_{k}} \right) \right| \un_{\cA_m}} \\
  & = \E{\left| P_{0} f_{0}\T(\phi_{0}) \D_{\theta_{0}} - P_{n}
      f_{n}\T(\phi_{n}) \D_{\theta_{n}} \right| \un_{\cA_m}}
\end{align*}

\normalsize

and this term is $O(1)$ by
Proposition~\ref{prop:invariantPI}-\ref{prop:invariantPI:item3}),
\eqref{proof:TCL:Tn2:bount5} and Lemma~\ref{lem:TCL:agreementTau}. Let us see
the third term~(\ref{proof:TCL:Tn2:bout3}). By
Proposition~\ref{prop:Regularity}-\ref{prop:Regularity:item2}) and (\ref{proof:TCL:Tn2:bount5}), we have

\vspace{-0.3cm}

\small

\begin{align*}
  & \E{\left| \sum_{k=1}^n \left( P_{k} f_{k}\T(\phi_{k}) - P_{k-1}
        f_{k-1}\T(\phi_k) \right) \, \D_{\theta_{k}} \right| \un_{\cA_m}} \\
  & \leq C_m \, \sum_{k=1}^n \E{\left(\left| \theta_k - \theta_{k-1}
    \right|^{\lambda_\mu}  + |\alpha_{k+1} - \alpha_k| \right) \un_{\cA_m}} 
\end{align*}

\normalsize

By Lemmas~\ref{lem:deltaTheta} and ~\ref{lem:TCL:agreementTau} and
Assumptions~\ref{hyp:TCL:momentY} and \ref{hyp:TCL:pas}, this term is
$o(1/\gamma_n)$. Finally, the same conclusion holds for
(\ref{proof:TCL:Tn2:bout4}) by using
Proposition~\ref{prop:invariantPI}-\ref{prop:invariantPI:item3}),
Lemma~\ref{lem:TCL:agreementTau} and (\ref{proof:TCL:Tn2:bount6}).  This
concludes the proof of $\lim_n \gamma_n \E{\left[| \mathcal{T}_n^{(2)}|
    \un_{\cA_m} \right]} =0$.

\paragraph{Term $\mathcal{T}_n^{(3)}$}
By Lemma~\ref{lem:TCL:C2}, there exists $C_m$ such that for any $\theta \in
\cK$, $|\Xi(1,\theta,x) | \leq C_m (1+|x|^2)$. By
Lemma~\ref{lem:TCL:agreementTau}, for any $a$ in a neighborhood of $1$ we have
$\sup_{\alpha \in [0,a], \theta \in \cK} \int |x|^2 \pi_{\alpha,\theta}(dx) <
\infty$. Since $\lim_n \alpha_n =1$, we have
$
\sup_{n \geq m} \left| \int \Xi(1,\theta_{n}, x) d \pi_{n}(x) \right|
\un_{\theta_n \in \cK} < C
$
for some constant $C$, which implies that $\lim_n \gamma_n
\EE{\left|\mathcal{T}_n^{(3)} \right| \un_{\cA_m}} =0$.

\vspace{-0.2cm}

\bibliographystyle{IEEEtran}
\bibliography{biblio}

\end{document}